\documentclass[a4paper,10pt]{article}
\usepackage[utf8]{inputenc}
\usepackage{amsmath, amssymb, amsfonts, amsthm, mathtools}
\usepackage{chemfig}
\usepackage[shortlabels]{enumitem}
\usepackage{bbm} 
\usepackage[scale=0.85]{geometry} 
\usepackage{mathrsfs} 
\usepackage{thmtools,thm-restate} 

\usepackage[colorlinks]{hyperref}
\usepackage[numbers,sort, compress]{natbib}

\usepackage{pgf,tikz}
\usetikzlibrary{shapes,arrows}
\newsavebox\ReactionBox
\sbox\ReactionBox{\schemestart
   $A + B$\arrow{->[$\kappa_1$]}[0,.8]$2B$
   \arrow(@c1.south east--.north east){0}[-90,.25]
   $A^\star + B$\arrow{->[$\kappa_2$]}[0,.8]$A^\star + A$
   \schemestop
}
\newsavebox\ReactionBoxinsulator
\sbox\ReactionBoxinsulator{\schemestart
   $A + B$\arrow{->[$0.1$]}[0,.8]$2B$
   \arrow(@c1.south east--.north east){0}[-90,.25]
   $A^\star + B$\arrow{->[$0.1$]}[0,.8]$A^\star + A$
   \schemestop
}
\newsavebox\ReactionBoxA
\sbox\ReactionBoxA{\schemestart
   $0$\arrow{<=>[$u(t)$][$0.01$]}$A^\star$
   \schemestop
}
\newsavebox\ReactionBoxBU
\sbox\ReactionBoxBU{\schemestart
   $\overline{A}^\star + P$\arrow{<=>[$10$][$10$]}$C$
   \schemestop
}
\newsavebox\ReactionBoxB
\sbox\ReactionBoxB{\schemestart
   $A^\star + P$\arrow{<=>[$10$][$10$]}$C$
   \schemestop
}
\newsavebox\ReactionBoxBI
\sbox\ReactionBoxBI{\schemestart
   $A + P$\arrow{<=>[$10$][$10$]}$C$
   \schemestop
}

\usepackage{xcolor}
\usepackage{tcolorbox}
\usepackage{mdframed}
\usepackage{lipsum}

\usepackage[framed]{matlab-prettifier}

\usepackage{rotating}

\renewenvironment{figure}[1][]{
  \begin{originalfigure}[#1]
    \begin{mdframed}[linecolor=black!30,backgroundcolor=black!5]
}{
    \end{mdframed}
  \end{originalfigure}
}

\renewenvironment{sidewaysfigure}[1][]{
  \begin{originalsidewaysfigure}[#1]
    \begin{mdframed}[linecolor=black!30,backgroundcolor=black!5]
}{
    \end{mdframed}
  \end{originalsidewaysfigure}
}

\makeatletter
\newcommand{\DrawLine}[1]{%
  \begin{tikzpicture}
  \path[use as bounding box] (0,0) -- (\linewidth,0.5);
  \draw[color=#1,dashed,dash phase=2pt]
        (0-\kvtcb@leftlower-\kvtcb@boxsep,0.25)--
        (\linewidth+\kvtcb@rightlower+\kvtcb@boxsep,0.25);
  \end{tikzpicture}%
  }
\makeatother

\theoremstyle{plain} 
\newtheorem{theorem}{Theorem}[section]
\newtheorem{corollary}[theorem]{Corollary}
\newtheorem{lemma}[theorem]{Lemma}
\newtheorem{proposition}[theorem]{Proposition}

\theoremstyle{definition}
\newtheorem{definition}{Definition}[section]

\theoremstyle{remark}
\newtheorem{remark}{Remark}[section]
\newtheorem{example}{Example}[section]

\newcommand{\ZZ}{\mathbb{Z}}
\newcommand{\RR}{\mathbb{R}}
\newcommand{\G}{\mathcal{G}}
\newcommand{\Sp}{\mathcal{X}}
\newcommand{\Cx}{\mathcal{C}}
\newcommand{\Rc}{\mathcal{R}}
\newcommand{\St}{\mathcal{S}}
\newcommand{\Cons}{\St^\perp}
\newcommand{\scal}[2]{\langle #1, #2\rangle}

\newcommand{\Sy}{\mathscr{S}}
\newcommand{\ADP}{\text{ADP}}
\newcommand{\ATP}{\text{ATP}}
\newcommand{\I}{\text{I}}

\newcommand{\D}{{\color{blue}\bfseries D}}
\newcommand{\T}{{\color{brown}\bfseries T}}
\newcommand{\Active}{{\color{brown} A^\star}}
\newcommand{\Ph}{{\color{red}\bfseries P}}

\newcommand{\ml}[1]{\lstinline[style=Matlab-bw]!#1!}

\DeclareMathOperator{\spann}{span}
\DeclareMathOperator{\Image}{Im}
\DeclareMathOperator{\supp}{supp}


\title{A hidden integral structure endows Absolute Concentration Robust systems with resilience to dynamical concentration disturbances}
\author{Daniele Cappelletti\footnotemark[1], Ankit Gupta\footnotemark[1] and Mustafa Khammash\footnotemark[1]}
\date{}

\begin{document}

\footnotetext[1]{Department of Biosystems Science and Engineering, ETH Zurich, Mattenstrasse 26 4058 Basel, Switzerland.}

\tikzstyle{block} = [draw=olive!50!white, rectangle, very thick, fill=olive!5!white, text width=15em, text centered, minimum height=3em, rounded corners]
\tikzstyle{input} = [coordinate]
\tikzstyle{output} = [coordinate]

\maketitle

\begin{abstract}
 Biochemical systems that express certain chemical species of interest at the same level at any positive equilibrium are called ``absolute concentration robust'' (ACR). These species behave in a stable, predictable way, in the sense that their expression is robust with respect to sudden changes in the species concentration, regardless the new positive equilibrium reached by the system. Such a property has been proven to be fundamentally important in certain gene regulatory networks and signaling systems. In the present paper, we mathematically prove that a well-known class of ACR systems studied by Shinar and Feinberg in 2010 hides an internal integral structure. This structure confers these systems with a higher degree of robustness that what was previously unknown. In particular, disturbances much more general than sudden changes in the species concentrations can be rejected, and robust perfect adaptation is achieved. Significantly, we show that these properties are maintained when the system is interconnected with other chemical reaction networks. This key feature enables design of insulator devices that are able to buffer the loading effect from downstream systems - a crucial requirement for modular circuit design in synthetic biology.
\end{abstract}

\section{Introduction}

 The network of chemical interactions of a biochemical system of interest can be complex and involve unknown reaction propensities. One of the main goal of reaction network theory consists in deriving dynamical properties from simpler graphical characteristic of the model, and independently on the specific value of kinetic parameters \cite{F:book, toth:reaction}. The results presented in this paper follow this approach.
 
 A qualitative property of great interest is the capability of a certain chemical species to be expressed with the same concentration at any positive steady state, independently on the initial conditions and on how many steady states are present. Namely, assume that the dynamics of the biochemical system are expressed by the $d-$dimensional ordinary differential equation
 $$\frac{d}{dt}x(t)=f(x(t)).$$
 We say that the $i-th$ species is \emph{absolute concentration robust} (ACR), if there exists an \emph{ACR value} $q$ independent of the initial condition $x(0)$ such that, whenever $x(t)$ tends to a positive vector $\overline{x}$, we have $\overline{x}_i=q$. In the typical cases of interest, the positive steady state $\overline{x}$ that is reached will depend on the initial condition $x(0)$, while the entry $\overline{x}_i=q$ does not. As noted in \cite{AEJ:ACR}, the property of absolute concentration robustness alone does not imply stability of the positive steady states: it only ensures that if a positive steady state $\overline{x}$ exists, then the value of the ACR species at $\overline{x}$ is the ACR value.
 
 Under the assumption of stability, absolute concentration robustness provides a reliable, predictive response to environmental changes, since the species of interest reaches the equilibrium level relative to the new environmental setting, regardless the previous conditions. The existence and importance of this robustness property for various gene regulatory networks and signal transduction cascades is explored in many papers, including \cite{BF:robust, BL:robustness, BG:robustness, MG:high, SRA:robustness, ASBL:robustness, PS:Envz_book, SRG:two, SAF:sensitivity}.
 
 In the Control Theory community, and under the assumption of stability, the absolute concentration robustness is known as ``robustness to disturbances in the initial conditions'' \cite{XD:robust,YHSD:robust}. To achieve robustness with respect to some disturbance, the imbalance caused by the disturbance needs to be measured first. To this aim, a quantity of interest is the \emph{integrator}, which is a function $\phi$ of the system variables whose derivative is exactly the imbalance to be eliminated. At steady state, the derivative of $\phi$ is zero and so needs to be the imbalance. In the setting of absolute concentration robustness, one would like to find an integrator whose derivative is the difference between the ACR species and its ACR value. Unfortunately, in general this cannot be done, as shown in Section~\ref{sec:CT} and as discussed in \cite{XD:robust}. 
 
 In the present paper, we systematically study for the first time the connection between ACR systems and integrators. Specifically, our first contribution is related to the existence of a linear combination of chemical species whose derivative is the difference between the ACR species and its ACR value, multiplied by a monomial. Such linear combination of species is called \emph{constrained integrator} (CI), because it behaves similarly to an integrator given that the monomial does not vanish \cite{XD:robust}. We rigorously prove that such a linear CI always exists for a large class of models that strictly includes the ACR systems introduced in \cite{SF:ACR}. This result has some important consequences: first of all, under the assumption of stability, it implies that the expression of ACR species is not only robust to changes in the initial conditions, but also to disturbances that are applied over time. 
 
 An important application in Synthetic Biology concerns the design of \emph{insulators}. A number of biochemical systems are known to express a specific output if given a certain input. The systems can therefore be considered as modules with different functions. In cells, different modules are combined so that more complex responses to external stimuli become possible \cite{HHLM:modular}. In Synthetic Biology, it is desirable to combine different modules to achieve the same level of complexity \cite{PW:second}. However, when connected, the different modules can affect the dynamics of each other and they can lose the desired dynamical properties they had when considered in isolation \cite{DNS:modular}. In a simplified framework, an \emph{upstream} module processes an input, and its output is fed to a second, \emph{downstream} module to be further processed. Since the information is passed in form of molecules, which are then consumed or temporarily sequestrated by the downstream module, the equations governing the upstream module dynamics are perturbed and its functionality can be affected. Such effect is commonly called \emph{loading effect} \cite{MRLDW:load} or \emph{retroactivity} \cite{DNS:modular, PM:retroactivity}, and needs to be minimized. In other words, the upstream module needs to be \emph{insulated} from the loading effect caused by the downstream module. We propose two ways in which the robustness of the systems studied in this paper can be used to this aim. The first solution is to simply design an upstream module which is robust to loading effects, modeled as a persistent disturbance over time. The second solution is to design an extra component, called \emph{insulator}, which transfers the signal from the upstream module to the downstream module while at the same time shielding the dynamics of the upstream module from retroactivity effects.
 
 We will also show how more theoretical results on reaction network models can be obtained as a consequence of our work. In Reaction Network Theory, the study of steady state invariants consitute an interesting topic of research\cite{SF:ACR, KMDDG:invariants, DDG:invariants, F:book}. In \cite{SF:ACR}, it has been proven that certain graphical properties of the network imply the existence of an ACR species, regardless the choice of kinetic parameters. Such sufficient conditions are generalized in the present work while they are maintained simple to check. Moreover, no way to explicitly determine the ACR value was given in \cite{SF:ACR}, and we fill the gap by proposing a fast linear method to calculate it. Furthermore, a substantial effort in the Reaction Network community is devoted to understand under what conditions dynamical properties of single systems can be lifted to larger systems \cite{FCW:node,JS:atoms, BP:inheritance, FW:multistationarity,GHMS:joining}. Our contribution in this sense consist in proving that, under certain conditions, if an ACR system of the class studied in this paper is part of a larger model, the ACR species is still so in the lager system and its ACR value is maintained. Finally, it is worth mentioning that in the present work we consider the possibility of time-dependent rates for the occurrence of chemical transformations. This is more general than what is usually studied in Reaction Network Theory, with the exception of few works explicitly allowing for this scenario \cite{JH:solving, BC:robust, CNP:persistence, CMW:parasite, GMS:geometric, A:time}.
  
\section{Examples of ACR systems}

\subsection{An illustrative example}
Consider two proteins $A$ and $B$, whose interaction is described by
 \begin{equation}\label{eq:toymodel}
 \begin{split}
  \schemestart
   $A+B$\arrow{->[$\kappa_1$]}$2B$
   \arrow(@c1.south east--.north east){0}[-90,.25]
   $B$\arrow{->[$\kappa_2$]}$A$
  \schemestop
  \end{split}
 \end{equation}
 where the positive constants $\kappa_1$ and $\kappa_2$ describe the propensity of a reaction to occur. If enough proteins are present and they are homogeneously spread in space, then a good model for the time evolution of the concentrations of proteins $A$ and $B$ is given by mass-action kinetics. Specifically, the concentrations of $A$ and $B$ at time $t$, denoted by $x_A(t)$ and $x_B(t)$ respectively, are assumed to solve
 \begin{equation}\label{eq:toymodeleq}
 \frac{d}{dt}\begin{pmatrix}
              x_A(t)\\ x_B(t)
             \end{pmatrix}
             =\kappa_1 x_A(t)x_B(t)\begin{pmatrix} -1\\1\end{pmatrix}
             +\kappa_2 x_B(t)\begin{pmatrix} 1\\-1\end{pmatrix}
 \end{equation}
 It is easy to check that the steady states of \eqref{eq:toymodeleq} are given by states $(\overline x_A, \overline x_B)$ such that either $\overline x_B=0$ or $\overline x_A=\kappa_2/\kappa_1$. Hence, $A$ is an ACR species because its expression at any positive steady state is the same. It is common during biochemical experiments to be able to control the inflow rate of some species (say $B$). Some additional chemical species may also be introduced, with the purpose of degrading some of the present components (in this case, species $C$ is introduced to faster degrade species $B$). After these modifications, (\ref{eq:toymodel}) becomes
 \begin{equation}\label{eq:toymodel_modified}
 \begin{split}
  \schemestart
   $A+B$\arrow{->[$\kappa_1$]}$2B$
   \arrow(@c1.south east--.north east){0}[-90,.25]
   $0$\arrow{->[$u_1(t)$]}$B$\arrow{->[$\kappa_2$]}$A$
   \arrow(@c3.south east--.north east){0}[-90,.25]
   $0$\arrow{<=>[$u_2(t)$][$\kappa_3$]}$C$
   \arrow(@c6.south east--.north east){0}[-90,.25]
   $C+B$\arrow{->[$\kappa_4$]}$0$
  \schemestop
  \end{split}
\end{equation}
 Since we still have
 $$\frac{d}{dt}x_A(t)=-\kappa_1x_B(t)\left(x_A(t)-\frac{\kappa_2}{\kappa_1}\right),$$
 it is still true that the value of $x_A(t)$ will converge to $\kappa_2/\kappa_1$, as long as the functions $u_1, u_2$ and the rates $\kappa_3, \kappa_4$ are such that the species $B$ is not removed from the system too fast. In this paper we will prove a general result describing when such kind of robustness to persistent perturbations is present for ACR systems. 

 \subsection{EnvZ-OmpR osmoregulatory system}
 Consider the system described in Figure~\ref{fig:envz}. The model is proposed in \cite{SF:ACR, SMMA:input} as osmoregulatory system in Escherichia Coli. It is in accordance with experimental observations discussed in \cite{PS:Envz_book,SRG:two,BG:robustness}. According to the model, whose schematics is described in Figure~\ref{fig:envz}, the activation rate of the sensor-transmitter protein EnvZ depends on the medium osmolarity. Then, an active form of EnvZ transfers its phosphoryl group to the sensory response protein OmpR, which becomes OmpR-P and promotes the production of the outer membrane porins OmpF and OmpC. Hence, it is important that the concentration of OmpR-P responds in a reliable, predictive way to changes in the medium osmolarity (which the rate constants $\kappa_i$ in the reaction network of Figure~\ref{fig:envz} depend upon), but not on the initial concentration of the different chemical species involved. As a matter of fact, it follows from the results developed in \cite{SF:ACR} that {OmpR-P} is an ACR species. 
\begin{figure*}[h!]
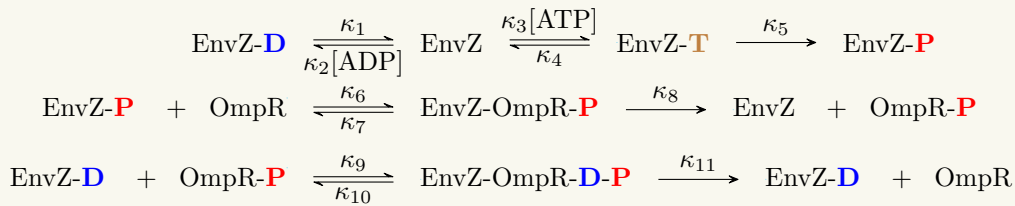

\begin{center}
 \begin{tcolorbox}[width=\textwidth, colback=olive!5!white,colframe=olive!50!white]
$$
\schemestart
EnvZ-\D  \arrow(1--2){<=>[*0{$\kappa_1$}][*0{$\kappa_2[\ADP]$}]} EnvZ \arrow(@2--3){<=>[*0{$\kappa_3[\ATP]$}][*0{$\kappa_4$}]} EnvZ-\T \arrow(@3--4){->[*0{$\kappa_5$}]} EnvZ-\Ph
\arrow(@1.south east--.north east){0}[-90,.35]
EnvZ-\Ph\ \+ OmpR \arrow(5--6){<=>[*0{$\kappa_6$}][*0{$\kappa_7$}]} EnvZ-OmpR-\Ph \arrow(@6--7){->[*0{$\kappa_8$}]} EnvZ \+ OmpR-\Ph
\arrow(@5.south east--.north east){0}[-90,.35]
EnvZ-\D\ \+ OmpR-\Ph \arrow(8--9){<=>[*0{$\kappa_9$}][*0{$\kappa_{10}$}]} EnvZ-OmpR-\D-\Ph \arrow(@9--10){->[*0{$\kappa_{11}$}]} EnvZ-\D\ \+ OmpR
 \schemestop
$$
 \end{tcolorbox}
 \caption{Proposed model for the EnvZ-OmpR signal transduction system in Escherichia Coli, which is able to explain the experimentally observed robustness in the expression of phosphorylated OmpR. In the first line of reactions, EnvZ can bind to ADP and ATP, but only when bound to ATP it can gain a phosphoryl group, and the resulting species is denoted by EnvZ-P. In the second line of reactions, EnvZ-P transfers the phosphoryl group to OmpR, through the formation of an intermediate complex. In the last line of reactions, the phosphoryl group is removed from OmpR-P through the action of EnvZ-D. The concentration of ATP and ADP is assumed to be maintained constant in time.}
 \label{fig:envz}
 \end{center}
\end{figure*}

\section{Necessary terminology and known results}
 
 In order to present the theory we develop, we first need to introduce some terminology. The linear combinations of chemical species appearing on either side of the chemical reactions of interest are called \emph{complexes}, in accordance with the Reaction Network Theory literature. Be aware that the word ``complex'' has usually a different meaning in the Biology literature. We denote by $m$ the number of complexes present in the network, an by $d$ the number of chemical species.
 As an example, the complex of (\ref{eq:toymodel}) are $A+B$, $2B$, $B$, and $A$. Here, $d=2$ and $m=4$. In (\ref{eq:toymodel_modified}) the complexes are $A+B$, $2B$, $0$, $B$, $A$, $C$, and $C+B$, hence $d=3$ and $m=7$. Finally, in the system depicted in Figure~\ref{fig:envz} $d=8$ and $m=10$. Since a complex is a linear combination of species, each complex can be regarded as a vector of length $d$. For example, for the model~(\ref{eq:toymodel}) we can consider $A+B$ as $(1,1)$, $2B$ as $(0,2)$, $B$ as $(0,1)$, and finally $A$ as $(1,0)$. With this in mind, we can define the \emph{stoichiometric subspace} as
 $$\St=\spann_{\RR} \{y_j-y_i\,:\,\text{there is a reaction from }y_i\text{ to } y_j\},$$
 where $y_n$ denotes the $n$th complex, for all $1\leq n\leq m$. For example, for (\ref{eq:toymodel}) we have
 $$\St=\spann_{\RR}\left\{\begin{pmatrix}-1\\1\end{pmatrix}, \begin{pmatrix}1\\-1\end{pmatrix}\right\}=\spann_{\RR}\left\{\begin{pmatrix}-1\\1\end{pmatrix}\right\}.$$
 For (\ref{eq:toymodel_modified}), we have $\St=\RR^3$.
%
 
 In the most general formulation of reaction systems, a (time-dependent) rate function $\lambda_{ij}$ is associated with the reaction from the $i$th to the $j$th complex of the network, and the concentration vector of the different chemical species is assumed to solve the differential equation
 \begin{equation}\label{eq:Gmodel}
  \frac{d}{dt}x(t)=\sum_{1\leq i,j\leq m}(y_j-y_i)\lambda_{ij}(x(t),t),
 \end{equation}
 where if a reaction from the $i$th to the $j$th complex does not exist, then $\lambda_{ij}$ is the zero function. Note that \eqref{eq:Gmodel} simply sums the contributions to the dynamics given by the different chemical reactions. It is also not difficult to show that every solution to \eqref{eq:Gmodel} is necessarily confined within a translation of the stoichiometric subspace. If for all non-zero propensities $\lambda_{ij}$ there exists a positive constant $\kappa_{ij}$ such that
 $$\lambda_{ij}(x(t),t)=\kappa_{ij} \prod_{l=1}^d x_l(t)^{y_{il}},$$
 then the model is a \emph{mass-action system}. In this case, \eqref{eq:Gmodel} can be written as
 $$\frac{d}{dt}x(t)=Y A(\kappa) \Lambda(x(t)),$$
 where $Y$ is a $d\times m$ matrix whose $i$th column is $y_i$, $A(\kappa)$ is a $m\times m$ matrix given by
 $$A(\kappa)_{ij}=\begin{cases}
                 \kappa_{ji}&\text{if }i\neq j\\
                 -\sum_{l=1}^m \kappa_{il} &\text{if }i=j
                \end{cases}$$
 and $\Lambda(x(t))$ is a vector of length $m$ whose $i$th entry is $\prod_{l=1}^d x_l(t)^{y_{il}}$. Examples of mass-action systems are (\ref{eq:toymodel}) and the model in Figure~\ref{fig:envz}.
 
 A directed graph can be associated with a reaction network, where the nodes are given by the complexes and the directed edges are given by the reactions. Such a graph is called \emph{reaction graph}. As an example, (\ref{eq:toymodel}) is a reaction graph, while (\ref{eq:toymodel_modified}) is not because the complex 0 is repeated. The reaction graph corresponding to (\ref{eq:toymodel_modified}) is
 \begin{equation}\label{eq:toymodel_modified_graph}
 \begin{split}
  \schemestart
   $A+B$\arrow(1--2){->[*0{$\kappa_1$}]}$2B$
   \arrow(@1.south east--.north east){0}[-90,.35]
   $C+B$\arrow(7--3){->[*0{$\kappa_4$}]}$0$\arrow(@3--4){->[*0{$u_1(t)$}]}$B$\arrow(@4--5){->[*0{$\kappa_2$}]}$A$
   \arrow(@3--6){<=>[*0{$u_2(t)$}][*0{$\kappa_3$}]}[-90]$C$
  \schemestop
  \end{split}
\end{equation}

 We denote by $\ell$ the number of connected components of the reaction graph associated with the network. For both the networks (\ref{eq:toymodel}) and (\ref{eq:toymodel_modified}) $\ell=2$, as well as for the EnvZ-OmpR osmoregulatory system of Figure~\ref{fig:envz}. Then, we define the \emph{deficiency} of a network as
 $$\delta=m-\ell-\dim \St.$$
 The deficiency of a network has important geometric interpretation, and a collection of classical deficiency theory results can be found in \cite{F:review}. The deficiency of (\ref{eq:toymodel}) is $\delta=4-2-1=1$, and the deficiency of (\ref{eq:toymodel_modified}) is $\delta=7-2-3=2$. Similarly, it can be checked that the deficiency of the EnvZ-OmpR osmoregulatory system in Figure~\ref{fig:envz} is 1.
 
 Finally, we say that a complex $y$ is \emph{terminal} if for all paths in the reaction graph leading from $y$ to another complex $y'$, there is a path leading from $y'$ to $y$. If a complex is not terminal, then it is called \emph{non-terminal}. As an example, the only terminal complexes for (\ref{eq:toymodel}) and (\ref{eq:toymodel_modified}) are $2B$ and $A$.

 We recall that a species is said to be \emph{absolute concentration robust} (ACR) if its concentration at any positive steady state of \eqref{eq:Gmodel} is the same. We are ready to state the following result, as presented in \cite{SF:ACR}.
 \begin{restatable}{theorem}{thmfeinbergweak}\label{thm:feinberg_weak}
Consider a mass-action system, and assume the following holds:
 \begin{enumerate}
  \item there are two non-terminal complexes $y_i$ and $y_j$ such that only one entry of $y_j-y_i$ is non-zero;
  \item the deficiency is 1.
  \item a positive steady state exists.
 \end{enumerate}
 Then, the species relative to the non-zero entry of $y_j-y_i$ is ACR. 
\end{restatable}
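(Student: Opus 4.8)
\emph{Proof plan.} The plan is to boil the statement down to a single monomial identity and then read that identity off a one-dimensional linear problem. Let $k$ be the unique coordinate with $(y_j-y_i)_k\neq 0$ and put $\alpha=(y_j-y_i)_k\in\ZZ\setminus\{0\}$; for a complex $y$ write $\Lambda_y(x)=\prod_{l=1}^d x_l^{y_l}$ for the entry of the monomial vector $\Lambda(x)$ attached to $y$. At any positive $x$ one has
$$\frac{\Lambda_{y_j}(x)}{\Lambda_{y_i}(x)}=\prod_{l=1}^d x_l^{(y_j-y_i)_l}=x_k^{\alpha},$$
so it suffices to show that $\Lambda_{y_j}(\bar x)/\Lambda_{y_i}(\bar x)$ takes one and the same positive value $\theta$ at every positive steady state $\bar x$: then $\bar x_k=\theta^{1/\alpha}$ at every such $\bar x$, and since hypothesis~3 supplies at least one positive steady state, $\theta$ --- and hence the ACR value $q=\theta^{1/\alpha}$ --- is a well-defined constant, so the species indexed by $k$ is ACR. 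Everything therefore reduces to the steady-state independence of this ratio.

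Next I would set up the linear core. Writing the mass-action system as $\dot x=YA(\kappa)\Lambda(x)$, a positive steady state $\bar x$ satisfies $YA(\kappa)\Lambda(\bar x)=0$, that is, $A(\kappa)\Lambda(\bar x)\in\ker Y$; moreover every column of $A(\kappa)$ is a linear combination of the vectors $e_j-e_i$ over the reactions $i\to j$, so $A(\kappa)\Lambda(\bar x)\in V:=\spann_{\RR}\{e_j-e_i:i\to j\}\subseteq\RR^m$. Since $\dim V=m-\ell$ (the rank of the incidence matrix of a graph with $\ell$ connected components) and $Y$ maps $V$ onto $\St$, applying rank-nullity to $Y$ restricted to $V$ gives
$$\dim(\ker Y\cap V)=\dim V-\dim Y(V)=(m-\ell)-\dim\St=\delta=1.$$
Hence $\ker Y\cap V=\spann_{\RR}\{b\}$ for a single nonzero $b\in\RR^m$, and for every positive steady state $\bar x$ there is a scalar $c(\bar x)$ with $A(\kappa)\Lambda(\bar x)=c(\bar x)\,b$.

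It remains to read $c(\bar x)$ off this vector identity in two different ways, and this is the hard part. The target is a pair of scalar relations $c(\bar x)=\beta_i\,\Lambda_{y_i}(\bar x)$ and $c(\bar x)=\beta_j\,\Lambda_{y_j}(\bar x)$, with $\beta_i,\beta_j>0$ depending only on the network and the rate constants: dividing them yields $\Lambda_{y_j}(\bar x)/\Lambda_{y_i}(\bar x)=\beta_i/\beta_j$ at every positive steady state, which is precisely the independence needed in the first paragraph (take $\theta=\beta_i/\beta_j$). To obtain relations of this form one exploits that $y_i$ and $y_j$ are \emph{non-terminal}: such a complex sits strictly upstream, in the partial order of strong linkage classes, of some terminal strong linkage class, and the block of $A(\kappa)$ attached to the corresponding linkage class can be analysed by a Matrix--Tree type argument to solve the steady-state equations for $c(\bar x)$ in terms of the monomial at that non-terminal complex --- provided the coordinates of $b$ on the relevant terminal strong linkage class do not all vanish and the bookkeeping really isolates one monomial rather than a combination of several. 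Checking these two points is where the fine structure of deficiency-one networks must enter, and I expect it to be the main obstacle: one distinguishes the case in which a single linkage class carries the entire deficiency --- so that $b$ is supported inside that class --- from the case in which every linkage class has deficiency zero while the linkage-class stoichiometric subspaces fail to be independent --- so that $b$ is spread over at least two linkage classes, as happens both for \eqref{eq:toymodel} and for the EnvZ-OmpR network of Figure~\ref{fig:envz}. Each case needs its own bookkeeping but runs on the same principle; once it is in place, the rest is the linear algebra above together with the elementary monomial manipulation of the first paragraph. For the deficiency-theory input one would lean on \cite{F:review} and on the arguments of \cite{SF:ACR}.
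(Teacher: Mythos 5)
Your reduction to the constancy of the ratio $\Lambda_{y_j}(\bar x)/\Lambda_{y_i}(\bar x)$ and your linear set-up (namely $A(\kappa)\Lambda(\bar x)\in\ker Y\cap V$ with $\dim(\ker Y\cap V)=\delta=1$, so $A(\kappa)\Lambda(\bar x)=c(\bar x)\,b$) are both correct, and the second is essentially the content of Theorem~\ref{thm:deficiency}. The genuine gap is your third step, which you yourself flag as ``the main obstacle'': you never prove the relations $c(\bar x)=\beta_i\Lambda_{y_i}(\bar x)$ and $c(\bar x)=\beta_j\Lambda_{y_j}(\bar x)$, you only describe a Matrix--Tree programme that might yield them. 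Worse, these relations are not what the identity $A(\kappa)\Lambda(\bar x)=c(\bar x)b$ hands you: its $i$th row reads $\sum_{l}\kappa_{li}\Lambda_l(\bar x)-\big(\sum_l\kappa_{il}\big)\Lambda_i(\bar x)=c(\bar x)b_i$, a combination of all monomials feeding into $y_i$, so isolating the single monomial $\Lambda_{y_i}(\bar x)$ with a steady-state-independent coefficient requires inverting the linkage-class blocks of $A(\kappa)$ --- exactly the bookkeeping you have not supplied, and which is not obviously salvageable in the form you state it.

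The good news is that your own set-up closes in two lines without any of that machinery, using Theorem~\ref{thm:support} (every vector of $\ker A(\kappa)$ is supported on terminal complexes). First, $c(\bar x)\neq0$: otherwise $\Lambda(\bar x)\in\ker A(\kappa)$, contradicting $\Lambda_i(\bar x)=\bar x^{y_i}>0$ with $y_i$ non-terminal. Then, given two positive steady states $\bar x,\bar x'$, the vector $\Lambda(\bar x')-\tfrac{c(\bar x')}{c(\bar x)}\Lambda(\bar x)$ lies in $\ker A(\kappa)$, hence vanishes on the non-terminal coordinates $i$ and $j$; therefore $\Lambda_j(\bar x')/\Lambda_i(\bar x')=\Lambda_j(\bar x)/\Lambda_i(\bar x)$, which is the constancy you need. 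This comparison-of-steady-states argument is the route the paper itself takes in its self-contained treatment: the proof of Theorem~\ref{thm:feinberg_implies_cond} uses the decomposition $\ker YA(\kappa)=\spann_\RR\{\Lambda(c)\}\oplus\ker A(\kappa)$ together with Theorem~\ref{thm:support} to show $e_j-c^{y_j-y_i}e_i\perp\ker YA(\kappa)$, and Theorem~\ref{thm:lcif} then delivers the ACR conclusion; in the main text the theorem is simply deduced in one line from the stronger Theorem~\ref{thm:feinberg_strong} of \cite{SF:ACR}. As submitted, however, your proposal leaves the decisive step unproved.
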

Note that a stronger version of Theorem~\ref{thm:feinberg_weak} is proven in \cite{SF:ACR}, which detects steady state invariant that are more general than the equilibrium concentration of a single species. The stronger version is stated in the Supplementary Material as Theorem~\ref{thm:feinberg_strong}, and an extension of it is proven in the present work.

The model (\ref{eq:toymodel}) has deficiency 1, as already observed, has at least one positive equilibrium and the non-terminal complexes $A+B$ and $B$ differ only for the species $A$. Hence, Theorem~\ref{thm:feinberg_weak} applies and $A$ is ACR. It is shown in \cite{SF:ACR} that the EnvZ-OmpR osmoregulatory system in Figure~\ref{fig:envz} also fulfils the hypothesis of Theorem~\ref{thm:feinberg_weak}, with the non-terminal complexes EnvZ-D and EnvZ-D+OmpR-P only differing for the species OmpR-P. As a consequence, OmpR-P is ACR. In Section~\ref{sec:calculate} we will develop a method to explicitly calculate the ACR value through symbolic linear algebra. We note that Theorem~\ref{thm:feinberg_weak} cannot be applied to (\ref{eq:toymodel_modified}) for two reasons: the model is not a mass-action system and its deficiency is 2.

As noted in \cite{AEJ:ACR}, the positive steady states of a system with an ACR species are not necessarily stable. However, as a consequence of the present work (more precisely, as a consequence of Theorem~\ref{thm:control} with $u$ being the zero function), we know the following: if a mass-action system as in Theorem~\ref{thm:feinberg_weak} has an unstable positive steady state, then either the system oscillates around it, or some chemical species is completely consumed, or some chemical species is indefinitely produced. We give here the formal definition of ``oscillation'', as intended in this paper.
\begin{definition}
 We say that a function $g\colon\RR_{\geq0}\to\RR$ \emph{oscillates} around a value $\overline q\in\RR$ if for each $t\in\RR_{\geq0}$ there exist $t_+>t$ and $t_->t$ such that
 $$g(t_+)>q\quad\text{and}\quad g(t_-)<q.$$
\end{definition}
 
 \section{A linear constrained integrator}
 
 \subsection{Control Theory background}\label{sec:CT}
 
 In Control Theory, the focus is usually on systems of differential equations of the form
\begin{equation}\label{eq:system}
  \frac{d}{dt} x(t)=f(x(t), u(t)),
\end{equation}
where $x\colon\RR_{\geq0}\to \RR^{n_x}$ and $u\colon\RR_{\geq0}\to\RR^{n_u}$ for some $n_x,n_u\in\ZZ_{>0}$, and $f$ is a differentiable function. The function $u$ is called the \emph{input of the system}. Further, a quantity of the form $z(t)=a(x(t))$ is of interest, where $a$ is a differentiable function with $a\colon\RR^{n_x}\to\RR^{n_z}$, for some $n_z\in\ZZ_{>0}$. The function $z$ is called the \emph{output} of the system. In the usual setting, one needs to find an appropriate function $u$ such that $z$ is close to a desired level $\overline{z}\in\RR^{n_z}$, either on average or for $t\to\infty$. To this aim, the existence of a function $\phi\colon \RR^{n_x}\to\RR$ such that
$$\frac{d}{dt}\phi(x(t))=z(t)-\overline{z}$$
is of high importance, an is called an \emph{integrator}. The name derives from $\phi(x(t))$ being the integral of the error that needs to be controlled:
$$\phi(x(t))=\phi(x(0))+\int_0^t (z(s)-\overline{z})ds.$$
If the function is fed back to the system and is used to tune the input, then an \emph{integral action} or \emph{integral feedback} is in place \cite{doyle:feedback, astrom:feedback}. One of the main features of an integrator is that the derivative of $\phi(x(t))$ is zero if and only if $z(t)=\overline{z}$. If a function $\tilde\phi\colon \RR^{n_x}\to\RR$ satisfies
$$\frac{d}{dt}\tilde\phi(x(t))=r(x(t))\Big(z(t)-\overline{z}\Big)$$
for some differentiable function $r\colon \RR^{n_x}\to\RR$, then $\tilde\phi$ is called a \emph{constrained integrator} (CI) \cite{XD:robust}. The name derives from the fact that the derivative of $\tilde\phi(x(t))$ is zero if and only if $z(t)=\overline{z}$, provided that $r(x(t))\neq0$. In biology, it is common to find CIs, and the condition $r(x(t))\neq0$ is usually implied by $x(t)\neq 0$ \cite{XD:robust}. Note that in \cite{XD:robust} an explicit distinction between integrators and integral feedbacks is not made.

In the setting of systems with ACR species, the output $z$ can be considered to be the concentration of the ACR species over time, and $\overline{z}$ can be their ACR values. In (\ref{eq:toymodel}), $z(t)=x_A(t)$ and $\overline{z}=\kappa_2/\kappa_1$. A CI (as noted in \cite{XD:robust}) is given by $\tilde\phi(x(t))=x_B(t)$, since
$$\frac{d}{dt}x_B(t)=\kappa_1x_B(t)\left(x_A(t)-\frac{\kappa_2}{\kappa_1}\right).$$
The question of whether an integrator exists can be quickly answered in negative, because any point of the form $(\overline{x}_A,0)$ is a steady state. If an integrator $\phi$ existed, then by choosing $x(0)=(\overline{x}_A,0)$ we would have
$$0=\frac{d}{dt}\phi(x(t))=\overline{x}_A-\frac{\kappa_2}{\kappa_1},$$
which cannot hold expect for a specific value of $\overline{x}_A$. An integrator may still exist in a weaker sense, if we restrict its domain. For example, in this case the function
$\hat\phi(x(t))=\frac{1}{\kappa_1}\log x_B(t)$
would be an integrator, in the sense that if $x_B(t)>0$ then
$$\frac{d}{dt}\hat\phi(x(t))=
\frac{1}{\kappa_1x_B(t)}\left(\kappa_1x_B(t)\left(x_A(t)-\frac{\kappa_2}{\kappa_1}\right)\right)=
x_A(t)-\frac{\kappa_2}{\kappa_1}$$
However, the domain of $\hat\phi$ is not the entire $\RR^2$. Finally, since linear functions could always be extended continuously to the boundaries of $\RR^2_{>0}$, a linear integrator cannot exist for (\ref{eq:toymodel}) even if its domain is restricted.  

\subsection{Existence and characterization}
We state here our result concerning linear CIs. A stronger version is proved in the Supplementary Material. The result is inspired by the analysis carried on in \cite{SF:ACR}, which is here expanded.

For any $n\times l$ real matrix $M$ and real vector $v$ of length $n$, we denote by $(M|v)$ the  $n\times l+1$ matrix obtained by adding the column $v$ at the right of the matrix $A$. Let $1\leq i,j\leq m$. To present our result, we first need to define the preimage
\begin{equation}\label{eq:Gamma}
 \Gamma_{ij}(\kappa)=\left\{\gamma\in\RR^{d+1}\,:\,
\begin{pmatrix}
A(\kappa)^\top Y^\top\,|\, e_i
\end{pmatrix}\gamma
=e_j\right\},
\end{equation}
where $e_n$ denotes the $n$th vector in the canonical basis of $\RR^m$, whose $n$th component is 1 and whose other components are 0. 
The role of $\Gamma_{ij}(\kappa)$ is that of providing vectors $\gamma$ satisfying
\begin{equation}\label{eq:explanation}
\frac{d}{dt}\langle\hat\gamma,x(t)\rangle=\hat\gamma^\top Y A(\kappa)\Lambda(x(t))=\Lambda_j(x(t))-\gamma_i\Lambda_i(x(t)),
\end{equation}
where $\hat\gamma$ is the projection onto the first $d$ components of $\gamma$, and $\langle\cdot,\cdot\rangle$ is the standard scalar product. Under certain assumptions, \eqref{eq:explanation} will provide us with a CI. Then, the projection of $\Gamma_{ij}(\kappa)$ onto the first $d$ coordinates will be of interest, and we will denote it by $\hat\Gamma_{ij}(\kappa)$. The set $\Gamma_{ij}(\kappa)$ can be calculated with symbolic linear algebra. We will also prove in the Supplementary Material how $\hat\Gamma_{ij}(\kappa)$ is connected with $\St^\perp$, which is a set easily described by linear algebra and independent on the rate functions. Specifically, we will prove that if $\xi\in\hat\Gamma_{ij}(\kappa)$, then necessarily
\begin{equation}\label{eq:gamma_S}
 \{\xi+w\,:\,w\in\St^\perp\}\subseteq\hat\Gamma_{ij}(\kappa).
\end{equation}
We will also give sufficient conditions under which the inclusion in \eqref{eq:gamma_S} is an equality.

 As an example, consider the model in Figure~\ref{fig:envz}. Using Matlab, we quickly obtain that a vector $\xi$ is in $\Gamma_{18}(\kappa)$, with
 \begin{equation}\label{eq:calculation_ex}
  \xi_9=\frac{\kappa_1\kappa_3\kappa_5(\kappa_{10}+\kappa_{11})[\ATP]}{\kappa_2(\kappa_4+\kappa_5)\kappa_9\kappa_{11}[\ADP]},
 \end{equation}
and it is shown in the Supplementary Material that
\begin{equation}\label{eq:calculation_ex_hat}
\hat\Gamma_{18}(\kappa)=
\left\{\hat \xi+\begin{pmatrix}w_1\\w_1\\w_1\\w_1\\0\\w_1\\0\\w_1\end{pmatrix}+\begin{pmatrix}0\\0\\0\\0\\w_2\\w_2\\w_2\\w_2\end{pmatrix}\,:\, w_1, w_2\in\RR\right\},
\end{equation}
where $\hat \xi$ is the projection of $\xi$ onto its first $d=8$ coordinates.

The family of models we study in this paper concerns reaction systems with two non-terminal complexes $y_i$ and $y_j$ differing in just one entry, for which $\hat\Gamma_{ij}(\kappa)$ is non-empty. Our first result shows that such a family includes the models studied in \cite{SF:ACR}. The proof can be found in the Supplementary Material.

\begin{restatable}{theorem}{thmfeinbergimpliescond}\label{thm:feinberg_implies_cond}
  Consider a mass-action system, and assume the following holds:
 \begin{enumerate}
  \item there are two non-terminal complexes $y_i$ and $y_j$ such that only one entry of $y_j-y_i$ is non-zero;
  \item the deficiency is 1.
  \item a positive steady state exists.
 \end{enumerate}
 Then, $\hat\Gamma_{ij}(\kappa)$ is non-empty.  
\end{restatable}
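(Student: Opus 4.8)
The plan is to reduce non-emptiness of $\hat\Gamma_{ij}(\kappa)$ to a linear-algebraic condition on $\ker(YA(\kappa))$, and then to read that condition off from the deficiency-one hypothesis together with the existence of a positive steady state. Since projection onto the first $d$ coordinates sends a non-empty set to a non-empty set, $\hat\Gamma_{ij}(\kappa)\ne\emptyset$ is equivalent to $\Gamma_{ij}(\kappa)\ne\emptyset$, i.e.\ to $e_j\in\Image(A(\kappa)^\top Y^\top)+\RR e_i$. Because $A(\kappa)^\top Y^\top=(YA(\kappa))^\top$, its image equals $\ker(YA(\kappa))^\perp$, so, passing to orthogonal complements, $\hat\Gamma_{ij}(\kappa)\ne\emptyset$ if and only if every $v\in\ker(YA(\kappa))$ with $v_i=0$ also has $v_j=0$. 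This is the statement I would prove.

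I would then record two structural facts. First, the map $v\mapsto A(\kappa)v$ descends to an isomorphism between $\ker(YA(\kappa))/\ker A(\kappa)$ and $\ker Y\cap\Image A(\kappa)$; since every column of $A(\kappa)$ lies in the span of the reaction vectors $e_l-e_k$ in complex space, $\Image A(\kappa)$ is contained in that span, whose intersection with $\ker Y$ is $\delta$-dimensional (the standard geometric description of the deficiency). Hence $\dim(\ker Y\cap\Image A(\kappa))\le\delta=1$. Second, up to transposition $A(\kappa)$ is the generator of the continuous-time Markov chain on the complexes with jump rates $\kappa_{kl}$, so $\ker A(\kappa)$ is its space of signed stationary measures; ordering the complexes so that the non-terminal ones come first, $A(\kappa)$ becomes block lower-triangular with invertible leading block (the non-terminal complexes being transient), which forces every $v\in\ker A(\kappa)$ to vanish on non-terminal complexes. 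In particular $v_i=v_j=0$ for all $v\in\ker A(\kappa)$, because $y_i$ and $y_j$ are non-terminal.

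To conclude I would split on $\dim(\ker Y\cap\Image A(\kappa))\in\{0,1\}$. If it is $0$, then $\ker(YA(\kappa))=\ker A(\kappa)$ and the reformulated condition holds trivially by the second fact; in fact $e_j\in\Image(A(\kappa)^\top Y^\top)$, so $(\hat\gamma,0)\in\Gamma_{ij}(\kappa)$ for any $\hat\gamma$ with $A(\kappa)^\top Y^\top\hat\gamma=e_j$. If it is $1$, choose $v_0\in\ker(YA(\kappa))\setminus\ker A(\kappa)$, so that $\ker(YA(\kappa))=\ker A(\kappa)\oplus\RR v_0$ and, by the second fact, the numbers $(v_0)_i$ and $(v_0)_j$ do not depend on the choice of $v_0$. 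Here hypothesis (3) enters: a positive steady state $\overline{x}$ exists, so $\Lambda(\overline{x})$ lies in $\ker(YA(\kappa))$ (as $YA(\kappa)\Lambda(\overline{x})=0$ for the mass-action dynamics $\dot x=YA(\kappa)\Lambda(x)$) and has all entries strictly positive; writing $\Lambda(\overline{x})=u+av_0$ with $u\in\ker A(\kappa)$ and using $u_i=0$ gives $0<\Lambda(\overline{x})_i=a(v_0)_i$, hence $(v_0)_i\ne0$. Then any $v=u+av_0\in\ker(YA(\kappa))$ with $v_i=0$ must have $a=0$, i.e.\ $v\in\ker A(\kappa)$, so $v_j=0$; the reformulated condition holds and $\hat\Gamma_{ij}(\kappa)\ne\emptyset$.

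The arithmetic above is light, so the main obstacle is conceptual: finding the reformulation, and recognising that deficiency one creates at most one direction in $\ker(YA(\kappa))$ beyond the stationary measures, a direction which the positive steady state pins down to have non-zero $i$-th coordinate. The point requiring the most care is the second structural fact — that $\ker A(\kappa)$ charges only terminal complexes — which is the Markov-chain reformulation behind classical deficiency theory and must be quoted precisely. Finally, it is worth noting that hypothesis (1) is not actually used to establish non-emptiness of $\hat\Gamma_{ij}(\kappa)$; it becomes relevant only afterwards, when $\hat\Gamma_{ij}(\kappa)$ is used to build a constrained integrator tracking a single species, since it is what makes $\Lambda_j/\Lambda_i$ a power of one species rather than a general monomial.
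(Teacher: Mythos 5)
Your proof is correct and follows essentially the same route as the paper's: the dual reformulation of non-emptiness as an orthogonality condition on $\ker(YA(\kappa))$ is the paper's Lemma on $\hat\Gamma_{ij}(\kappa)$ being non-empty iff $e_j-\gamma_{d+1}e_i\perp\ker YA(\kappa)$, and the two structural facts you invoke (support of $\ker A(\kappa)$ on terminal complexes, and $\dim\ker YA(\kappa)-\dim\ker A(\kappa)\le\delta$) are exactly the paper's Theorems~\ref{thm:support} and \ref{thm:deficiency}, with the positive steady state supplying $\Lambda(c)$ as the one extra direction having non-zero $i$th coordinate. The only cosmetic differences are your explicit case split on $\dim(\ker Y\cap\Image A(\kappa))\in\{0,1\}$ where the paper writes the direct sum $\ker YA(\kappa)=\spann_\RR\{\Lambda(c)\}\oplus\ker A(\kappa)$ at once, and your closing remark should be read as saying only the \emph{single-entry-difference} part of hypothesis (1) is unused --- the non-terminality of $y_i$ and $y_j$ is used crucially, and indeed the paper's appendix proves exactly the generalization you describe.
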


As an example, we know already from direct calculation that for the EnvZ-OmpR signaling system \eqref{eq:calculation_ex_hat} holds, which in turn implies that the set $\hat\Gamma_{18}(\kappa)$ is non-empty. However, we could have also derived this information from Theorem~\ref{thm:feinberg_implies_cond}, without explicitly calculating $\hat\Gamma_{18}(\kappa)$.

We note here that the converse of Theorem~\ref{thm:feinberg_implies_cond} does not hold. We show this with an example of a multisite phosphorylation signaling system in the Supplementary Material, which does not fall in the setting of \cite{SF:ACR} but for which we are able to prove absolute concentration robustness regardless the choice of rate constants, as long as a positive steady state exists. Notably, we are also able to derive information on when this occurs without working directly with the differential equation. As a consequence of this example, the family of models we analyze is proven to be strictly larger than that studied in \cite{SF:ACR}. The following holds.

\begin{restatable}{theorem}{thmlcif}\label{thm:lcif}
  Consider a mass-action system. Assume that there are two complexes $y_i$ and $y_j$ only differing in the $n$th entry, and that $\hat\Gamma_{ij}(\kappa)$ is non-empty. Let $\gamma\in\hat\Gamma_{ij}(\kappa)$, and define
 $$q=\gamma_{d+1}^{\frac{1}{(y_j-y_i)_n}}.$$
 Then, either no positive steady state exists or the $n$th species is ACR with ACR value $q$. Moreover, 
 $$\phi(x)=\sum_{i=1}^d \beta_i x_i$$
 is a linear CI with
 $$\frac{d}{dt}\phi(x(t))=\Lambda_i(x(t)) \left(x_n(t)^{(y_j-y_i)_n}-q^{(y_j-y_i)_n}\right)$$
 for any initial condition $x(0)$ if and only if $\beta\in\hat\Gamma_{ij}(\kappa)$.
\end{restatable}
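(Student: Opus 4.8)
The plan is to reduce everything to one linear-algebra identity together with two elementary observations: that the monomial $\Lambda_i$ never vanishes on the positive orthant, and that monomials attached to distinct complexes are linearly independent as functions there.

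First I would unwind membership in $\hat\Gamma_{ij}(\kappa)$: a vector $\beta\in\RR^d$ lies in $\hat\Gamma_{ij}(\kappa)$ exactly when there is a scalar $c$ with $A(\kappa)^\top Y^\top\beta+c\,e_i=e_j$, and for fixed $\beta$ this $c$ is unique, since the fibre of the projection $\Gamma_{ij}(\kappa)\to\hat\Gamma_{ij}(\kappa)$ over $\beta$ is a single point (the kernel of $(A(\kappa)^\top Y^\top\,|\,e_i)$ contains no vector $(0,t)$ with $t\neq 0$). For such a $\beta$, computing $\frac{d}{dt}\langle\beta,x(t)\rangle$ via the mass-action form $\frac{d}{dt}x(t)=YA(\kappa)\Lambda(x(t))$ gives, exactly as recorded in \eqref{eq:explanation},
$$\frac{d}{dt}\langle\beta,x(t)\rangle=\big(A(\kappa)^\top Y^\top\beta\big)^\top\Lambda(x(t))=(e_j-c\,e_i)^\top\Lambda(x(t))=\Lambda_j(x(t))-c\,\Lambda_i(x(t)).$$
Writing $k=(y_j-y_i)_n$ and using that $y_i$ and $y_j$ coincide away from coordinate $n$, one has $\Lambda_j(x)=x_n^{\,k}\,\Lambda_i(x)$ for every $x$ in the positive orthant, so $\frac{d}{dt}\langle\beta,x(t)\rangle=\Lambda_i(x(t))\big(x_n(t)^{\,k}-c\big)$.

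Next I would derive the ACR statement by evaluating this identity at a positive steady state $\overline x$: there $\frac{d}{dt}x=0$, hence $\Lambda_i(\overline x)(\overline x_n^{\,k}-c)=0$, and $\Lambda_i(\overline x)=\prod_l\overline x_l^{\,y_{il}}>0$ forces $\overline x_n^{\,k}=c$, i.e. $\overline x_n=c^{1/k}$. For the distinguished $\gamma$ this reads $\overline x_n=\gamma_{d+1}^{1/k}=q$ at every positive steady state, which is absolute concentration robustness with value $q$; the formulation through convergent trajectories $x(t)\to\overline x$ then follows from the standard fact that a convergent solution of an autonomous ODE tends to a steady state. The same evaluation shows that, once a positive steady state exists, the scalar $c$ attached to any $\beta\in\hat\Gamma_{ij}(\kappa)$ equals $\overline x_n^{\,k}=q^{\,k}$ independently of $\beta$, so the displayed CI equation is literally one and the same equation for every such $\beta$; the ``if'' half of the equivalence is then immediate from the computation of $\frac{d}{dt}\langle\beta,x(t)\rangle$ above. (If no positive steady state exists the ACR assertion is vacuous, and the only way $c$ could fail to be uniquely $q^{k}$ over $\hat\Gamma_{ij}(\kappa)$ is $e_i\in\Image\big(A(\kappa)^\top Y^\top\big)$, which itself rules out positive steady states because $\Lambda_i$ is positive on the orthant; I would note this.)

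For the ``only if'' half, suppose $\phi(x)=\langle\beta,x\rangle$ satisfies $\frac{d}{dt}\phi(x(t))=\Lambda_i(x(t))\big(x_n(t)^{\,k}-q^{\,k}\big)=\Lambda_j(x(t))-q^{\,k}\Lambda_i(x(t))$ for every initial condition. Evaluating at $t=0$ and letting $x(0)$ range over the whole positive orthant gives $\big(A(\kappa)^\top Y^\top\beta\big)^\top\Lambda(x)=(e_j-q^{\,k}e_i)^\top\Lambda(x)$ for all $x>0$; since the complexes are pairwise distinct, $\Lambda_1,\dots,\Lambda_m$ are linearly independent on the orthant, so $A(\kappa)^\top Y^\top\beta=e_j-q^{\,k}e_i$, i.e. $(\beta,q^{\,k})\in\Gamma_{ij}(\kappa)$ and hence $\beta\in\hat\Gamma_{ij}(\kappa)$. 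The whole argument is bookkeeping around \eqref{eq:explanation}; the one place to tread carefully is that $\Lambda_j=x_n^{k}\Lambda_i$ is a genuine polynomial identity only for $k\ge 0$ and otherwise has to be read on the positive orthant (which is where every evaluation above happens, so nothing is lost), and -- the point I expect a reader to scrutinize -- that the constant in the CI equation is pinned to $q^{k}$ uniformly over $\hat\Gamma_{ij}(\kappa)$, which is precisely what the positive-steady-state evaluation delivers.
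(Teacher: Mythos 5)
Your proof is correct and follows essentially the same route as the paper's: both compute $\frac{d}{dt}\langle\beta,x(t)\rangle=(A(\kappa)^\top Y^\top\beta)^\top\Lambda(x(t))=\Lambda_j(x(t))-c\,\Lambda_i(x(t))$, pin down the constant by evaluating at a positive steady state where $\Lambda_i>0$, and obtain the ``only if'' direction from the linear independence of the monomials $\Lambda_1,\dots,\Lambda_m$. The only organizational difference is that the paper channels the uniqueness of $\gamma_{d+1}$ and the identity $\hat\Gamma_{ij}(\kappa)=\hat\gamma+\ker A(\kappa)^\top Y^\top$ through Proposition~\ref{prop:structure_Gamma}, whereas you argue directly from the defining equation of $\Gamma_{ij}(\kappa)$ and are, if anything, slightly more explicit about the degenerate case in which no positive steady state exists.
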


The existence of a linear CI given by Theorem~\ref{thm:lcif} is essential to develop the results presented in the next sections. Before unveiling the consequences of Theorem~\ref{thm:lcif}, however, it is important to stress that a CI does not necessarily constitute a feedback, as one may be tempted to think. Consider 
 \begin{equation*}
 \begin{split}
  \schemestart
   $A+B$\arrow{->[$\kappa_1$]}$A$\arrow{<=>[$\kappa_2$][$\kappa_3$]}$0$
   \arrow(@c1.south east--.north east){0}[-90,.25]
   $B$\arrow{->[$\kappa_4$]}$2B$
  \schemestop
  \end{split}
\end{equation*}
with $\kappa_1\kappa_3=\kappa_2\kappa_4$. It can be shown that the system satisfies the conditions of Theorems~\ref{thm:feinberg_weak} and \ref{thm:feinberg_implies_cond}, with the non-terminal complexes $A+B$ and $A$ differing only in species $A$. Hence, $A$ is ACR and the assumptions of Theorem~\ref{thm:lcif} hold. A linear CI as in Theorem~\ref{thm:lcif} is given by $\phi(x)=-x_B/\kappa_1$, since for this choice
$$\frac{d}{dt}\phi(x(t))=x_B(t)\left(x_A(t)-\frac{\kappa_4}{\kappa_1}\right).$$
However, the quantity $\phi(x(t))$ does not regulate the dynamics of $A$, since
$$\frac{d}{dt}x_A(t)=\kappa_2-\kappa_3x_A(t)$$
does not depend on $x_B(t)$. Since in this case the CI is not acting on the system, it is not surprising that the existence of positive steady states is lost as soon as $\kappa_1\kappa_3\neq\kappa_2\kappa_4$.

It is also worth mentioning that not all systems with ACR species have a linear CI: consider the mass-action system
 \begin{equation*}
 \begin{split}
  \schemestart
   $A+B$\arrow{->[$\kappa_1$]}$B+C$\arrow{<=>[$\kappa_2$][$\kappa_3$]}$2B$
   \arrow(@c1.south east--.north east){0}[-90,.25]
   $B$\arrow{<=>[$\kappa_4$][$\kappa_5$]}$2E$\arrow{->[$\kappa_6$]}$2D$
   \arrow(@c4.south east--.north east){0}[-90,.25]
   $C$\arrow{->[$\kappa_7$]}$A$
   \arrow(@c7.south east--.north east){0}[-90,.25]
   $D$\arrow{->[$\kappa_8$]}$E$
  \schemestop
  \end{split}
\end{equation*}
The model is considered in \cite{AC:ACR}, where it is proven that the species $A$ is ACR. We show in the Supplementary Material that there exists no linear function $\phi$ whose derivative at time $t$ is of the form $r(x(t))(x_A(t)^\gamma-q)$, for some polynomial $r$ and some real numbers $\gamma, q$.  Note that in this case Theorem~\ref{thm:feinberg_implies_cond} does not apply because the deficiency of the network is 2.

 \section{A method to calculate the ACR value}\label{sec:calculate}

 The first interesting consequence of Theorem~\ref{thm:lcif} is that the ACR values of the mass-action systems satisfying the assumption of the theorem can be calculated by finding at least one element of the preimage $\Gamma_{ij}(\kappa)$, and this can be done via a simple symbolic linear algebra calculation. As an example, consider the EnvZ-OmpR osmoregulatory system in Figure~\ref{fig:envz}. Then, Theorem~\ref{thm:lcif} implies that the ACR value of OmpR-P is the value given in \eqref{eq:calculation_ex}. This value is in accordance with the one found in the Supplementary Material of \cite{SF:ACR}, however we found it by calculating a single element in the preimage of a matrix, as opposed to working with the rather complicated differential equation associated with the model. An even more involved examples is dealt with in the Supplementary Material.
 
 \section{Rejection of persistent disturbances}
 
 \subsection{The result}
 
 We state an important consequence of Theorem~\ref{thm:lcif}, a stronger version of which is proven in the Supplementary Material:
 \begin{restatable}{theorem}{thmcontrol}\label{thm:control}
   Consider a mass-action system, with associated differential equation
   $$\frac{d}{dt} x(t)=f(x(t)).$$
   Assume that there are two complexes $y_i$ and $y_j$ only differing in the $n$th entry, and that $\hat\Gamma_{ij}(\kappa)$ is non-empty. Let $q$ be the ACR value of the $n$th species. Consider an arbitrary function $u$ with image in $\RR^d$ such that a solution to
   $$\frac{d}{dt}\tilde x(t)=f(\tilde x(t))+u(\tilde x(t),t)$$
   exists. 
   Assume that there exists a $\hat\gamma\in\hat\Gamma_{ij}(\kappa)$ which is orthogonal to the vector $u(x,t)$ for any $x,t$. Then, for any initial condition $\tilde x(0)$, at least one of the following holds:
   \begin{enumerate}[(a)]
    \item\label{part:infinity_or_zero} the concentration of some species goes to 0 or infinity, along a sequence of times;
    \item\label{part:oscillation_MT} $\tilde x_n(t)$ oscillates around $q$ and $\hat\gamma_k\neq0$ for some $k\neq n$;
    \item\label{part:ACR_convergence} the integral
    $$\int_t^\infty |\tilde x_n(s)-q|ds$$
    tends to $0$, as $t$ goes to infinity.
   \end{enumerate}
 \end{restatable}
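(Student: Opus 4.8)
The plan is to transport the linear constrained integrator supplied by Theorem~\ref{thm:lcif} along the perturbed trajectory and to exploit the fact that the disturbance is invisible to it. Write $p=(y_j-y_i)_n\in\ZZ\setminus\{0\}$, let $\hat\gamma\in\hat\Gamma_{ij}(\kappa)$ be the vector orthogonal to every $u(x,t)$ provided by the hypothesis, and set $\psi(t)=\langle\hat\gamma,\tilde x(t)\rangle$. Applying Theorem~\ref{thm:lcif} with $\beta=\hat\gamma$ and reading off the derivative at an arbitrary positive initial condition yields the pointwise identity $\langle\hat\gamma,f(x)\rangle=\Lambda_i(x)\bigl(x_n^{\,p}-q^{\,p}\bigr)$ on $\RR^d_{>0}$. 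Differentiating $\psi$ along $\frac{d}{dt}\tilde x(t)=f(\tilde x(t))+u(\tilde x(t),t)$ and using $\langle\hat\gamma,u(\tilde x(t),t)\rangle=0$ then gives $\frac{d}{dt}\psi(t)=\Lambda_i(\tilde x(t))\bigl(\tilde x_n(t)^{\,p}-q^{\,p}\bigr)$, exactly as in the unperturbed system. Since the three alternatives are exhaustive precisely when the failure of (a) and of (c) forces (b), I would assume from now on that neither (a) nor (c) holds and prove (b). We may also assume the solution stays in $\RR^d_{>0}$ on $[0,\infty)$, since otherwise (a) holds.

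Because (a) fails, each species concentration has positive $\liminf$ and finite $\limsup$, so there are a time $T$ and a compact set $K\subset\RR^d_{>0}$ with $\tilde x(t)\in K$ for $t\ge T$. On $K$ the monomial $\Lambda_i$ is bounded below by some $a>0$, and by the mean value theorem $x_n^{\,p}-q^{\,p}=m(x_n)(x_n-q)$ with $\operatorname{sign}(m)=\operatorname{sign}(p)$ and $|m|$ pinned between two positive constants. The first step is to rule out that $\tilde x_n$ fails to oscillate around $q$: in that case $\tilde x_n(t)-q$ has a constant sign for large $t$, hence so does $\frac{d}{dt}\psi(t)$, so $\psi$ is eventually monotone; being bounded on $[T,\infty)$ (as $\tilde x(t)\in K$) it converges, whence $\int_T^{\infty}\bigl|\tfrac{d}{dt}\psi(t)\bigr|\,dt<\infty$, and the lower bounds on $\Lambda_i$ and on $|m|$ upgrade this to $\int_T^{\infty}|\tilde x_n(t)-q|\,dt<\infty$, which is (c) --- a contradiction. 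Hence $\tilde x_n$ oscillates around $q$, the first half of (b).

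For the second half I would argue by contradiction: suppose $\hat\gamma_k=0$ for all $k\ne n$. Then $\hat\gamma=\hat\gamma_n e_n$ with $\hat\gamma_n\ne0$, because $0\notin\hat\Gamma_{ij}(\kappa)$ --- a zero integrator cannot carry the non-trivial derivative prescribed by Theorem~\ref{thm:lcif}. Consequently $\psi=\hat\gamma_n\tilde x_n$, so $h:=\tilde x_n-q$ satisfies $h'=\frac{1}{\hat\gamma_n}\Lambda_i(\tilde x)\bigl(\tilde x_n^{\,p}-q^{\,p}\bigr)$; since $(a^{\,p}-b^{\,p})(a-b)$ has the sign of $p$ for all $a,b>0$, the product $h\,h'$ has the constant sign $\operatorname{sign}(p/\hat\gamma_n)$ for $t\ge T$, and the bounds on $K$ give $|h(t)\,h'(t)|\ge c\,h(t)^2$ for some $c>0$. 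If $\operatorname{sign}(p/\hat\gamma_n)<0$, then $\frac{d}{dt}\bigl(h(t)^2\bigr)=2h(t)h'(t)\le-2c\,h(t)^2$, so $h$ decays exponentially, $\int_T^{\infty}|h(t)|\,dt<\infty$, and (c) holds --- a contradiction. If $\operatorname{sign}(p/\hat\gamma_n)>0$, then $h^2$ is non-decreasing on $[T,\infty)$ and $\operatorname{sign}(h')=\operatorname{sign}(h)$ there, so once $h$ is non-zero it neither changes sign nor shrinks in magnitude; hence $\tilde x_n$ eventually stays strictly on one side of $q$, contradicting the oscillation just established. Either way we reach a contradiction, so $\hat\gamma_k\ne0$ for some $k\ne n$, which completes (b).

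The step I expect to be the main obstacle is the last one. The point is that when the integrator degenerates to a multiple of $e_n$, the ACR coordinate obeys a scalar equation whose integrator-imposed sign pattern is either globally attracting --- which forces convergence, i.e.\ (c) --- or globally repelling --- which forbids oscillation --- and the dichotomy in (b) sits exactly at the interface of these two regimes. One must keep careful track of signs and of the positive lower bounds obtained from compactness throughout, since a weaker conclusion such as $\liminf_{t\to\infty}|\tilde x_n(t)-q|=0$ would not suffice; the strengthened version in the Supplementary Material presumably needs these estimates made quantitative. By comparison, the extraction of the compact set $K$, the identity $\frac{d}{dt}\psi(t)=\Lambda_i(\tilde x(t))(\tilde x_n(t)^{\,p}-q^{\,p})$, and the monotone-and-bounded argument for $\psi$ are routine.
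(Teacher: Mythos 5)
Your proposal is correct and follows essentially the same route as the paper: the paper first proves a general version (Theorem~\ref{thm:control_app}) by differentiating $\langle\hat\gamma,\tilde x(t)\rangle$ with the linear CI identity, using boundedness of the trajectory to bound the integral of the error, and concluding that an eventually constant sign forces the tail integral to vanish; it then handles the degenerate case $\hat\gamma=\hat\gamma_n e_n$ via the same differential inequality $m(\tilde x_n-q)\leq\hat\gamma_n\frac{d}{dt}(\tilde x_n-q)\leq M(\tilde x_n-q)$ that underlies your sign-locking argument. Your only deviation is cosmetic: you split the degenerate case by the sign of $p/\hat\gamma_n$ (exponential decay giving (c) versus sign preservation contradicting oscillation), whereas the paper draws the sign-preservation conclusion from the single two-sided inequality.
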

 The result implies that if a disturbance orthogonal to a vector $\hat\gamma\in\hat\Gamma_{ij}(\kappa)$ is applied over time, then the stability of the ACR species is maintained: at most, the ACR species can be forced to oscillate around its original ACR value, but it cannot be forced to attain another equilibrium level without causing extinction or overexpression of the chemical species present. We analyze the power of Theorem~\ref{thm:control} by showing some examples of applications.
 
 \begin{example}
  Consider the mass-action system (\ref{eq:toymodel}), which fulfills the assumptions of Theorem~\ref{thm:control} as already observed. Assume the complexes are ordered as $A+B$, $2B$, $B$, and $A$, and the species are ordered alphabetically as $A$, $B$. Hence, the two non-terminal complexes differing in the ACR species $A$ are the 1st and the 3rd, and it is shown in the Supplementary Material that
  \begin{equation}\label{eq:calculation_toy}
   \hat\Gamma_{13}(\kappa)=\left\{\begin{pmatrix}
                                    0 \\ 1
                                   \end{pmatrix}+\begin{pmatrix}
                                    w \\ w
                                   \end{pmatrix}\,:\, w\in\RR\right\}.
  \end{equation}
  Hence, by choosing $w=-1$, we have that 
  \begin{equation}\label{eq:hjhjhjh}
    \begin{pmatrix}
     -1\\0
    \end{pmatrix}\in \hat\Gamma_{13}(\kappa).
  \end{equation}
  This vector is clearly orthogonal to any disturbance acting on the production and degradation rates of the species $B$. Hence, it follows that the stability and the ACR value of the species $A$ is maintained in (\ref{eq:toymodel_modified}), provided that no species is completely removed or indefinitely expressed. Specifically, since the entry of \eqref{eq:hjhjhjh} relative to $B$ is zero, it follows from Theorem~\ref{thm:control} that if all the species concentrations are bounded from below and from above by positive quantities, necessarily the concentration of the species $A$ converges to its ACR value as $t$ goes to infinity, despite the disturbances.
 \end{example}
 \begin{example}[EnvZ-OmpR osmoregulatory system]
  Consider the osmoregulatory system in Figure~\ref{fig:envz}, whose features have already been discussed in the paper. In particular, we know the species OmpR-P is ACR with ACR value \eqref{eq:calculation_ex}. Recall that we ordered the complexes such that the two non-terminal ones differing in OmpR-P are the 1st and the 8th. It follows from \eqref{eq:calculation_ex_hat} that for any chemical species, there is a vector in $\hat\Gamma_{18}(\kappa)$ with the associated entry equal to 0. It follows that even if the production and degradation of any chemical species in the model is tampered with, the stability and the ACR value of the species OmpR-P are maintained, in the sense described by Theorem~\ref{thm:control}. 
  
  We can push the disturbances further. By appropriately choosing $w_1$ and $w_2$ in \eqref{eq:calculation_ex_hat}, we can see that there is vector in $\hat\Gamma_{18}(\kappa)$ whose entries relative to the species OmpR and EnvZ are both 0. Hence, it follows by Theorem~\ref{thm:control} that by tampering with the production and degradation rates of both these species over time, if no extinction and no overexpression occurs, then the concentration of OmpR-P still converges to the value \eqref{eq:calculation_ex}, or oscillates around it.
 \end{example}
 
  As a final remark, we note that \eqref{eq:gamma_S} can be useful in determining whether a vector in $\Gamma_{ij}(\kappa)$ exists, with a specific component equal to 0, say the $n$th one. In fact, the existence of such a vector can be deduced without calculations, if there is a vector $w\in\St^{\perp}$ whose $n$th component is different from 0.
 
 \subsection{Insulating properties}
 
 Here we illustrate how the theory we developed can be utilized to solve the Synthetic Biology problem of retroactivity. As explained in the Introduction, the loading effects caused by a downstream biochemical module can disrupt the functionality of upstream modules, which prevents the implementation of biochemical circuits by interconnecting biochemical modules with different functions \cite{DNS:modular}. A concrete example of loading effect is illustrated in Figure~\ref{fig:application}.
 
 Assume that a mass-action system has two complexes $y_i$ and $y_j$, that are only different in the $n$th component, which corresponds to the species $X$. Assume further that $\hat\Gamma_{ij}(\kappa)$ is non-empty and that a positive steady state exists. Hence, the species $X$ is ACR, with some ACR value $q$. It further follows from Theorem~\ref{thm:control} that, if there exists $\hat\gamma\in\hat\Gamma_{ij}(\kappa)$ with $\hat\gamma_n=0$, then the production and degradation rates of the species $X$ can be arbitrarily perturbed over time by an arbitrary function $u$, without compromising its robustness. Specifically, if the perturbed system is stable and no chemical species is completely consumed, then the concentration of $X$ will still converge to the same ACR value $q$ as in the original mass-action system. The key observation we make here is that the perturbation $u$ can be considered as the loading effect of a downstream module that takes the concentration of species $X$ as input. In this case, the loading effect on the original mass-action is rejected and the concentration of $X$ is maintained at a desired level $q$ at steady state. Further, the concentration of $X$ is maintained approximately constant in the transient dynamics as well, if we assume as done in \cite{DNS:modular} that a separation of dynamics time scale is in place. Specifically, assume
 $$\frac{d}{dt}\tilde x(t)=\frac{1}{\varepsilon}f(\tilde x(t))+ u(\tilde x(t), t)e_n,$$
 for some small $\varepsilon>0$, with $f(\tilde x(t))$ and $u(\tilde x(t), t)$ being of the same order of magnitude. Under the assumption of stability, if $\varepsilon$ is very small then the perturbed system will quickly approach the slow manifold defined by
 $$0=f(\tilde x(t))+ \varepsilon u_t(\tilde x(t))e_n,$$
 where $u_t$ is a function from $\RR^d$ to $\RR^d$ defined by $u_t(x)=u(x,t)$. By Theorem~\ref{thm:control} applied to the disturbance $u_t$, the species $X$ assumes its ACR value at any positive point of the slow manifold, which is exactly what we wanted.
  
 As an example of application, consider the EnvZ-OmpR osmoregulatory system in Figure~\ref{fig:envz}. It follows from \eqref{eq:calculation_ex_hat} that there exists $\hat\gamma\in\Gamma_{18}(\kappa)$ with a zero in the entry corresponding to the ACR species OmpR-P. Hence, the production and degradation rates of OmpR-P can be arbitrarily changed over time, without altering its robustness property, in the sense described by Theorem~\ref{thm:control}. As observed, the statement still holds true if the perturbation is originated by a downstream module that acts on OmpR-P. Hence, the EnvZ-OmpR osmoregulatory system can be used to maintain the expression of OmpR-P at a desired level, which depends on the input rate constants, even if the species OmpR-P is used by a downstream module. Moreover, if the downstream module acts on a slower time scale, the concentration of the species OmpR-P is approximately maintained at the target level at any time point. In Figure~\ref{fig:block}, a diagram describing this situation is proposed.
 
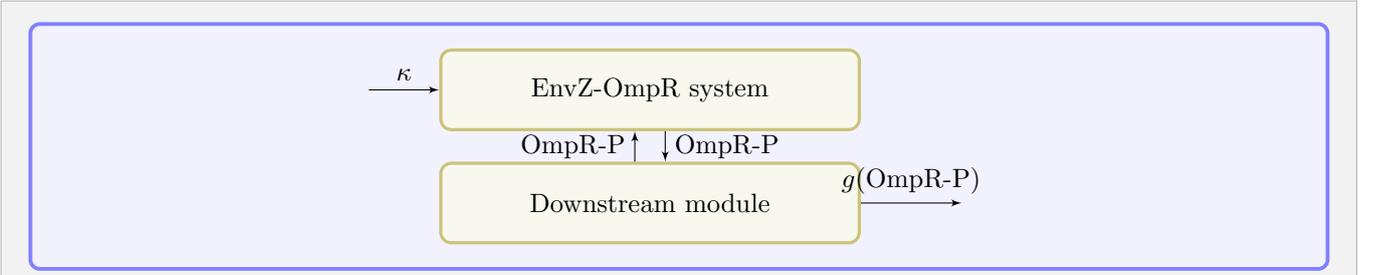
\begin{figure}[h!]
\begin{center}
 \begin{tcolorbox}[colback=blue!5!white,colframe=blue!50!white]
 \begin{center}
\begin{tikzpicture}[auto, node distance=2cm,>=latex']
    \node [input, name=input] {};
    \node [block, right of=input, node distance=3.7cm] (signaling) {EnvZ-OmpR system};
    \node [block, below of=signaling, node distance=1.5cm] (plant) {Downstream module};
    \node [output, right of=plant, node distance=4.1cm] (output) {};
    \draw [draw,->] (input) -- node {$\kappa$} (signaling);
    \draw [->] ([xshift=0.2cm]signaling.south) -- node {OmpR-P}([xshift=0.2cm]plant.north);
    \draw [->] ([xshift=-0.2cm]plant.north) -- node {OmpR-P} ([xshift=-0.2cm]signaling.south);
    \draw [->] (plant) -- node {$g($OmpR-P$)$} (output);
\end{tikzpicture}
\end{center}
 \end{tcolorbox}
 \caption{Proposed use of the EnvZ-OmpR signal transduction system of Figure~\ref{fig:envz} as a controller of a downstream module utilizing OmpR-P. The concentration of OmpR-P is regulated by the EnvZ-OmpR signaling system, with equilibrium given by \eqref{eq:calculation_ex}. The equilibrium can be adjusted by modifying the parameters $\kappa_{ij}$ of the EnvZ-OmpR signaling system (which depend on the medium osmolarity) or the ratio between ADP and ATP present. The output of the downstream module is a function $g$ of the concentration of OmpR-P, which is received as input.}
 \label{fig:block}
 \end{center}
\end{figure}
 
\begin{figure}[h!]
  \begin{center}
    \begin{tcolorbox}[colback=blue!5!white,colframe=blue!50!white]
\begin{center}
\begin{tikzpicture}[auto, node distance=1cm,>=latex']
    \node [input, name=input] {};
    \node [block, right of=input, node distance=4.1cm] (upstream) {Upstream module};
    \node [input, below of=input, node distance=2.5cm, name=input2] {};
    \node [block, below of=upstream, node distance=2.5cm, align=center] (insulator) {Insulator \\[0.3cm] \usebox\ReactionBox};
    \node [block, below of=insulator, node distance=2.5cm] (downstream) {Downstream module};
    \node [output, right of=downstream, node distance=3.7cm] (output) {};
    \draw [draw,->] (input) -- node {$u(t)$} (upstream);
    \draw [->] (upstream) -- node[text width=3cm] {$A^\star$ (not changed by the insulator)}(insulator);
    \draw [draw,->] (input2) -- node {$\kappa_1$, $\kappa_2$, $x_B(0)$} (insulator);
    \draw [->] ([xshift=0.2cm]insulator.south) -- node {$A\approx\displaystyle\frac{\kappa_2 A^\star}{\kappa_1}$} ([xshift=0.2cm]downstream.north);
	\draw [->] ([xshift=-0.2cm]downstream.north) -- node[text width=3.5cm] {$A$ (changed by downstream module)} ([xshift=-0.2cm]insulator.south);
    \draw [->] (downstream) -- node {$g(A)$} (output);
\end{tikzpicture}
\end{center} 
    \end{tcolorbox}
 \caption{
 The upstream module expresses the chemical species $A^\star$ as output. The insulator transfers a multiple of the signal from the upstream module to the downstream module, which is modified to accept as input the concentration of $A$ rather than the concentration of $A^\star$.}
 \label{fig:insulation}
  \end{center}
 \end{figure}
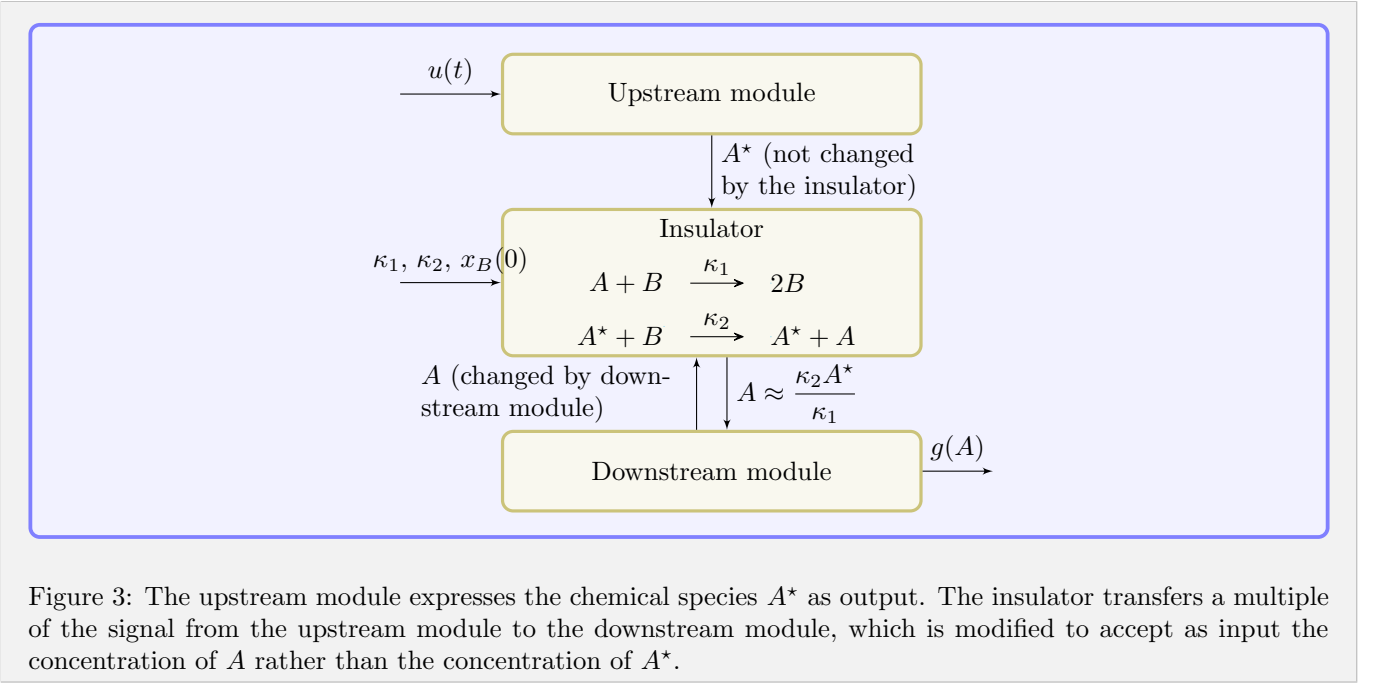
 
 Consider now the case were an upstream module is affected from loading effects. We show how the theory developed in this paper can be used to design an insulator. Assume that the upstream module accepts $u(t)$ as input, and modulates the concentration of the species $A^\star$ accordingly. The species $A^\star$ is then used by a downstream module, which returns a function of the concentration of $A^\star$ as output. The action of the downstream module on the species $A^\star$ causes a loading effect on the upstream module. This loading effect should be reduced. To this aim, we proposed to modify the downstream module such that it acts on a species $A$ rather than on the species $A^\star$, and to include in the system the following module, where $B$ is a species that is not used by neither the upstream nor the downstream module:
 \begin{equation}\label{eq:insulator}
  \begin{split}
   \schemestart
   $A + B$\arrow{->[$\kappa_1$]}[0,.8]$2B$
   \arrow(@c1.south east--.north east){0}[-90,.25]
   $A^\star + B$\arrow{->[$\kappa_2$]}[0,.8]$A^\star + A$
   \schemestop
  \end{split}
 \end{equation}
 Assume stability is reached and that the species $B$ is not completely consumed. Then, at steady state the concentration level of $A^\star$ is fixed, and the concentration of the ACR species $A$ will converge to its ACR value $\kappa_2 x_{A^\star}/\kappa_1$ regardless any disturbance applied to the production and degradation rate of $A$. In fact, a linear CI as in the statement of Theorem~\ref{thm:lcif} is given by $\phi(x)=x_B/\kappa_1$, and at any time point 
 \begin{equation}\label{eq:insulator_CI}
  \frac{d}{dt}\phi(x(t))=x_B(t)\left(x_A(t)-\frac{\kappa_2}{\kappa_1}x_{A^\star}(t)\right).
 \end{equation}
 We further note that if the dynamics of (\ref{eq:insulator}) occur on a faster time scale than the rest of the system, then a slow manifold is quickly approached were the concentration of the species $A$ is maintained at the level $\kappa_2 x_{A^\star}(t)/\kappa_1$ at any time point. In this case, the module (\ref{eq:insulator}) approximately outputs a multiple of the concentration of $A^\star$ over the whole time line. The multiplicative constant can be tuned through the parameters $\kappa_1$ and $\kappa_2$, as well as the time scale that (\ref{eq:insulator}) operates in. The time scale can be further tuned via the concentration of $B$, as it also follows from \eqref{eq:insulator_CI}. In conclusion, the downstream module receives as input a good approximation of a multiple of the concentration of $A^\star$, and its activity does not affect the upstream module, nor (\ref{eq:insulator}). Moreover, (\ref{eq:insulator}) does not affect the upstream module at all, since the species $A$ appears in (\ref{eq:insulator}) as a catalyst and is not changed in the catalysed reaction. The proposed insulating strategy is illustrated in Figure~\ref{fig:insulation}, and it is applied to an example discussed in \cite{DNS:modular} in Figure~\ref{fig:application}.
\begin{figure*}[h!]
  \begin{center}
    \begin{tcolorbox}[width=\textwidth, colback=blue!5!white,colframe=blue!50!white]
 \begin{minipage}[c]{\textwidth}
  \begin{minipage}[c]{0.45\textwidth}
   \begin{center}
   Isolated systems (with $x_{\overline{A}^\star}(t)=x_{A^\star}(t)$)
   
   \vspace*{2em}
    \begin{tikzpicture}[auto, node distance=1cm,>=latex']
     \node [input] (inputA) {};
     \node [block, right of=inputA, node distance=3.5cm, align=center, text width=3.5cm] (upstream) {\usebox\ReactionBoxA};
     \node [output, right of=upstream, node distance=3.5cm] (outputA) {};
     \node [input, below of=inputA, node distance=1.5cm] (inputB) {};
     \node [block, right of=inputB, node distance=3.5cm, align=center, text width=3.5cm] (downstream) {\usebox\ReactionBoxBU};
     \node [output, right of=downstream, node distance=3.5cm] (outputB) {};
     \draw [->] (inputA) -- node {$u(t)$} (upstream);
     \draw [->] (upstream) -- node {$x_{A^\star}(t)$} (outputA);
     \draw [->] (inputB) -- node {$x_{\overline{A}^\star}(t)$} (downstream);
     \draw [->] (downstream) -- node {$x_C(t)$} (outputB);
    \end{tikzpicture}
   \end{center}
  \end{minipage}\hfill
  \begin{minipage}[c]{0.5\textwidth}
   \includegraphics[width=\textwidth]{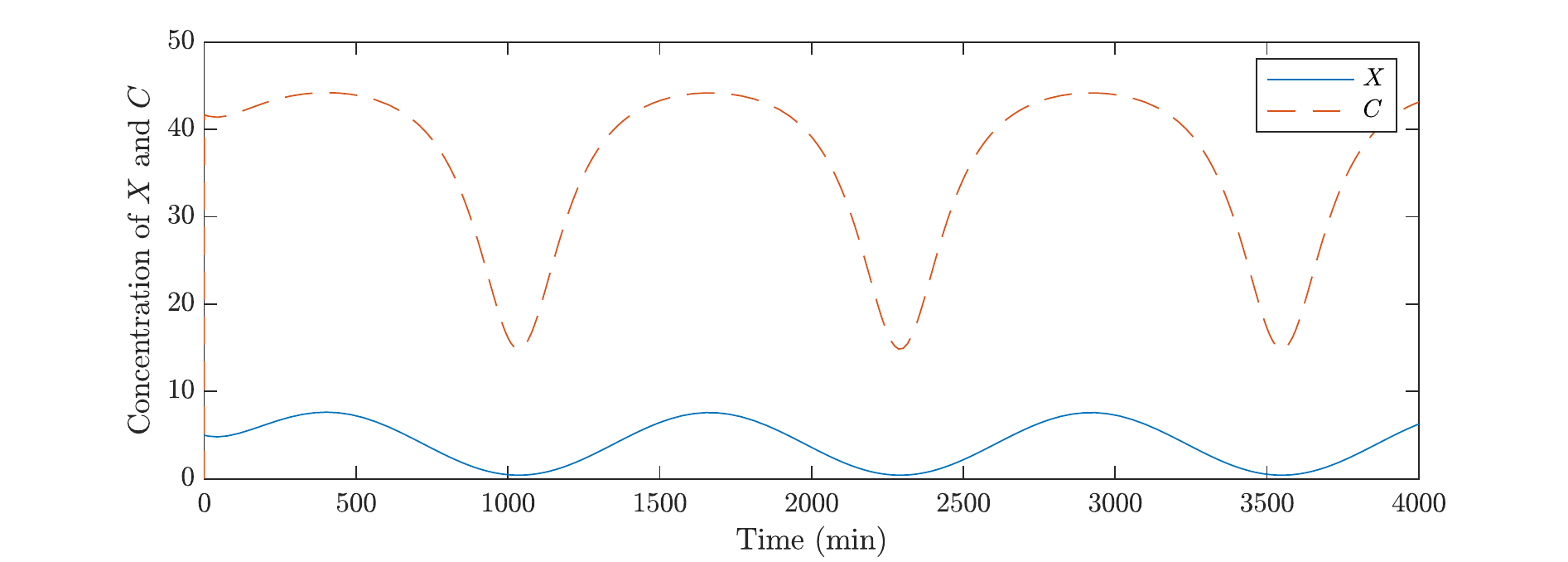}
  \end{minipage}
 \end{minipage}
 \DrawLine{blue!50!white}
 \begin{minipage}[c]{\textwidth}
  \begin{minipage}[c]{0.45\textwidth}
   \begin{center}
  Coupled systems
  
  \vspace*{2em}
   \begin{tikzpicture}[auto, node distance=1cm,>=latex']
     \node [input] (inputA) {};
     \node [block, right of=inputA, node distance=3.5cm, align=center, text width=3.5cm] (upstream) {\usebox\ReactionBoxA};
     \node [block, below of=upstream, node distance=2cm, align=center, text width=3.5cm] (downstream) {\usebox\ReactionBoxB};
     \node [output, right of=downstream, node distance=3.5cm] (outputB) {};
     \draw [->] (inputA) -- node {$u(t)$} (upstream);
     \draw [->] ([xshift=0.2cm]upstream.south) -- node {$x_{A^\star}(t)$} ([xshift=0.2cm]downstream.north);
     \draw [->] ([xshift=-0.2]downstream.north) -- node {} ([xshift=-0.2]upstream.south);
     \draw [->] (downstream) -- node {$x_C(t)$} (outputB);
    \end{tikzpicture}
   \end{center}
  \end{minipage}\hfill
  \begin{minipage}[c]{0.5\textwidth}
   \includegraphics[width=\textwidth]{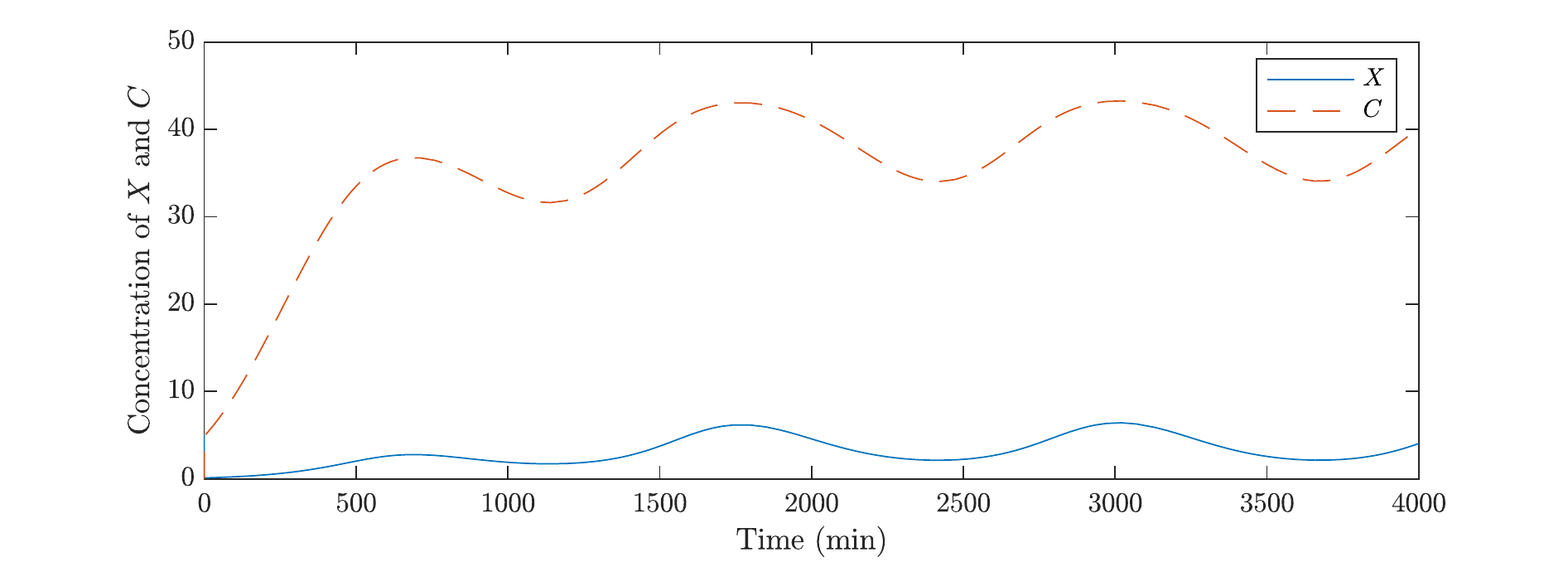}
   
   \includegraphics[width=\textwidth]{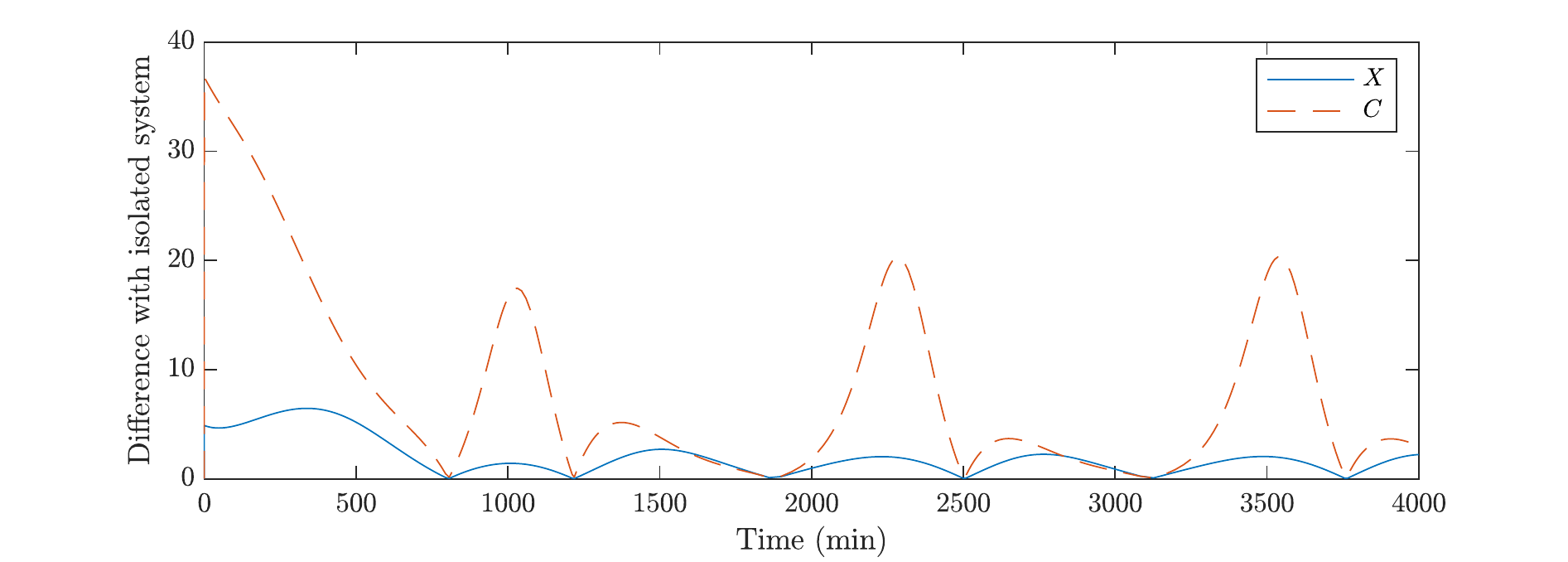}
  \end{minipage}
 \end{minipage}
  \DrawLine{blue!50!white}
 \begin{minipage}[c]{\textwidth}
  \begin{minipage}[c]{0.45\textwidth}
   \begin{center}
  Insulated system
  
  \vspace*{2em}
    \begin{tikzpicture}[auto, node distance=1cm,>=latex']
     \node [input] (inputA) {};
     \node [block, right of=inputA, node distance=3.5cm, align=center, text width=3.5cm] (upstream) {\usebox\ReactionBoxA};
     \node [block, below of=upstream, node distance=2cm, align=center, text width=4cm] (insulator) {\usebox\ReactionBoxinsulator};
     \node [block, below of=insulator, node distance=2cm, align=center, text width=3.5cm] (downstream) {\usebox\ReactionBoxBI};
     \node [output, right of=downstream, node distance=3.5cm] (outputB) {};
     \draw [->] (inputA) -- node {$u(t)$} (upstream);
     \draw [->] (upstream) -- node {$x_{A^\star}(t)$} (insulator);
     \draw [->] ([xshift=0.2cm]insulator.south) -- node {$x_{A}(t)$} ([xshift=0.2cm]downstream.north);
     \draw [->] ([xshift=-0.2]downstream.north) -- node {} ([xshift=-0.2]insulator.south);
     \draw [->] (downstream) -- node {$x_C(t)$} (outputB);
    \end{tikzpicture}
   \end{center}
  \end{minipage}\hfill
  \begin{minipage}[c]{0.5\textwidth}
   \includegraphics[width=\textwidth]{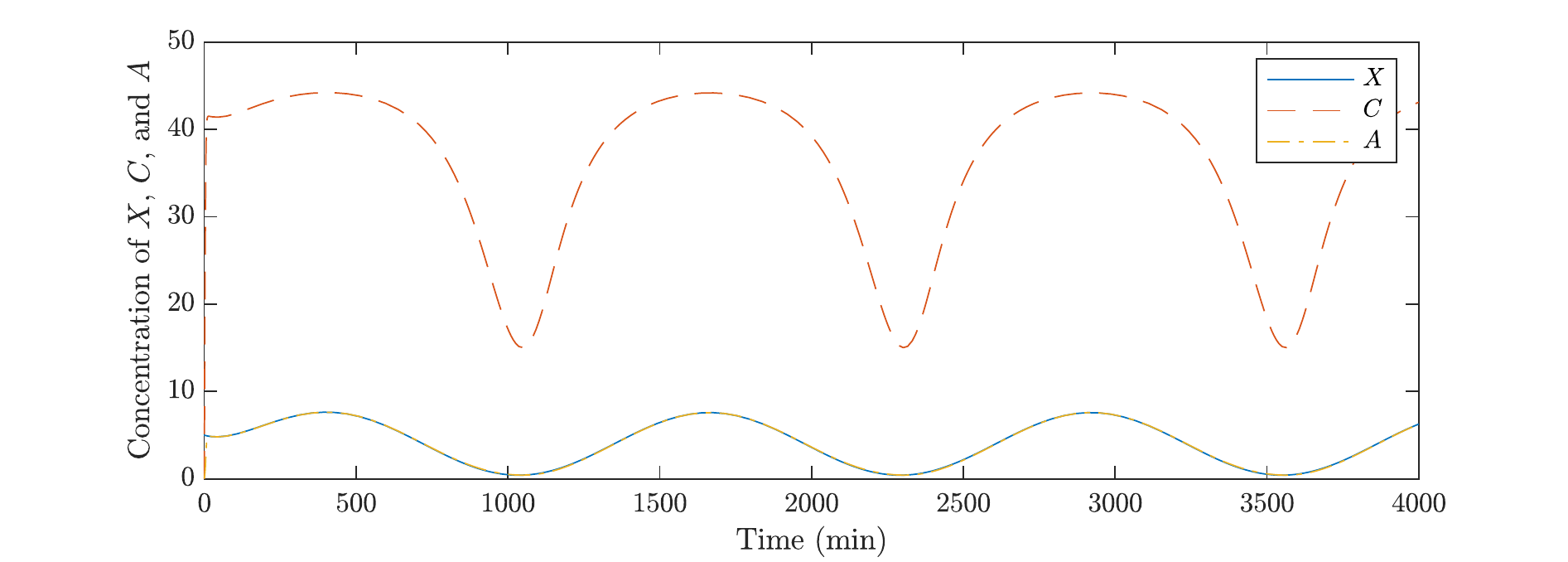}
   
   \includegraphics[width=\textwidth]{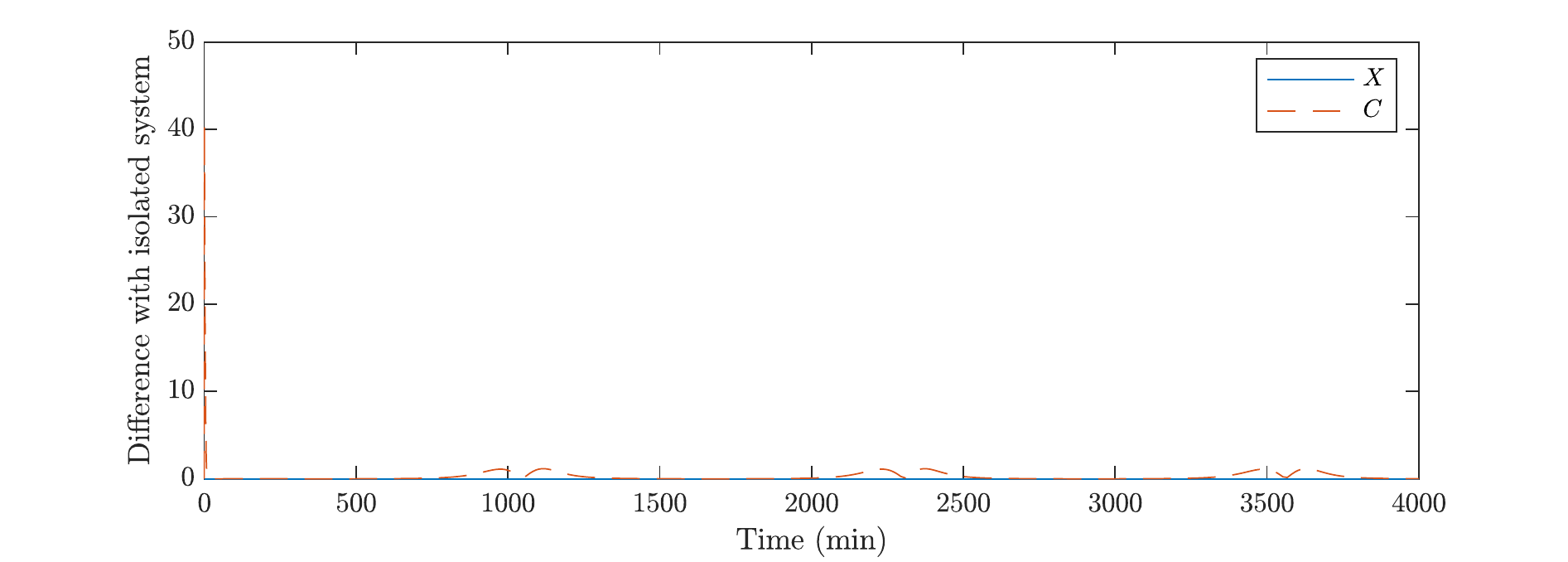}
  \end{minipage}
 \end{minipage}
    \end{tcolorbox}
 \caption{In the example, we let $u(t)=0.04(1+\sin(0.005t))$ and $x_{A^\star}(0)=5$. All the rates are in $1/min$. The ODE solution is calculated in Matlab with \texttt{ode23s}. In the first panel, the two systems are considered in isolation, with the assumption that the concentration of the species $\overline{A}^\star$ is maintained at the same level as the concentration of species $A^\star$, at any time point. In the second panel, the two systems are linked together, and the species $A^\star$ is directly used by the downstream system. The dynamics are completely disrupted by the loading effects, a plot of the absolute value of the difference of the two solutions over time is proposed. In the third panel, the insulator of Figure~\ref{fig:insulation} is utilized, with $x_{A}(0)=0$ and $x_B(0)=20$. The loading effect are practically removed, despite the choice of low rate constants $0.1$. Specifically, the difference of the concentration of $C$ between the solution of insulated system and the solution of the isolated systems spikes quickly to 40, but it decreases to less than 0.5 within 10 minutes, after which is maintained low as illustrated in the second plot of the third panel.}
 \label{fig:application}
  \end{center}
 \end{figure*}
 
 \subsection{Inclusion in larger systems}\label{sec:large}
 
 In the previous section, we have seen how the stability properties of an ACR system can be transferred, in a sense, to a larger model including further chemical transformations and external inputs. As a particular case, Here, we explicitly state when the property of absolute concentration robustness can be lifted from portions of the biochemical system to the whole system. As a consequence, we further extend the set of sufficient conditions of Theorem~\ref{thm:feinberg_weak} that imply the existence of an ACR species. Before stating the relevant result, which is a consequence of Theorem~\ref{thm:lcif}, we need a definition. Given a reaction system $\Sy$, we say that $\Sy'$ is a sub-system of $\Sy$ if it can be obtained from $\Sy$ by canceling some reactions, and if the choice of rate functions for the remaining reactions is maintained. Moreover, if $\Sy$ and $\Sy'$ have $d$ and $d'$ species, respectively, we let $\pi:\RR^d\to\RR^{d'}$ be the projection onto the species of $\Sy'$. The following holds:
 \begin{restatable}{corollary}{thmextension}\label{thm:extension}
  Consider a reaction system $\Sy$, and let $\Sy'$ be a sub-system. Assume that $\Sy'$ is a mass-action system with two complexes $\pi(y_i)$ and $\pi(y_j)$ only differing in the entry relative to the species $X$, and for which $\hat\Gamma_{ij}(\kappa)$ is non-empty. Moreover, assume there exists $\hat\gamma\in\hat\Gamma_{ij}(\kappa)$ such that $\pi(y_l-y_k)$ is orthogonal to $\hat\gamma$ for all $y_k\to y_l$ that are reactions of $\Sy$ but not reactions of $\Sy'$.
  Then, the $X$ is ACR for both $\Sy$ and $\Sy'$, with the same ACR value.
 \end{restatable}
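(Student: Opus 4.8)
The half of the statement concerning $\Sy'$ is immediate and needs no new argument: $\Sy'$ is a mass-action system whose complexes $\pi(y_i),\pi(y_j)$ differ only in the entry of $X$ and for which $\hat\Gamma_{ij}(\kappa)$ is nonempty, so Theorem~\ref{thm:lcif} already gives that either $\Sy'$ has no positive steady state or $X$ is ACR in $\Sy'$ with some value $q$, together with an explicit linear constrained integrator $\phi(z)=\scal{\hat\gamma}{z}$ of $\Sy'$. All the work lies in showing that every positive steady state of the larger system $\Sy$ also pins $X$ to that same $q$. The plan is to pull $\phi$ back along the projection $\pi$ to a linear function on $\RR^d$ and to check that the reactions of $\Sy$ that are not reactions of $\Sy'$ contribute nothing to its time derivative, so that the constrained-integrator identity valid for $\Sy'$ is inherited verbatim by $\Sy$.

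Concretely, I would fix the distinguished $\hat\gamma\in\hat\Gamma_{ij}(\kappa)$ from the hypothesis, let $q$ be the value Theorem~\ref{thm:lcif} attaches to it, and set $m=(\pi(y_j)-\pi(y_i))_X\neq0$. Take $\hat\gamma^{\Sy}\in\RR^d$ to be $\hat\gamma$ padded with zeros on the coordinates of species of $\Sy$ absent from $\Sy'$, so that $\scal{\hat\gamma^{\Sy}}{v}=\scal{\hat\gamma}{\pi(v)}$ for all $v$, and set $\Phi(x)=\scal{\hat\gamma^{\Sy}}{x}$. Differentiating $\Phi$ along a solution $x(\cdot)$ of $\Sy$ and splitting the sum in \eqref{eq:Gmodel} into reactions of $\Sy'$ and reactions of $\Sy$ that are not reactions of $\Sy'$ gives
\begin{equation*}
\frac{d}{dt}\Phi(x(t))=\sum_{y_k\to y_l\text{ in }\Sy'}\scal{\hat\gamma}{\pi(y_l-y_k)}\,\lambda_{kl}(x(t),t)+\sum_{\substack{y_k\to y_l\text{ in }\Sy\\ \text{but not in }\Sy'}}\scal{\hat\gamma}{\pi(y_l-y_k)}\,\lambda_{kl}(x(t),t).
\end{equation*}
The second sum vanishes identically because each $\pi(y_l-y_k)$ there is orthogonal to $\hat\gamma$ by hypothesis. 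For the first sum I would use that every complex of $\Sy'$ involves only species of $\Sy'$, so each of its propensities is mass-action and depends on $x$ only through $\pi(x)$; hence the first sum equals the $\Sy'$ vector field evaluated at $\pi(x(t))$, paired with $\hat\gamma$, which by the defining identity \eqref{eq:explanation} of $\hat\Gamma_{ij}(\kappa)$ for $\Sy'$ is $\Lambda_i(\pi(x(t)))\bigl(x_X(t)^{m}-q^{m}\bigr)$, where $\Lambda_i$ is the $\Sy'$-monomial of the complex $\pi(y_i)$ (using $x_X=\pi(x)_X$ and $\Lambda_j(z)=\Lambda_i(z)\,z_X^{m}$). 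Thus $\Phi$ is a linear constrained integrator for $\Sy$ with the same derivative formula.

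It then remains to read off the conclusion: evaluating $\frac{d}{dt}\Phi(x(t))=\Lambda_i(\pi(x(t)))(x_X(t)^{m}-q^{m})$ on the constant solution $x\equiv\bar x$ at a positive steady state $\bar x$ of $\Sy$ gives $0=\Lambda_i(\pi(\bar x))(\bar x_X^{m}-q^{m})$; since $\pi(\bar x)>0$ the monomial $\Lambda_i(\pi(\bar x))$ is strictly positive, so $\bar x_X^{m}=q^{m}$, and $m\neq0$ with $\bar x_X,q>0$ forces $\bar x_X=q$, i.e.\ $X$ is ACR for $\Sy$ with the same ACR value $q$. The step I expect to be the main obstacle — really the only place that is not pure bookkeeping about $\pi$ — is the middle one: making sure that the reactions kept from $\Sy'$ genuinely retain their mass-action form with monomials depending only on the $\Sy'$-coordinates, which is exactly what allows the algebraic identity \eqref{eq:explanation}, proved for $\Sy'$, to be reused unchanged inside the larger and possibly non-mass-action system $\Sy$.
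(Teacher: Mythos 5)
Your proposal is correct and follows essentially the same route as the paper's proof of the stronger Corollary~\ref{thm:extension_app}: split the vector field of $\Sy$ into the $\Sy'$ part plus a ``disturbance'' from the reactions in $\Rc\setminus\Rc'$, kill the disturbance term by the orthogonality hypothesis, and evaluate the resulting constrained-integrator identity (from Theorem~\ref{thm:lcif}) at a positive steady state of $\Sy$. The point you flag as the main obstacle --- that the retained reactions keep their mass-action form in the $\Sy'$-coordinates --- is indeed exactly what the paper's definition of sub-system is engineered to guarantee, so no gap remains.
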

 The proof of a stronger result is in the Supplementary Material. Here we illustrate how the corollary can be applied in the case of EnvZ-OmpR signaling system: consider the reaction system $\Sy$ described in Figure~\ref{fig:larger}, which includes the EnvZ-OmpR osmoregulation system described in Figure~\ref{fig:envz} as a sub-system. We assume that a protein can misfold when the phosphoryl group is transferred from EnvZ to OmpR. Such misfold can be corrected by chaperons, which are independently produced and degraded through a mechanism which we assume unkown, but that does not involve EnvZ or OmpR proteins. We also allow for an arbitrary and persistent external control on the expression level of EnvZ sensor-transmitter protein. Finally, we consider the utilization of OmpR-P as transcription regulatory protein of the outer membrane porins OmpF and OmpC. For our purposes, we assume the details of the transcription mechanism are not known, but that only the protein Ompr-P is involved in the process. As previously done, let the complexes of the EnvZ-OmpR osmoregulation system be ordered from left to right and from top to bottom, such that EnvZ-D and EnvZ-D+OmpR-P are the 1st and the 8th complex, respectively. Also, let the species be ordered according to their appearance from left to right and from top to bottom, as EnvZ-D, EnvZ, EnvZ-T, EnvZ-P, OmpR, EnvZ-OmpR-P, OmpR-P, EnvZ-OmpR-D-P. In particular, Envz is the second species, EnvZ-OmpR-P is the sixth species, and OmprR-P is the seventh species. It follows from \eqref{eq:calculation_ex_hat} that, by choosing $w_1=-\xi_2$ and $w_2=-\xi_7=0$, a vector $\hat\zeta$ is in $\hat\Gamma_{18}(\kappa)$ with:
 \begin{enumerate}
  \item\label{1} $\hat\zeta_2=\hat\xi_2+w_1=0$;
  \item\label{2} $\hat\zeta_6-\hat\zeta_2=\hat\xi_6+w_1+w_2-\hat\xi_2-w_1=\hat\xi_6-\hat\xi_2=0$;
  \item\label{3} $\hat\zeta_7=\hat\xi_7+w_2=0$.
 \end{enumerate}
 Denote by $E_k$ the vector of $\RR^d$ with the $k$th entry equal to 1 and the other entries equal to zero. The following holds.
 \begin{description}[font=\rm\itshape]
  \item[Misfolding of OmpR-P.] The projection of the difference between EnvZ + OmpR$^\star$-P and EnvZ-OmpR-P onto the species of the EnvZ-OmpR signaling system is $E_2-E_6$, which is orthogonal to $\hat\zeta$ by \ref{2}. The projection of the difference between OmpR-P + C and OmpR$^\star$-P + C is $E_7$, which is orthogonal to $\hat\zeta$ by \ref{3}.
  \item[Production and degradation of chaperons.] By assumption, any chemical reaction $y\to y'$ involved in the production and degradation of chaperones does not consume or produce any chemical species of the EnvZ-OmpR signaling system. Hence, $\pi(y'-y)=0$, which is orthogonal to $\hat\zeta$.
  \item[External regulation of EnvZ.] The difference between EnvZ and 0 is $\pm E_2$, which is orthogonal to $\hat\zeta$ by \ref{1}
  \item[Transcription of OmpC and OmpF.] We assume that the transcription only involves the species OmpR-P, out of all the species in the EnvZ-OmpR osmoregulation system. Hence, for all the reactions $y\to y'$ involved in the transcription, either $\pi(y'-y)=0$ or $\pi(y'-y)$ is a multiple of $E_7$. In either case, $\pi(y'-y)$ is orthogonal to $\hat\zeta$.
 \end{description}
 It follows from Corollary~\ref{thm:extension} that the species OmpR-P is ACR in the reaction system of Figure~\ref{fig:larger}. Moreover, its ACR value is still given by \eqref{eq:calculation_ex}, as long as a positive steady state exists. Note that Corollary~\ref{thm:extension} could be applied even if not all chemical reactions are known, and even if the model is not mass-action. It is also worth noting that the deficiency of the model is not known, due to the lack of information on the precise reactions constituting the network, but is certainly greater than 1. Indeed, the deficiency of the sub-system constituted by the EnvZ-OmpR osmoregulation system and by the misfolding of OmpR-P is 2, and the deficiency of a system is necessarily greater than or equal to the deficiency of any sub-system \cite[Lemma 5]{CW:poisson}.

  \begin{figure}[h!]
\begin{center}
 \begin{tcolorbox}[colback=olive!5!white,colframe=olive!50!white]
  EnvZ-OmpR osmoregulation system, as described in Figure~\ref{fig:envz} 
  
 \DrawLine{olive!50!white}
  Misfolding of sensory response protein OmpR-\Ph, and recovery through the action of chaperones:
$$
\schemestart
EnvZ-OmpR-\Ph \arrow{->[*0{$\kappa_{12}$}]} EnvZ \+ {\color{teal} OmpR$^\star$}-\Ph
\arrow(@c1.south east--.north east){0}[-90,.3]
{\color{teal} OmpR$^\star$}-\Ph \+ C \arrow{->[*0{$\kappa_{13}$}]} OmpR-\Ph \+ C
\schemestop
$$
\DrawLine{olive!50!white}
  Production and degradation of chaperones:
\begin{center}
\begin{tikzpicture}
 \node[draw, fill=white, shape = ellipse] (chap_reg) {Gene regulatory network};
 \node[right of = chap_reg, node distance=11em] (C) {C};
 \node[right of = C, node distance=4em] (0) {0};
 \draw [->, dashed] (chap_reg) -- (C);
 \draw [->] (C) -- node[above]{$u_C(t)$} (0);
\end{tikzpicture}
\end{center}
\DrawLine{olive!50!white}
  External regulation of the sensor-transmitter protein EnvZ:
$$
\schemestart
0 \arrow{<=>[*0{$u_1(t)$}][*0{$u_2(t)$}]} EnvZ 
 \schemestop
$$
\DrawLine{olive!50!white}
  Transcription of OmpC and OmpF:
\begin{center}
\begin{tikzpicture}
 \node[draw, fill=white, shape = ellipse] (OmpR) {OmpR-\Ph\ binds to DNA promoter};
 \node[below of = OmpR, node distance = 2em] (foo) {};
 \node[right of = foo, node distance = 8em] (OmpC) {OmpC};
 \node[left of = foo, node distance = 8em] (OmpF) {OmpF};
 \draw [->, dashed] (OmpR) -- (OmpC);
 \draw [->, dashed] (OmpR) -- (OmpF);
\end{tikzpicture}
\end{center}
 \end{tcolorbox}
 \caption{Reaction system including the EnvZ-OmpR osmoregulation system depicted in Figure~\ref{fig:envz} as a sub-system. Parts of the system depicted here are unknown, specifically no model is given for the production of chaperons or for the transcription of the outer membrane porins OmpF and OmpC.}
 \label{fig:larger}
 \end{center}
\end{figure}
 
 \section{Discussion}
 
 We have shown in Theorem~\ref{thm:lcif} that a linear CI always exists for a family of ACR systems, and that this family strictly includes the models studied in \cite{SF:ACR}. The result, a more general version of which is proven in the Supplementary Material, has three main consequences: first, it provides an easy way to calculate the ACR value of ACR species. Secondly, as expressed in Theorem~\ref{thm:control}, the presence of a linear CI implies that the system is robust to arbitrary disturbances that do not vanish over time, under certain conditions. This fact can be naturally exploited to design perfect insulators, which are able to reject loading effects originated from the downstream modules. Finally, as expressed in Corollary~\ref{thm:extension}, we are able to prove that, under certain conditions, the absolute concentration robustness of a portion of a system can be lifted to the whole model, and the ACR value of the ACR species remains unchanged.
 
 The theory we developed opens the path for future research directions. First, efficient algorithms can be designed in order to check for the existence of portions of the systems that confer absolute concentration robustness to the whole system. To this aim, the connections of $\hat\Gamma_{ij}(\kappa)$ with structural properties of the network which we show in the Supplementary Material can be useful, and theoretical results can be expanded in this direction. Second, further detailed analysis on when stability can be ensured would be welcome. Currently, in the statement of Theorem~\ref{thm:control} we cannot exclude the possibility that some species is completely consumed or indefinitely produced upon tampering with the model. Finding structural conditions able to eliminate this possibility would be a nice and useful contribution.
 
 As a final remark, we think the study of stochastically modeled systems that satisfy the assumptions of Theorem~\ref{thm:lcif} would be interesting and fruitful. Stochastic models of reaction systems are tipycally used when few molecules of certain chemical species are available \cite{AK:book, ET:book}. It is proven in \cite{AEJ:ACR} that systems satisfying the assumptions of Theorem~\ref{thm:feinberg_weak}, when stochastically modeled, undergo an extinction event almost surely. As a consequence, the desirable robustness properties of the ACR systems studied in \cite{SF:ACR} are completely destroyed in a low molecule copy-number regime. As an example, the model depicted in (\ref{eq:toymodel}) undergoes an almost sure extinction of the chemical species $B$ when stochastically modeled, regardless the initial conditions. This is caused by the fact that all the molecules of $B$ can be consumed by the reaction $B\to A$, before the occurrence of a reaction $A+B\to 2B$. Robustness at finite time intervals of some stochastically modeled ACR systems is recovered, but only in a multiscale limit sense \cite{ACK:ACR}. Moreover, it is shown in \cite{AC:ACR} that absolute concentration robustness of the deterministic model does not necessarily imply an extinction event in the corresponding stochastic model, but the connection is still largely unexplored. The results developed in the present paper can help in this direction: consider again (\ref{eq:toymodel}). The extinction of species $B$ cannot occur if production of $B$ is included in the model as in (\ref{eq:toymodel_modified}), or as in
 \begin{equation}\label{eq:toymodel_modified_2}
 \begin{split}
  \schemestart
   $A+B$\arrow{->[$\kappa_1$]}$2B$
   \arrow(@c1.south east--.north east){0}[-90,.25]
   $B$\arrow{->[$\kappa_2$]}$A$
   \arrow(@c3.south east--.north east){0}[-90,.25]
   $0$\arrow{<=>[$\kappa_3$][$\kappa_4$]}$B$
  \schemestop
  \end{split}
 \end{equation}
 At the same time, it follows from Theorem~\ref{thm:control} that the stability properties of the species $A$ are maintained both in (\ref{eq:toymodel_modified}) and in (\ref{eq:toymodel_modified_2}), when deterministically modeled. In particular, the concentration of the species $A$ still converges to the value $\kappa_2/\kappa_1$. It would be interesting to study if in this and in similar cases some form of absolute concentration robustness arise in the long-term dynamics of the stochastic models as well.
 
 \section*{Acknowledgement}
 This project has received funding from the European Research Council (ERC) under the European Union’s Horizon 2020 research and innovation programme grant agreement no. 743269 (CyberGenetics project).
 
 \bibliographystyle{plain}
 \bibliography{bib}

\newpage
\begin{center}
    {\bfseries\Large Supplementary Material}
\end{center}
\appendix
\renewcommand{\thesection}{\Alph{section}}
\setcounter{equation}{0}
\numberwithin{equation}{section}
\renewcommand{\theequation}{\thesection.\arabic{equation}}

\section{Notation}\label{sec:notation}

\subsection{General notation}

We will denote by $\RR$, $\RR_{>0}$, and $\RR_{\geq0}$ the real, positive real, and non-negative real numbers, respectively. Similarly, we will denote by $\ZZ$, $\ZZ_{>0}$, and $\ZZ_{\geq0}$ the integer, positive integer, and non-negative integer numbers, respectively. Given a real number $a$, we will denote its absolute value by $|a|$.

We denote by $e_i$ the vector that has 1 in its $i$th entry and 0 in all other entries. The dimension of such vector will be clear from the context, and if not it will be made explicit. Given two real vectors $v$, $w$ of the same length $n$, we denote their scalar product by $\scal{v}{w}$, and we use the shorthand notation
$$v^w=\prod_{i=1}^n v_i^{w_i},$$
where $0^0$ is considered to be 1. We will denote the euclidean norm of $v$ by $\|v\|$.

Given two subsets $V, W\subseteq\RR^n$, we denote their direct sum by
$$V\oplus W=\{v+w\,:\,v\in V, w\in W\}.$$
Moreover, given a vector $v\in\RR^n$, we define
$$v+W=\{v+w\,:\,w\in W\}.$$

\subsection{Reaction network terminology}

\subsubsection{Standard notation}
A reaction network $\G$ is a triple $\{\Sp, \Cx,\Rc\}$, where
\begin{itemize}
 \item $\Sp$ is a finite ordered set of $d$ different symbols, called \emph{species};
 \item $\Cx$ is a finite ordered set of $m$ linear combinations of species with non-negative integer coefficients, referred to as \emph{complexes} and identified with vectors in $\ZZ_{\geq0}^d$;
 \item $\Rc$ is a finite ordered set of elements of $\Cx\times\Cx$, referred to as \emph{reactions}.
\end{itemize}
 We will denote by $X_i$ the $i$th species of $\Sp$, and by $y_j$ the $j$th complex of $\Cx$, for $1\leq i\leq d$ and $1\leq j\leq m$. A reaction $(y_i,y_j)$ will be denoted by $y_i\to y_j$, and for any $y_i\in\Cx$ we assume there is no reaction of the form $y_i\to y_i$.

As mentioned, a complex $y_i$ can be regarded as a vector of $\ZZ^d_{\geq0}$. Specifically, this is done by considering the $j$th entry $y_{ij}$ as the coefficient of $y_i$ relative to the $j$th species. In this regards, for any vector $v\in\RR^d$ we will denote by $\supp v$ the subset of species such that
$$X_i\in\supp v\text{ if and only if }v_i\neq0.$$
Similarly, for any vector $w\in\RR^m$ we will denote by $\supp w$ the subset of complexes such that
$$y_i\in\supp v\text{ if and only if }v_i\neq0.$$

We denote by
$$\St=\spann_{\RR}\{y_j-y_i\,:\,y_i\to y_j\in\Rc\},\qquad \Cons=\{h\in\RR^d\,:\,\scal{h}{y_j-y_i}=0\text{ for all }y_i\to y_j\in\Rc\}.$$
The subspace $\St$ is called \emph{stoichiometric subspace}, and the elements of $\Cons$ are called \emph{conservation laws}.

A choice of kinetics for a reaction network is a set of (time-dependent) \emph{rate functions} $\lambda_{ij}\colon\RR^{d}_{\geq0}\times\RR_{\geq0}\to \RR_{\geq0}$ for all $1\leq i,j\leq m$, such that $\lambda_{ij}$ is a zero function if and only if $y_i\to y_j\notin\Rc$. A reaction network with a choice of kinetics $\Sy=(\Sp, \Cx, \Rc,\{\lambda_{ij}\}_{1\leq i,j\leq m})$ is termed \emph{reaction system}. A reaction system is associated with the system of differential equations
$$\frac{d}{dt}x(t)=\sum_{1\leq i,j\leq m}(y_j-y_i) \lambda_{ij}(x(t),t)\quad \text{for all }t\in\RR_{\geq0}.$$
We note that rate functions are commonly intended to not depend on time, but we consider this more general setting in this paper.

A reaction system is called \emph{mass-action system}, and denoted by $\Sy=(\Sp, \Cx, \Rc,\kappa)$, if there is a matrix $\kappa\in\RR^{m\times m}_{\geq0}$ such that
$$\lambda_{ij}(x,t)=\kappa_{ij}x^{y_i}\quad\text{for all }1\leq i,j\leq m, x\in\RR^{d}_{\geq0}.$$
In this case, the constants $\kappa_{ij}$ are termed \emph{rate constants}. 
Define $\Lambda(x)\in\RR^{d}_{\geq0}$ by
$$\Lambda_i(x)=x^{y_i}\quad\text{for all }1\leq i\leq m, x\in\RR^{d}_{\geq0},$$
and define the $m\times m$ matrix $A(\kappa)$ by
$$A(\kappa)_{ij}=\begin{cases}
                 \kappa_{ji}&\text{if }i\neq j\\
                 -\sum_{l=1}^m \kappa_{il} &\text{if }i=j
                \end{cases}
$$
Finally, let $Y$ be the $d\times m$ matrix $Y$ with entries $Y_{ij}=y_{ji}$. Then, for mass-action sytems we have
$$\frac{d}{dt}x(t)=\sum_{1\leq i,j\leq m}(y_j-y_i) \kappa_{ij}x(t)^{y_i}=Y A(\kappa) \Lambda(x(t))\quad\text{for all }t\in\RR_{\geq0}.$$

The directed graph $\{\Cx, \Rc\}$ is called \emph{reaction graph}. Reaction systems are often presented through the reaction graph, where indication on the reaction rates are written on top of the arrows that correspond to the related reaction. Specifically, the arrow corresponding to $y_i\to y_j$ is labeled with
$$\frac{\lambda_{ij}(x,t)}{x^{y}},$$
which corresponds to the rate constants for reaction rates of mass-action type. As an example, see \eqref{eq:toymodel} and \eqref{eq:toymodel_modified_graph} in the main text, or Figure~\ref{fig:envz}. We denote by $\ell$ the number of connected components of the reaction graph. We define the \emph{deficiency} as the number
$$\delta=m-\ell-\dim \St.$$
A \emph{terminal component} of the reaction graph is a set of complexes $\mathcal{T}\subseteq\Cx$, such that
\begin{itemize}
 \item if $y\in\mathcal{T}$ and there is a directed path in the reaction graph from $y$ to another complex $y'$, then $y'\in\mathcal{T}$;
 \item for any two complexes $y,y'\in\mathcal{T}$ with $y\neq y'$, there is a directed path  from $y$ to $y'$, and a directed path from $y'$ to $y$.
\end{itemize}
Let $\tau$ be the number of terminal components, and denote by $\mathcal{T}^1, \mathcal{T}^2, \dots, \mathcal{T}^\tau$ the different terminal components of the network. Since each connected component contains at least one terminal component, we have $\tau\geq\ell$. We say a complex is \emph{terminal} if it is contained in a terminal component, and we say that a complex is \emph{non-terminal} otherwise. As an example, the terminal components in \eqref{eq:toymodel} are $\{2B\}$ and $\{A\}$, hence the terminal complexes are $2B$ and $A$.
 
\subsection{Absolute concentration robustness}

\begin{definition}
 Consider a reaction system $\Sy=(\Sp, \Cx, \Rc, \{\lambda_{ij}\}_{1\leq i,j\leq m})$. A species $X_n\in\Sp$ is said to be \emph{absolutely concentration robust} (ACR) if there exists $q\in\RR_{>0}$ such that $c_n=q$ for all positive steady state $c$ of $\Sy$. In this case, $q$ is referred to as the \emph{ACR value} of $X_n$. Finally, if an ACR species exists, then the reaction system $\Sy$ is said to be \emph{ACR}.
\end{definition}

\begin{remark}\label{rem:degenerate_ACR}
 By definition, if a reaction system has no positive steady states, or a unique one, then all the species are ACR. We can call this a \emph{degenerate case}, since, especially if no positive steady states exist, the concept of robustness is lost.
\end{remark}

\subsection{Control Theory terminology}

Consider a differential equation of the form
$$ \frac{d}{dt} x(t)=f(x(t))+g(x(t), u(t))\quad\text{for all }t\in\RR_{\geq0},$$
where $x\colon\RR_{\geq0}\to \RR^d$ and $u\colon\RR_{\geq0}\to\RR^{n_u}$ for some $n_u\in\ZZ_{>0}$, and $f$ and $g$ are differentiable functions. The function $u$ is called the \emph{input of the system}. Further, we define \emph{output of the system} the quantity $z(t)=a(x(t))$, for some differentiable function $a$, with $a\colon\RR^{d}\to\RR^{n_z}$ and $n_z\in\ZZ_{>0}$. Usually, one needs to find an appropriate function $u$ such that $z$ is close to a desired level $\overline{z}\in\RR^{n_z}$, either on average or for $t\to\infty$. To this aim, the existence of a function $\phi\colon \RR^{n_x}\to\RR$ such that
$$\frac{d}{dt}\phi(x(t))=z(t)-\overline{z}$$
is of high importance, an is called an \emph{integral feedback} (IF) \cite{doyle:feedback, astrom:feedback}. If a function $\tilde\phi\colon \RR^{n_x}\to\RR$ satisfies
$$\frac{d}{dt}\tilde\phi(x(t))=r(x(t))\Big(z(t)-\overline{z}\Big)$$
for some differentiable function $r\colon \RR^{n_x}\to\RR$, then $\tilde\phi$ is called a \emph{constrained integral feedback} (CIF) \cite{XD:robust}.

In the setting of ACR species, $z(t)$ is usually the concentration of the ACR species at time $t$, and $\overline{z}$ is its ACR value.

\section{Known results}

The following result appears in the Appendix of \cite{FH:1977}.
\begin{theorem}\label{thm:support}
 All the vectors in $\ker A(\kappa)$ have support in the terminal complexes. Specifically, there exists a basis $\{\chi^1(\kappa), \chi^2(\kappa), \dots, \chi^\tau(\kappa)\}$ of $\ker A(\kappa)$ such that
 $\supp \chi^i(\kappa)=\mathcal{T}^i$ for all $1\leq i\leq m$.
\end{theorem}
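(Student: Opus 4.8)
The plan is to exploit the block structure that terminality forces on $A(\kappa)$. I would order the complexes so that the terminal ones come first, grouped by terminal component $\mathcal{T}^1,\dots,\mathcal{T}^\tau$, and the non-terminal complexes (call this set $\mathcal{N}$) come last. Since no reaction can leave a terminal component, $\kappa_{ji}=0$ whenever $y_j$ is terminal and $y_i$ lies outside the terminal component of $y_j$; recalling that $A(\kappa)_{ij}=\kappa_{ji}$ for $i\neq j$, this makes the block of $A(\kappa)$ with non-terminal row indices and terminal column indices vanish, and it makes the block with terminal row and column indices block-diagonal, with blocks $A^{(k)}:=A(\kappa)[\mathcal{T}^k,\mathcal{T}^k]$. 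Thus in this ordering
\[
A(\kappa)=\begin{pmatrix}
A^{(1)} & & & \ast\\
 & \ddots & & \vdots\\
 & & A^{(\tau)} & \ast\\
0 & \cdots & 0 & A_{\mathcal{N}\mathcal{N}}
\end{pmatrix},\qquad A_{\mathcal{N}\mathcal{N}}:=A(\kappa)[\mathcal{N},\mathcal{N}].
\]
Moreover, because no reaction leaves $\mathcal{T}^k$, the diagonal entries of $A^{(k)}$ involve only rate constants internal to $\mathcal{T}^k$, so $A^{(k)}$ is precisely the analogue of $A(\kappa)$ for the strongly connected sub-network on the complexes of $\mathcal{T}^k$.

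Next I would construct the basis. For each $k$ I would show that $\ker A^{(k)}$ is one-dimensional, spanned by a vector strictly positive on all of $\mathcal{T}^k$: the column sums of $A^{(k)}$ vanish and $\mathcal{T}^k$ is strongly connected, so for large $c$ the matrix $A^{(k)}+cI$ is nonnegative and irreducible with Perron root $c$, and Perron--Frobenius provides an essentially unique, strictly positive right eigenvector for $c$, that is, a strictly positive generator of $\ker A^{(k)}$; alternatively the directed Matrix--Tree theorem expresses the entries of this generator as sums over spanning arborescences, one rooted at each complex of $\mathcal{T}^k$, all nonempty by strong connectivity. Extending this generator by zeros to length $m$ produces $\chi^k(\kappa)\in\ker A(\kappa)$ with $\supp\chi^k(\kappa)=\mathcal{T}^k$: the equations of $A(\kappa)\chi^k(\kappa)=0$ indexed by $\mathcal{T}^k$ hold by construction, while an equation indexed by a complex outside $\mathcal{T}^k$ only pairs the support of $\chi^k(\kappa)$ with coefficients $\kappa_{ji}$ of reactions leaving $\mathcal{T}^k$, which vanish. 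Having pairwise disjoint supports, $\chi^1(\kappa),\dots,\chi^\tau(\kappa)$ are linearly independent, so $\dim\ker A(\kappa)\ge\tau$.

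It then remains to prove $\dim\ker A(\kappa)\le\tau$, equivalently that $A_{\mathcal{N}\mathcal{N}}$ is invertible. Granting this, the block form forces $v|_{\mathcal{N}}=0$ for every $v\in\ker A(\kappa)$, after which $A^{(k)}v|_{\mathcal{T}^k}=0$ makes $v|_{\mathcal{T}^k}$ a scalar multiple of $\chi^k(\kappa)|_{\mathcal{T}^k}$, so $v\in\spann\{\chi^1(\kappa),\dots,\chi^\tau(\kappa)\}$ and in particular $\supp v\subseteq\mathcal{T}^1\cup\dots\cup\mathcal{T}^\tau$, which also yields the first assertion of the theorem. I expect the invertibility of $A_{\mathcal{N}\mathcal{N}}$ --- the statement that a non-terminal complex cannot carry mass in a stationary measure --- to be the main obstacle. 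I would establish it either via $M$-matrix theory ($-A_{\mathcal{N}\mathcal{N}}^{\top}$ is a $Z$-matrix with nonpositive column sums, strictly column-diagonally dominant exactly at the non-terminal complexes admitting a reaction directly into a terminal component, and every non-terminal complex reaches a terminal component along a directed path whose last non-terminal vertex has such a reaction, so the usual irreducible-diagonal-dominance nonsingularity criterion applies), or probabilistically ($-A_{\mathcal{N}\mathcal{N}}$ generates the continuous-time Markov chain on $\mathcal{N}$ killed upon exiting $\mathcal{N}$, which is absorbed in finite time almost surely, whence $-A_{\mathcal{N}\mathcal{N}}^{-1}=\int_0^\infty e^{tA_{\mathcal{N}\mathcal{N}}}\,dt$ converges). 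A fully elementary alternative is induction on $|\mathcal{N}|$: peel off a non-terminal complex from which some reaction exits $\mathcal{N}$, and check directly that any kernel vector must vanish there.
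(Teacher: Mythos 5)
Your proof is correct, but there is nothing in the paper to compare it against: the authors do not prove this statement, they import it as a known result from the appendix of Feinberg and Horn (1977) (their citation \cite{FH:1977}). Your argument is a sound, self-contained reconstruction of that classical fact. The block decomposition is right: since no reaction exits a terminal component, the columns of $A(\kappa)$ indexed by $\mathcal{T}^k$ have support in the rows indexed by $\mathcal{T}^k$, giving the block-triangular form you write, and each $A^{(k)}$ is the Laplacian of an irreducible subnetwork with zero column sums, so Perron--Frobenius (or the Matrix--Tree theorem) yields a strictly positive one-dimensional kernel; the zero-extensions are in $\ker A(\kappa)$ and are independent, giving $\dim\ker A(\kappa)\geq\tau$. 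You correctly identify the invertibility of $A_{\mathcal{N}\mathcal{N}}$ as the remaining point, and each of your three routes closes it; note only that every complex with no outgoing reaction is by definition terminal, so every non-terminal complex does have positive total outflow and does reach a terminal component (the condensation of a finite digraph is acyclic, so every vertex reaches a sink strongly connected component, and these are exactly the terminal components) --- this is the fact your chaining argument and your killed-Markov-chain argument both rest on, and it is worth stating explicitly. One naming quibble: the relevant nonsingularity criterion is the one for \emph{weakly chained} diagonally dominant matrices rather than irreducible diagonal dominance, since $A_{\mathcal{N}\mathcal{N}}$ need not be irreducible; your description of the chain condition is nonetheless the correct one. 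Finally, the theorem as stated in the paper has a typo ($1\leq i\leq m$ should read $1\leq i\leq\tau$), which your proof implicitly corrects.
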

The following result can be deduced from Section 6 in \cite{FH:1977}.
\begin{theorem}\label{thm:deficiency}
 We have
 $$\delta\geq\dim\Big(\ker Y\cap \Image A(\kappa)\Big)=\dim\ker YA(\kappa) - \dim\ker A(\kappa).$$
\end{theorem}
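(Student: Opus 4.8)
The plan is to obtain both the equality and the inequality from two applications of the rank--nullity theorem, the second one after identifying the deficiency with the dimension of an intersection living in ``complex space'' $\RR^m$.

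\emph{The equality.} First I would restrict the linear map $Y\colon\RR^m\to\RR^d$ to the subspace $\Image A(\kappa)\subseteq\RR^m$. The kernel of $Y$ restricted to $\Image A(\kappa)$ is precisely $\ker Y\cap\Image A(\kappa)$, while its image is $Y(\Image A(\kappa))=\Image(YA(\kappa))$. Rank--nullity then gives
$$\dim\Image A(\kappa)=\dim\big(\ker Y\cap\Image A(\kappa)\big)+\dim\Image(YA(\kappa)),$$
and substituting $\dim\Image A(\kappa)=m-\dim\ker A(\kappa)$ and $\dim\Image(YA(\kappa))=m-\dim\ker YA(\kappa)$ and rearranging yields the claimed identity $\dim(\ker Y\cap\Image A(\kappa))=\dim\ker YA(\kappa)-\dim\ker A(\kappa)$.

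\emph{The inequality.} Here I would introduce the reaction subspace in complex space, $\widetilde{\St}=\spann_{\RR}\{e_j-e_i\,:\,y_i\to y_j\in\Rc\}\subseteq\RR^m$, and use two facts about it. First, $\dim\widetilde{\St}=m-\ell$: this is the statement that the span of the edge--difference vectors of a graph with $\ell$ connected components has codimension $\ell$, which one checks by taking a spanning forest and verifying that its edge vectors form a basis of $\widetilde{\St}$. Second, since $Y(e_j-e_i)=y_j-y_i$, the map $Y$ carries $\widetilde{\St}$ onto $\St$, so rank--nullity applied to $Y$ restricted to $\widetilde{\St}$ gives $\dim(\ker Y\cap\widetilde{\St})=\dim\widetilde{\St}-\dim\St=(m-\ell)-\dim\St=\delta$. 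It remains to observe that $\Image A(\kappa)\subseteq\widetilde{\St}$: the $j$th column of $A(\kappa)$ equals $\sum_{i\ne j}\kappa_{ji}(e_i-e_j)$, and each summand lies in $\widetilde{\St}$ because $\kappa_{ji}\ne 0$ only when $y_j\to y_i\in\Rc$. Consequently $\ker Y\cap\Image A(\kappa)\subseteq\ker Y\cap\widetilde{\St}$, so taking dimensions gives $\dim(\ker Y\cap\Image A(\kappa))\le\delta$, which combined with the equality already established is the assertion.

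The argument is short and every step is routine linear algebra; the only point that deserves careful statement is the graph-theoretic identity $\dim\widetilde{\St}=m-\ell$ together with the surjectivity $Y(\widetilde{\St})=\St$, which together give the geometric description $\delta=\dim(\ker Y\cap\widetilde{\St})$ of the deficiency. I note that Theorem~\ref{thm:support} is not actually needed in the above, although it allows one to restate the conclusion as $\delta\ge\dim\ker YA(\kappa)-\tau$.
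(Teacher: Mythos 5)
Your argument is correct, and it supplies something the paper deliberately omits: the paper gives no proof of this statement at all, merely noting that it ``can be deduced from Section 6 in'' Feinberg--Horn (1977). Both halves of your proof check out. The equality is exactly the rank--nullity computation for $Y$ restricted to $\Image A(\kappa)$, using $\dim\Image A(\kappa)=m-\dim\ker A(\kappa)$ and $\Image(YA(\kappa))=Y(\Image A(\kappa))$. For the inequality, your three ingredients are all sound: the $j$th column of $A(\kappa)$ is $\sum_{i\neq j}\kappa_{ji}(e_i-e_j)$ (the diagonal term $-\sum_l\kappa_{jl}$ is precisely what makes the column a combination of edge--difference vectors), so $\Image A(\kappa)\subseteq\widetilde{\St}$; the spanning-forest argument gives $\dim\widetilde{\St}=m-\ell$, where $\ell$ counts connected components of the underlying undirected reaction graph exactly as the paper defines it; and $Y(e_j-e_i)=y_j-y_i$ gives $Y(\widetilde{\St})=\St$, whence $\dim(\ker Y\cap\widetilde{\St})=(m-\ell)-\dim\St=\delta$. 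This is essentially the classical Feinberg--Horn derivation, so you have not taken a different route so much as written out the route the paper points to. Your closing remark is also accurate: Theorem~\ref{thm:support} plays no role in the proof itself, and invoking it only rewrites $\dim\ker A(\kappa)$ as $\tau$ in the conclusion.
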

Finally, the following result is proven in the supplementary material of \cite{SF:ACR}, where it is stated as Theorem S3.15.
\begin{theorem}\label{thm:feinberg_strong}
 Consider a mass-action system $(\Sp, \Cx, \Rc,\kappa)$. Assume the following conditions hold:
 \begin{enumerate}
  \item $y_i$ and $y_j$ are non-terminal complexes;
  \item the deficiency is 1;
 \end{enumerate}
 Then, there exists $q\in\RR_{>0}$ such that $c^{y_j-y_i}=q$ for all positive steady states $c$. 
\end{theorem}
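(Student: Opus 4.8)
The plan is to prove the statement by comparing two arbitrary positive steady states through the single linear map $A(\kappa)$, using Theorem~\ref{thm:deficiency} to pin down a one-dimensional space and Theorem~\ref{thm:support} to localize the comparison to the non-terminal complexes. First I would observe that at any positive steady state $c$ one has $Y A(\kappa)\Lambda(c)=0$, so the vector $A(\kappa)\Lambda(c)$ lies in $\ker Y$; it obviously also lies in $\Image A(\kappa)$. Hence $A(\kappa)\Lambda(c)\in\ker Y\cap\Image A(\kappa)$, and by Theorem~\ref{thm:deficiency} together with the hypothesis $\delta=1$ this intersection has dimension at most $1$.

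Next I would show the intersection is in fact one-dimensional and that $A(\kappa)\Lambda(c)$ is a nonzero generator of it, for every positive steady state $c$. Indeed, if $A(\kappa)\Lambda(c)=0$ then $\Lambda(c)\in\ker A(\kappa)$, which by Theorem~\ref{thm:support} forces $\Lambda(c)$ to be supported on the terminal complexes only. But $c>0$ gives $\Lambda_k(c)=c^{y_k}>0$ for every complex, in particular for the non-terminal complex $y_i$, a contradiction. Therefore $A(\kappa)\Lambda(c)\neq0$, the intersection is one-dimensional, and $A(\kappa)\Lambda(c)$ spans it.

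The key step is then to compare two positive steady states $c^{*}$ and $c^{\dagger}$. Both $A(\kappa)\Lambda(c^{*})$ and $A(\kappa)\Lambda(c^{\dagger})$ are nonzero vectors in the same one-dimensional space, so they are proportional: $A(\kappa)\Lambda(c^{\dagger})=\rho\,A(\kappa)\Lambda(c^{*})$ for some scalar $\rho\neq0$. Rearranging gives $A(\kappa)\bigl(\Lambda(c^{\dagger})-\rho\,\Lambda(c^{*})\bigr)=0$, so $\Lambda(c^{\dagger})-\rho\,\Lambda(c^{*})\in\ker A(\kappa)$. By Theorem~\ref{thm:support} this vector vanishes at every non-terminal complex, whence $(c^{\dagger})^{y_k}=\rho\,(c^{*})^{y_k}$ for each non-terminal $y_k$. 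Applying this to both $y_i$ and $y_j$ and taking the quotient eliminates $\rho$, yielding $(c^{\dagger})^{y_j-y_i}=(c^{*})^{y_j-y_i}$. Since $c^{*},c^{\dagger}$ were arbitrary positive steady states, the quantity $q=c^{y_j-y_i}$ is the same at all of them, and $q>0$ because $c>0$; this is exactly the claim (the cases of zero or one positive steady state being immediate).

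I expect the main obstacle to be the two places where the hypotheses are genuinely used and must be handled carefully: the nonvanishing argument, where positivity of $\Lambda(c)$ combined with the existence of a non-terminal complex is what rules out $A(\kappa)\Lambda(c)=0$ via Theorem~\ref{thm:support}; and the final cancellation, where the common proportionality constant $\rho$ drops out precisely because both $y_i$ and $y_j$ are non-terminal, so that each contributes the same factor. The conceptual crux is recognizing that one should not analyze a single steady state but rather subtract a scalar multiple of one from another, so that the difference lands in $\ker A(\kappa)$, turning the one-dimensionality supplied by Theorem~\ref{thm:deficiency} and the support localization supplied by Theorem~\ref{thm:support} into the desired steady-state invariant.
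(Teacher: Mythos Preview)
Your argument is correct and is essentially the original Shinar--Feinberg proof. Note, however, that the paper does not itself supply a proof of Theorem~\ref{thm:feinberg_strong}: it quotes it as a known result from \cite{SF:ACR}. What the paper does prove on its own is the pair consisting of the generalization of Theorem~\ref{thm:feinberg_implies_cond} and Theorem~\ref{thm:lcif_app}, which together re-derive the statement by a somewhat different route.

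The difference is this. You push $\Lambda(c)$ forward through $A(\kappa)$ and work in $\ker Y\cap\Image A(\kappa)$, comparing two steady states there and then pulling back via Theorem~\ref{thm:support}. The paper instead stays in $\ker YA(\kappa)$: from a single steady state it obtains the decomposition $\ker YA(\kappa)=\spann_{\RR}\{\Lambda(c)\}\oplus\ker A(\kappa)$, observes that $e_j-c^{y_j-y_i}e_i$ is orthogonal to both summands (by construction and by Theorem~\ref{thm:support}), hence lies in $\Image A(\kappa)^\top Y^\top$, which is precisely the statement that $\Gamma_{ij}(\kappa)\neq\emptyset$; the invariant $c^{y_j-y_i}=q$ then falls out of Theorem~\ref{thm:lcif_app}. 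Your approach is more direct and self-contained for the bare invariant; the paper's detour through $\Gamma_{ij}(\kappa)$ is what produces the linear constrained integrator $\phi(x)=\langle\hat\gamma,x\rangle$ as a byproduct, which is the object the rest of the paper actually needs.
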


For convenience, we restate here the weaker version given in the main text:
\thmfeinbergweak*
Note how Theorem~\ref{thm:feinberg_weak} is a straightforward consequence of Theorem~\ref{thm:feinberg_strong}: if the two complexes only differ in the $n$th entry, then
$$c^{y_j-y_i}=c_n^{(y_j-y_i)_n}=q$$
for all positive steady states $c$, hence $X_n$ is ACR.

\begin{remark}
 In the original statement of Theorem~\ref{thm:feinberg_strong}, the additional hypothesis that a positive steady state exists is made. In fact, under the condition of Theorem~\ref{thm:feinberg_strong}, the existence of positive steady states is not guaranteed. However, if no positive steady state exists then the conclusion of the theorem holds trivially. Hence, our reformulation holds correct. See Remark~\ref{rem:degenerate_ACR} for the formal relationship between ACR species and the lack of positive steady states.
\end{remark}

\section{Calculation and structural properties of $\Gamma_{ij}(\kappa)$}

Consider a mass-action system $\Sy=(\Sp, \Cx, \Rc, \kappa)$. For any $1\leq i,j\leq m$ with $i\neq j$ define the set
\begin{equation}\label{eq:GammaAp}
 \Gamma_{ij}(\kappa)=\left\{\gamma\in\RR^{d+1}\,:\,
\begin{pmatrix}
A(\kappa)^\top Y^\top\,|\, e_i
\end{pmatrix}\gamma
=e_j\right\},
\end{equation}
and for any real vector $v$ of length larger than $d$, let $\pi_d(v)$ be its projection onto the first $d$ components. We define
\begin{equation}\label{eq:gamma_hat}
\hat  \Gamma_{ij}(\kappa)=\left\{\hat\gamma\in\RR^d\,:\,\hat\gamma=\pi_d(\gamma)\text{ for some }\gamma\in \Gamma_{ij}(\kappa)\right\}.
\end{equation}
The set $\Gamma_{ij}(\kappa)$ can be computed by first calculating a basis for 
\begin{equation}\label{eq:Psi}
 \Psi_{ij}(\kappa)=\ker \begin{pmatrix}
A(\kappa)^\top Y^\top\,|\, e_i \,|\, e_j
\end{pmatrix}.
\end{equation}
Note that this can be easily and quickly done by using a programming language which is able to deal with symbolic linear algebra, such as Matlab. The following holds
\begin{proposition}\label{prop:calculating_Gamma}
 Consider a mass-action system $\Sy=(\Sp, \Cx, \Rc,\kappa)$, and $1\leq i,j\leq m$ with $i\neq j$. Let $\{\psi^1, \psi^2, \dots, \psi^k\}$ be a basis for $\Psi_{ij}(\kappa)$, as defined in \eqref{eq:Psi}. Then, $\Gamma_{ij}(\kappa)$ (and consequently $\hat \Gamma_{ij}(\kappa)$) is non-empty if and only if $\psi^n_{d+2}\neq 0$ for some $1\leq n\leq k$.  If this is the case, then
 \begin{align*}
  \Gamma_{ij}(\kappa)&=\left\{-\frac{1}{\sum_{n=1}^k a_n\psi^n_{d+2}}\sum_{n=1}^k a_n\pi_{d+1}(\psi^n)\,:\,a_1,a_2,\dots,a_k\in\RR\text{ and }\sum_{n=1}^k a_n\psi^n_{d+2}\neq 0\right\},\\
  \intertext{where $\pi_{d+1}\colon\RR^{d+2}\to \RR^{d+1}$ is the projection onto the first $d+1$ components, and}
  \hat\Gamma_{ij}(\kappa)&=\left\{-\frac{1}{\sum_{n=1}^k a_n\psi^n_{d+2}}\sum_{n=1}^k a_n\pi_{d}(\psi^n)\,:\,a_1,a_2,\dots,a_k\in\RR\text{ and }\sum_{n=1}^k a_n\psi^n_{d+2}\neq 0\right\}.
 \end{align*}
\end{proposition}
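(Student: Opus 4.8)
The plan is to convert the affine system defining $\Gamma_{ij}(\kappa)$ into a homogeneous one by promoting the right-hand side $e_j$ to an extra coordinate, and then read off everything from a basis of the resulting kernel. Write $M:=(A(\kappa)^\top Y^\top\,|\,e_i)$ for the $m\times(d+1)$ matrix of \eqref{eq:GammaAp}, so that $\Gamma_{ij}(\kappa)=\{\gamma\in\RR^{d+1}:M\gamma=e_j\}$. The first step is to record the elementary identity $\Psi_{ij}(\kappa)=\ker(M\,|\,e_j)=\{(\gamma,c)\in\RR^{d+1}\times\RR : M\gamma=-c\,e_j\}$, which yields a correspondence: whenever $(\gamma,c)\in\Psi_{ij}(\kappa)$ with $c\neq0$ we get $-\gamma/c\in\Gamma_{ij}(\kappa)$, and conversely each $\gamma'\in\Gamma_{ij}(\kappa)$ gives $(\gamma',-1)\in\Psi_{ij}(\kappa)$ with last coordinate $-1\neq0$.

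Next I would derive the non-emptiness criterion. Writing a general element of $\Psi_{ij}(\kappa)=\spann\{\psi^1,\dots,\psi^k\}$ as $\sum_n a_n\psi^n$, its $(d+2)$nd coordinate is $\sum_n a_n\psi^n_{d+2}$. If every $\psi^n_{d+2}$ is zero, then all of $\Psi_{ij}(\kappa)$ has vanishing last coordinate, so by the correspondence $\Gamma_{ij}(\kappa)$ is empty; if some $\psi^n_{d+2}\neq0$, that basis vector alone exhibits an element of $\Gamma_{ij}(\kappa)$. This establishes the ``if and only if''.

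For the explicit formula, assume some $\psi^n_{d+2}\neq0$. For the inclusion $\supseteq$: given $a_1,\dots,a_k$ with $c:=\sum_n a_n\psi^n_{d+2}\neq0$, the vector $v=\sum_n a_n\psi^n$ lies in $\Psi_{ij}(\kappa)$ with last coordinate $c$, so the correspondence gives $-\tfrac{1}{c}\,\pi_{d+1}(v)\in\Gamma_{ij}(\kappa)$, which is precisely a generic element of the claimed set. For $\subseteq$: given $\gamma'\in\Gamma_{ij}(\kappa)$, write $(\gamma',-1)=\sum_n a_n\psi^n$; then $\sum_n a_n\psi^n_{d+2}=-1$ and the associated element of the claimed set equals $-\tfrac{1}{-1}\pi_{d+1}(\gamma',-1)=\gamma'$. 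The formula for $\hat\Gamma_{ij}(\kappa)$ then follows by applying $\pi_d$ throughout, using $\hat\Gamma_{ij}(\kappa)=\pi_d(\Gamma_{ij}(\kappa))$ together with $\pi_d=\pi_d\circ\pi_{d+1}$ on $\RR^{d+2}$.

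I do not expect a real obstacle here: the whole argument is linear algebra once the homogenization trick is in place. The only mildly delicate point is the logical step in the non-emptiness criterion, namely that ``$\Psi_{ij}(\kappa)$ contains a vector with non-zero last coordinate'' is equivalent to the computationally checkable ``$\psi^n_{d+2}\neq0$ for some $n$'', handled by the span argument above; keeping the two projections $\pi_d$ and $\pi_{d+1}$ (and the conventions on the dimension of the vectors $e_n$) straight is the other bit of bookkeeping to watch.
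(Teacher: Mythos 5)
Your proof is correct and follows the same route as the paper: the paper's argument consists exactly of the observation that $\gamma\in\Gamma_{ij}(\kappa)$ if and only if $(\gamma,-1)\in\Psi_{ij}(\kappa)$, from which the non-emptiness criterion and the explicit parametrizations are read off. You have simply written out in full the span/rescaling bookkeeping that the paper leaves implicit, and all of those steps check out.
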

\begin{proof}
 First, note that $\gamma\in\Gamma_{ij}(\kappa)$ if and only $(\gamma, -1)\in\Psi_{ij}(\kappa)$. This can be easily deduced by the definitions \eqref{eq:GammaAp} and \eqref{eq:Psi}. The proof simply follows from this equivalence, and from the definition of $\hat \Gamma_{ij}(\kappa)$ given in \eqref{eq:gamma_hat}. 
\end{proof}

In Section~\ref{sec:models} we will use Proposition~\ref{prop:calculating_Gamma} to calculate $\hat\Gamma_{ij}(\kappa)$ for the main examples discussed in the main text. We will see that some vectors in the basis of $\Psi_{ij}(\kappa)$ do not depend on the particular choice of rate constants. As a consequence, some dynamical properties implied by the theory developed in this paper will only depend on the structure of the model rather than on a fine tuning of the parameters, which is desirable. In the following result, we explicitly derive structural properties of $\hat\Gamma_{ij}(\kappa)$.

\begin{proposition}\label{prop:structure_Gamma}
 Consider a mass-action system $\Sy=(\Sp, \Cx, \Rc,\kappa)$, and $1\leq i,j\leq m$ with $i\neq j$. Assume that $\hat\Gamma_{ij}(\kappa)$ is non-empty, and let $\hat\gamma\in\hat\Gamma_{ij}(\kappa)$. Then, 
 \begin{equation}\label{eq:containment}
 \hat\gamma+\Cons\subseteq\hat\gamma+\ker A(\kappa)^\top Y^\top\subseteq \hat\Gamma_{ij}(\kappa). 
 \end{equation}
 Moreover, if the mass-action system has a steady state $c$ with $c^{y_i}>0$, then
 \begin{equation}\label{eq:same_q}
  \gamma'_{d+1}=c^{y_j-y_i}\quad\text{for all }\gamma'\in\Gamma_{ij}(\kappa)
 \end{equation}
and
 \begin{equation}\label{eq:equality1}
 \hat\Gamma_{ij}(\kappa)=\hat\gamma+\ker A(\kappa)^\top Y^\top. 
 \end{equation}
Finally, if each connected component of the reaction graph contains exactly one terminal component, then 
 \begin{equation}\label{eq:equality2}
 \\ker A(\kappa)^\top Y^\top=\Cons. 
 \end{equation}
\end{proposition}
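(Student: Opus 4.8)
The plan is to handle the three assertions in turn, each of which becomes elementary linear algebra once the right auxiliary objects are in place. For the chain \eqref{eq:containment}, I would start from the observation that the $j$th column of $YA(\kappa)$ equals $\sum_{i\,:\,y_j\to y_i\in\Rc}\kappa_{ji}(y_i-y_j)$, so that $\Image(YA(\kappa))\subseteq\St$ unconditionally. Since $\bigl(YA(\kappa)\bigr)^\top=A(\kappa)^\top Y^\top$, taking orthogonal complements gives $\Cons=\St^\perp\subseteq\Image(YA(\kappa))^\perp=\ker A(\kappa)^\top Y^\top$, the left inclusion. For the right inclusion I would fix a lift $\gamma=(\hat\gamma,\gamma_{d+1})\in\Gamma_{ij}(\kappa)$ of $\hat\gamma$, so $A(\kappa)^\top Y^\top\hat\gamma+\gamma_{d+1}e_i=e_j$; then for any $w\in\ker A(\kappa)^\top Y^\top$ the vector $(\hat\gamma+w,\gamma_{d+1})$ again satisfies the equation defining $\Gamma_{ij}(\kappa)$, whence $\hat\gamma+w\in\hat\Gamma_{ij}(\kappa)$.

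For the \emph{moreover} part, the key step is to pair the identity $A(\kappa)^\top Y^\top\hat\gamma'+\gamma'_{d+1}e_i=e_j$ satisfied by an arbitrary $\gamma'\in\Gamma_{ij}(\kappa)$ against the vector $\Lambda(c)$. Since $\langle A(\kappa)^\top Y^\top\hat\gamma',\Lambda(c)\rangle=\langle\hat\gamma',YA(\kappa)\Lambda(c)\rangle=0$ when $c$ is a steady state, and $\langle e_i,\Lambda(c)\rangle=c^{y_i}$, $\langle e_j,\Lambda(c)\rangle=c^{y_j}$, this forces $\gamma'_{d+1}c^{y_i}=c^{y_j}$, hence $\gamma'_{d+1}=c^{y_j-y_i}$ because $c^{y_i}>0$; this is \eqref{eq:same_q}. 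Consequently every element of $\Gamma_{ij}(\kappa)$ shares the last coordinate $c^{y_j-y_i}$, so subtracting the defining equations for $\hat\gamma'$ and $\hat\gamma$ (each completed by this common value) shows $\hat\gamma'-\hat\gamma\in\ker A(\kappa)^\top Y^\top$; this upgrades \eqref{eq:containment} to the equality \eqref{eq:equality1}.

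For the \emph{finally} part, I would lift the picture to complex space by setting $\widetilde{\St}=\spann_{\RR}\{e_j-e_i\,:\,y_i\to y_j\in\Rc\}\subseteq\RR^m$. Inspecting columns shows $\Image A(\kappa)\subseteq\widetilde{\St}$; a spanning-forest argument gives $\dim\widetilde{\St}=m-\ell$; and Theorem~\ref{thm:support} gives $\dim\ker A(\kappa)=\tau$, so $\dim\Image A(\kappa)=m-\tau$. Under the hypothesis $\tau=\ell$ these two dimensions agree, forcing $\Image A(\kappa)=\widetilde{\St}$. Applying $Y$ and using $Y(e_j-e_i)=y_j-y_i$ then gives $\Image(YA(\kappa))=Y\widetilde{\St}=\St$, so $\ker A(\kappa)^\top Y^\top=\Image(YA(\kappa))^\perp=\St^\perp=\Cons$, which is \eqref{eq:equality2}.

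I expect the \emph{finally} part to be the main obstacle, as it is the only one that requires a genuinely new object — the complex-space space $\widetilde{\St}$ — together with the dimension bookkeeping that converts the terminal-component hypothesis $\tau=\ell$ into $\Image A(\kappa)=\widetilde{\St}$. The other two assertions are direct manipulations of the linear system defining $\Gamma_{ij}(\kappa)$; the only points demanding care there are the role of $c^{y_i}>0$ in establishing \eqref{eq:same_q} and checking that the modified lifts really remain in $\Gamma_{ij}(\kappa)$.
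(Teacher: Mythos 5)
Your proof is correct. The first two assertions are handled essentially as in the paper: the inclusion $\Cons\subseteq\ker A(\kappa)^\top Y^\top$ (the paper verifies it entrywise via $\big(A(\kappa)^\top Y^\top h\big)_n=\sum_l\scal{h}{y_l-y_n}\kappa_{nl}=0$, which is the same computation as your column-image observation), the stability of $\Gamma_{ij}(\kappa)$ under adding $(v,0)$ with $v\in\ker A(\kappa)^\top Y^\top$, and the pairing of the defining identity against $\Lambda(c)$ to pin down $\gamma'_{d+1}=c^{y_j-y_i}$ and upgrade to \eqref{eq:equality1}. Where you genuinely diverge is the \emph{finally} part. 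The paper argues by a dimension count: $\dim\Cons=d-\dim\St=d+\ell+\delta-m=d+\tau+\delta-m\geq d-m+\dim\ker YA(\kappa)=\dim\ker A(\kappa)^\top Y^\top$, invoking the deficiency bound of Theorem~\ref{thm:deficiency} together with Theorem~\ref{thm:support}, and then concludes from the already-established inclusion. You instead introduce the incidence space $\widetilde{\St}=\spann_{\RR}\{e_j-e_i\,:\,y_i\to y_j\in\Rc\}$, show $\Image A(\kappa)\subseteq\widetilde{\St}$ by inspecting columns, and use $\dim\widetilde{\St}=m-\ell$ together with $\dim\ker A(\kappa)=\tau$ (Theorem~\ref{thm:support}) to force $\Image A(\kappa)=\widetilde{\St}$ when $\tau=\ell$, whence $\Image(YA(\kappa))=Y\widetilde{\St}=\St$ and \eqref{eq:equality2} follows by taking orthogonal complements. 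This avoids Theorem~\ref{thm:deficiency} entirely, at the price of importing the standard rank formula for graph incidence matrices; it is arguably more transparent, since it identifies $\Image(YA(\kappa))$ with $\St$ exactly rather than matching dimensions through an inequality. Both routes rely on Theorem~\ref{thm:support} at the same point, and both are sound.
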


\begin{proof}
 First, we have that for any $\hat \gamma\in \hat\Gamma_{ij}(\kappa)$
 \begin{equation}\label{eq:containment_ker}
  \hat\gamma+\ker A(\kappa)^\top Y^\top\subseteq\hat\Gamma_{ij}(\kappa).
 \end{equation}
 Indeed, for any $\hat\gamma\in\hat\Gamma_{ij}(\kappa)$, there exists $\gamma\in\Gamma_{ij}(\kappa)$ with $\hat\gamma=\pi_d(\gamma)$. Then, for any for any $v\in\ker A(\kappa)^\top Y^\top$ we have
 $$\begin{pmatrix}
A(\kappa)^\top Y^\top\,|\, e_i
\end{pmatrix}\left(\gamma+\begin{pmatrix} v \\ 0\end{pmatrix}\right)
=\begin{pmatrix}
A(\kappa)^\top Y^\top\,|\, e_i
\end{pmatrix}\gamma + \begin{pmatrix}A(\kappa)^\top Y^\top v \\ 0\end{pmatrix}
=e_j.$$
Hence,
$$\gamma+\begin{pmatrix} v \\ 0\end{pmatrix}\in\Gamma_{ij}(\kappa)$$
which implies that $\hat\gamma+v\in\hat\Gamma_{ij}(\kappa)$ and proves \eqref{eq:containment_ker}. Since  for any $h\in\Cons$ and any $1\leq n\leq m$
 $$\Big(A(\kappa)^\top Y^\top h\Big)_n=\sum_{l=1}^m \scal{h}{y_l-y_n}\kappa_{nl}=0,$$
 it follows
 \begin{equation}\label{eq:cons_perp}
  \Cons\subseteq \ker A(\kappa)^\top Y^\top.
 \end{equation}
 \eqref{eq:containment} follows from \eqref{eq:containment_ker} and \eqref{eq:cons_perp}.
 
 Now assume that the mass-action system has a steady state $c$, with $c^{y_i}>0$. Then, 
 $$0=\frac{d}{dt}\scal{\hat\gamma}{x(t)}|_{x(t)=c}=\hat\gamma^\top YA(\kappa)\Lambda(c)=(e_j-\gamma_{d+1}e_i)\Lambda(c)=c^{y_j}-\gamma_{d+1}c^{y_i}.$$
 Since $c^{y_i}>0$, then necessarily $\gamma_{d+1}=c^{y_j-y_i}$. For the same argument, for any other $\gamma'\in\Gamma_{ij}(\kappa)$, $\gamma'_{d+1}=\gamma_{d+1}=c^{y_j-y_i}$. Hence, \eqref{eq:same_q} holds. It follows that any $\gamma'\in\Gamma_{ij}(\kappa)$ is of the form
 $$\gamma'=\gamma+\begin{pmatrix} v \\ 0\end{pmatrix}$$
 for some $v\in\RR^{d}$. Moreover, since $\gamma'\in\Gamma_{ij}(\kappa)$
 $$e_j=\begin{pmatrix}
A(\kappa)^\top Y^\top\,|\, e_i
\end{pmatrix}\gamma'
=\begin{pmatrix}
A(\kappa)^\top Y^\top\,|\, e_i
\end{pmatrix}\gamma + \begin{pmatrix}A(\kappa)^\top Y^\top v \\ 0\end{pmatrix}= e_j+\begin{pmatrix}A(\kappa)^\top Y^\top v \\ 0\end{pmatrix}.$$
Hence, necessarily $v\in\ker A(\kappa)^\top Y^\top$, which implies
$$\hat\Gamma_{ij}(\kappa)\subseteq\hat\gamma+\ker A(\kappa)^\top Y^\top.$$
The latter, together with \eqref{eq:containment_ker}, implies \eqref{eq:equality1}.
 
To conclude the proof, we need to show that if each connected component of the reaction graph contains exactly one terminal component (i.e.\ if $\ell=\tau$), then \eqref{eq:equality2} holds. This follows from \eqref{eq:cons_perp} and
 \begin{align*}
  \dim \Cons&= d-\dim\St= d+\ell+\delta-m\\
  &=d+\tau+\delta-m\geq d-m+\dim\ker YA(\kappa)\\
  &=\dim \ker A(\kappa)^\top Y^\top,
 \end{align*}
 where we utilized Theorems~\ref{thm:support} and \ref{thm:deficiency} for the forth equality.
\end{proof}

As a consequence, we have the following.

\begin{corollary}\label{cor:structural}
   Consider a mass-action system, and assume that $y_i$ and $y_j$ are two complexes for which $\hat\Gamma_{ij}(\kappa)$ is non-empty. Let $\{v_p\}_{p=1}^{H}$ be a basis for $\Cons$, where $H=d-\dim\St$. Moreover, let $X_{l_1}, X_{l_2}, \dots, X_{l_n}\in\Sp$ such that the rank $H\times n$ matrix $V$ has rank $n$, where $V_{pq}=v_{p l_q}$ for all $1\leq p\leq H$ and $1\leq q\leq n$.
   Then, there exists $\hat\gamma\in \hat\Gamma_{ij}(\kappa)$ such that $\scal{\hat\gamma}{e_p}=0$ for all $1\leq p\leq n$.
\end{corollary}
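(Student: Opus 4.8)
The plan is to reduce the statement to a single linear-algebra observation built on the structural containment \eqref{eq:containment} of Proposition~\ref{prop:structure_Gamma}. Since $\hat\Gamma_{ij}(\kappa)$ is non-empty by hypothesis, I would first fix an arbitrary element $\hat\gamma^0\in\hat\Gamma_{ij}(\kappa)$. By \eqref{eq:containment}, every translate $\hat\gamma^0+w$ with $w\in\Cons$ again lies in $\hat\Gamma_{ij}(\kappa)$. Hence it suffices to produce a conservation law $w\in\Cons$ whose $l_q$-th coordinate equals $-\hat\gamma^0_{l_q}$ for each $1\le q\le n$; then $\hat\gamma:=\hat\gamma^0+w$ lies in $\hat\Gamma_{ij}(\kappa)$ and satisfies $\hat\gamma_{l_q}=\scal{\hat\gamma}{e_{l_q}}=0$ for all $q$, which is the desired conclusion.

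Next I would expand the unknown $w$ in the given basis of $\Cons$, writing $w=\sum_{p=1}^H a_p v_p$ for coefficients $a=(a_1,\dots,a_H)\in\RR^H$. The requirement $w_{l_q}=-\hat\gamma^0_{l_q}$ for all $1\le q\le n$ then reads $\sum_{p=1}^H a_p v_{p l_q}=-\hat\gamma^0_{l_q}$, i.e.\ the linear system $V^\top a=b$, where $V$ is precisely the $H\times n$ matrix from the statement (with $V_{pq}=v_{p l_q}$) and $b\in\RR^n$ is the vector with $b_q=-\hat\gamma^0_{l_q}$. Since $V$ has rank $n$, the transpose $V^\top\colon\RR^H\to\RR^n$ is surjective, so the system admits a solution $a$; setting $w=\sum_{p=1}^H a_p v_p\in\Cons$ and $\hat\gamma=\hat\gamma^0+w$ completes the argument. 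Equivalently and more invariantly, the evaluation map $\Cons\to\RR^n$, $w\mapsto(w_{l_1},\dots,w_{l_n})$, is represented in the chosen basis by $V^\top$, hence is onto, so every prescribed tuple of $l_q$-values — in particular $(-\hat\gamma^0_{l_1},\dots,-\hat\gamma^0_{l_n})$ — is attained.

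I do not expect a genuine obstacle here: the corollary is essentially a repackaging of \eqref{eq:containment} together with surjectivity of a full-rank matrix, using only that $\dim\Cons=H=d-\dim\St$. The only points deserving care are index bookkeeping — keeping the species labels $l_1,\dots,l_n$ distinct from the basis index $p$ ranging over $\Cons$ — and observing that the rank-$n$ hypothesis on the $H\times n$ matrix $V$ forces $H\ge n$, which is what makes the surjectivity of $V^\top$ meaningful in the first place.
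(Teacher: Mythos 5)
Your proposal is correct and follows essentially the same route as the paper's own proof: fix an element of $\hat\Gamma_{ij}(\kappa)$, translate it by a conservation law (using the containment $\hat\gamma+\Cons\subseteq\hat\Gamma_{ij}(\kappa)$ from Proposition~\ref{prop:structure_Gamma}), and solve the full-rank linear system $w^\top V=-(\hat\gamma^\star_{l_1},\dots,\hat\gamma^\star_{l_n})$ to kill the prescribed coordinates. The only addition you make is spelling out the surjectivity of $V^\top$ and noting $H\ge n$, which the paper leaves implicit.
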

\begin{proof}
 Let $\hat\gamma^\star\in \hat\Gamma_{ij}(\kappa)$. It follows from Proposition~\ref{prop:structure_Gamma} that for all $w\in \RR^H$
 $$\hat\gamma^\star+\sum_{p=1}^H w_pv_p\in \hat\Gamma_{ij}(\kappa).$$
 Hence, the proof is concluded by choosing $w$ such that
 $$w^\top V=-(\hat\gamma^\star_{l_1},\hat\gamma^\star_{l_2},\dots,\hat\gamma^\star_{l_n}),$$
 which is possible because the $H\times n$ matrix $V$ has rank $n$. 
\end{proof}

Further useful structural property of $\hat\Gamma_{ij}(\kappa)$ are following. Such properties are useful while checking whether the results of this paper can be applied, and can be used by an algorithm designed to this aim. 
Before stating the structural results, it is convenient to prove the following lemma.

\begin{lemma}\label{lem:structural}
 Consider a mass-action system $\Sy=(\Sp, \Cx, \Rc,\kappa)$, and $1\leq i,j\leq m$ with $i\neq j$. Then, $\hat\Gamma_{ij}(\kappa)$ is non-empty if and only if $e_j-\gamma_{d+1}e_i\in(\ker YA(\kappa))^\top$ for some $\gamma_{d+1}\in\RR$. Moreover, if that is the case and if the mass-action system has a steady state $c$ with $c^{y_i}>0$ and $c^{y_j}>0$, then necessarily $\gamma_{d+1}>0$.
\end{lemma}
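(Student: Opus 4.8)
The plan is to unwind the definitions \eqref{eq:GammaAp} and \eqref{eq:gamma_hat} for the equivalence, and then to read off the positivity claim directly from Proposition~\ref{prop:structure_Gamma}. Both parts are short; essentially the lemma repackages facts already in hand into a form convenient for the structural results that follow.

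For the first assertion I would note that $\hat\Gamma_{ij}(\kappa)$ is non-empty exactly when $\Gamma_{ij}(\kappa)$ is, since $\hat\Gamma_{ij}(\kappa)$ is by definition a coordinate projection of $\Gamma_{ij}(\kappa)$. Writing a candidate as $\gamma=(\hat\gamma,\gamma_{d+1})$ with $\hat\gamma\in\RR^d$, the membership $\gamma\in\Gamma_{ij}(\kappa)$ reads $A(\kappa)^\top Y^\top\hat\gamma+\gamma_{d+1}e_i=e_j$, i.e. $A(\kappa)^\top Y^\top\hat\gamma=e_j-\gamma_{d+1}e_i$. Hence $\Gamma_{ij}(\kappa)\neq\emptyset$ if and only if there is a scalar $\gamma_{d+1}$ with $e_j-\gamma_{d+1}e_i\in\Image A(\kappa)^\top Y^\top$; since $A(\kappa)^\top Y^\top=(YA(\kappa))^\top$ and $\Image M^\top=(\ker M)^\perp$ for any matrix $M$, this image is precisely $(\ker YA(\kappa))^\perp$, which is the condition in the statement. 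This step is pure bookkeeping and I expect no obstruction.

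For the ``moreover'' part, fix such a $\gamma_{d+1}$ and the corresponding $\hat\gamma$, so that $(\hat\gamma,\gamma_{d+1})\in\Gamma_{ij}(\kappa)$. If the system has a steady state $c$ with $c^{y_i}>0$, then \eqref{eq:same_q} of Proposition~\ref{prop:structure_Gamma} gives $\gamma_{d+1}=c^{y_j-y_i}$; equivalently this follows in one line by evaluating the identity $0=\hat\gamma^\top YA(\kappa)\Lambda(c)=(e_j-\gamma_{d+1}e_i)^\top\Lambda(c)=c^{y_j}-\gamma_{d+1}c^{y_i}$ at the steady state and dividing by $c^{y_i}>0$. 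Finally, $c^{y_i}>0$ and $c^{y_j}>0$ force $c_l>0$ for every species $l\in\supp y_i\cup\supp y_j$, whence $c^{y_j-y_i}=c^{y_j}/c^{y_i}>0$ and therefore $\gamma_{d+1}>0$. The only point needing minimal care is the notational one of reading $(\ker YA(\kappa))^\top$ in the statement as the orthogonal complement $(\ker YA(\kappa))^\perp\subseteq\RR^m$ (equivalently $\Image A(\kappa)^\top Y^\top$), together with the observation that the scalar $\gamma_{d+1}$ is uniquely pinned down once a positive steady state is present; I do not anticipate a genuine obstacle.
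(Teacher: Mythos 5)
Your proposal is correct and follows essentially the same route as the paper: reduce non-emptiness of $\hat\Gamma_{ij}(\kappa)$ to that of $\Gamma_{ij}(\kappa)$, rewrite membership as $e_j-\gamma_{d+1}e_i\in\Image A(\kappa)^\top Y^\top=(\ker YA(\kappa))^\perp$ via the fundamental theorem of linear algebra, and then invoke \eqref{eq:same_q} of Proposition~\ref{prop:structure_Gamma} together with $c^{y_j}/c^{y_i}>0$ for the positivity of $\gamma_{d+1}$. Your reading of $(\ker YA(\kappa))^\top$ as the orthogonal complement is also the intended one.
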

\begin{proof}
 By definition of $\hat\Gamma_{ij}(\kappa)$ given in \eqref{eq:gamma_hat}, $\hat\Gamma_{ij}(\kappa)$ is non-empty if and only if $\Gamma_{ij}(\kappa)$ is non-empty. By \eqref{eq:GammaAp}, $\Gamma_{ij}(\kappa)$ is non-empty if and only if there exists $\gamma_{d+1}\in\RR$ such that $e_j-\gamma_{d+1}e_i$ is in $\Image A(\kappa)^\top Y^\top$. By the fundamental theorem of linear algebra, the latter holds if and only if $e_j-\gamma_{d+1}e_i$ is orthogonal to $\ker YA(\kappa)$.
 
 To conclude the proof, we note that if the mass-action system has a steady state $c$ with $c^{y_i}>0$ and $\hat\Gamma_{ij}(\kappa)$ is non-empty, then it follows by \eqref{eq:same_q} in Proposition~\ref{prop:structure_Gamma} that the quantity $\gamma_{d+1}$ discussed in the first part of the proof is necessarily equal to $c^{y_j-y_i}>0$.
\end{proof}

\begin{proposition}\label{prop:empty_gamma_terminal}
 Consider a mass-action system $\Sy=(\Sp, \Cx, \Rc,\kappa)$, and $1\leq i,j\leq m$. Assume that one of the following holds:
 \begin{itemize}
  \item $y_i$ is non-terminal and $y_j$ is terminal;
  \item $y_i$ and $y_j$ are in two different terminal components.
 \end{itemize}
 Then, $\hat\Gamma_{ij}(\kappa)$ is empty.
\end{proposition}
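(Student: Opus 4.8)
The plan is to combine Lemma~\ref{lem:structural} with the fine description of $\ker A(\kappa)$ provided by Theorem~\ref{thm:support}. Suppose, for contradiction, that $\hat\Gamma_{ij}(\kappa)$ is non-empty. By Lemma~\ref{lem:structural} there is then a scalar $\gamma_{d+1}\in\RR$ such that $e_j-\gamma_{d+1}e_i$ is orthogonal to $\ker YA(\kappa)$. Since $A(\kappa)v=0$ implies $YA(\kappa)v=0$, we have $\ker A(\kappa)\subseteq\ker YA(\kappa)$, so in particular $e_j-\gamma_{d+1}e_i$ is orthogonal to every vector of $\ker A(\kappa)$. This is the step where the coarser kernel of Lemma~\ref{lem:structural} gets upgraded to something about which Theorem~\ref{thm:support} gives precise support information.

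Next I would invoke the basis $\{\chi^1(\kappa),\dots,\chi^\tau(\kappa)\}$ of $\ker A(\kappa)$ from Theorem~\ref{thm:support}, which satisfies $\supp\chi^k(\kappa)=\mathcal{T}^k$ for each $k$. The key observation unifying the two bulleted hypotheses is that in both cases there is a terminal component $\mathcal{T}^p$ containing $y_j$ but not $y_i$: if $y_i$ is non-terminal it lies in no terminal component at all, so any $\mathcal{T}^p$ with $y_j\in\mathcal{T}^p$ works; if instead $y_i\in\mathcal{T}^a$ and $y_j\in\mathcal{T}^b$ with $a\neq b$, take $p=b$. For this $p$ we have $\chi^p_j(\kappa)\neq0$ (since $y_j\in\supp\chi^p(\kappa)$) while $\chi^p_i(\kappa)=0$ (since $y_i\notin\supp\chi^p(\kappa)$).

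Finally, the orthogonality established in the first step, applied to $\chi^p(\kappa)\in\ker A(\kappa)$, gives
$$0=\scal{e_j-\gamma_{d+1}e_i}{\chi^p(\kappa)}=\chi^p_j(\kappa)-\gamma_{d+1}\chi^p_i(\kappa)=\chi^p_j(\kappa),$$
contradicting $\chi^p_j(\kappa)\neq0$. Hence $\hat\Gamma_{ij}(\kappa)$ is empty. There is no real obstacle here beyond the bookkeeping of recognizing that the two hypotheses both reduce to the existence of a terminal component separating $y_j$ from $y_i$, and of transferring the support statement of Theorem~\ref{thm:support} through the inclusion $\ker A(\kappa)\subseteq\ker YA(\kappa)$ into the criterion of Lemma~\ref{lem:structural}.
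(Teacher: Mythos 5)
Your argument is correct and is essentially identical to the paper's own proof: both use Theorem~\ref{thm:support} to produce a vector of $\ker A(\kappa)\subseteq\ker YA(\kappa)$ supported exactly on the terminal component containing $y_j$ (hence vanishing at $y_i$ under either hypothesis), and then conclude via the orthogonality criterion of Lemma~\ref{lem:structural}. No gaps.
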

\begin{proof} 
 By assumption, $y_j$ is terminal. By Theorem~\ref{thm:support}, there exists a vector
 $$\chi\in\ker A(\kappa)\subseteq \ker YA(\kappa)$$
 such that $\supp \chi$ is the terminal component containing $y_j$. Hence, if $y_i$ is non-terminal or if $y_i$ is in a different terminal component than $y_j$, we have that for all $\gamma_{d+1}\in\RR$
 $$\scal{\chi}{e_j-\gamma_{d+1}e_i}=\chi_j\neq0.$$
 It follows from Lemma~\ref{lem:structural} that $\hat\Gamma_{ij}(\kappa)$ is empty, which concludes the proof. 
\end{proof}

\begin{proposition}\label{prop:empty_gamma_symmetric}
 Consider a mass-action system $\Sy=(\Sp, \Cx, \Rc,\kappa)$, and $1\leq i,j\leq m$ with $i\neq j$. Assume that a steady state $c$ with $c^{y_i}>0$ and $c^{y_j}>0$ exists. Then, $\hat\Gamma_{ij}(\kappa)$ is empty if and only if $\hat\Gamma_{ji}(\kappa)$ is empty.
\end{proposition}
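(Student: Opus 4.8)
The plan is to reduce the whole statement to the linear‑algebraic characterization of non‑emptiness already established in Lemma~\ref{lem:structural}, so that the differential equation never needs to be touched again. Recall that this lemma says that $\hat\Gamma_{ij}(\kappa)$ is non‑empty if and only if there is a scalar $\gamma_{d+1}\in\RR$ with $e_j-\gamma_{d+1}e_i$ orthogonal to $\ker YA(\kappa)$, and symmetrically $\hat\Gamma_{ji}(\kappa)$ is non‑empty if and only if there is a scalar $\gamma'_{d+1}\in\RR$ with $e_i-\gamma'_{d+1}e_j$ orthogonal to $\ker YA(\kappa)$. Thus the proposition will follow once we show that a scalar making $e_j-\gamma_{d+1}e_i$ orthogonal to $\ker YA(\kappa)$ produces a scalar doing the same for $e_i-\gamma'_{d+1}e_j$, and conversely.

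First I would assume $\hat\Gamma_{ij}(\kappa)$ is non‑empty and fix the corresponding $\gamma_{d+1}$ from Lemma~\ref{lem:structural}. This is the point where the hypothesis that a steady state $c$ with $c^{y_i}>0$ and $c^{y_j}>0$ exists is used in an essential way: the second half of Lemma~\ref{lem:structural} (which itself rests on \eqref{eq:same_q} of Proposition~\ref{prop:structure_Gamma}) forces $\gamma_{d+1}=c^{y_j-y_i}>0$, so in particular $\gamma_{d+1}\neq 0$. Dividing the orthogonality relation $e_j-\gamma_{d+1}e_i\perp\ker YA(\kappa)$ by the nonzero scalar $-\gamma_{d+1}$ gives $e_i-\gamma_{d+1}^{-1}e_j\perp\ker YA(\kappa)$, which, by Lemma~\ref{lem:structural} applied to the pair $(j,i)$, is exactly the statement that $\hat\Gamma_{ji}(\kappa)$ is non‑empty (with $\gamma'_{d+1}=\gamma_{d+1}^{-1}$). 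Swapping the roles of $i$ and $j$ gives the reverse implication, and taking contrapositives yields: $\hat\Gamma_{ij}(\kappa)$ is empty if and only if $\hat\Gamma_{ji}(\kappa)$ is empty.

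The only delicate point — and the step I would write out most carefully — is the justification that $\gamma_{d+1}\neq 0$, since that is precisely what legitimizes the rescaling of the two canonical vectors; this is where the assumption $c^{y_j}>0$ cannot be dropped (without it the scalar could be $0$ and the argument would genuinely break, consistent with the conclusion really relying on positivity of both monomials at $c$). Everything else is a one‑line manipulation of vectors orthogonal to a fixed subspace, so I anticipate no further obstacle.
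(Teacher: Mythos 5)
Your argument is correct and is essentially identical to the paper's proof: both reduce non-emptiness to the orthogonality criterion of Lemma~\ref{lem:structural}, use the steady-state hypothesis to force $\gamma_{d+1}>0$, and rescale $e_j-\gamma_{d+1}e_i$ to obtain $e_i-\gamma_{d+1}^{-1}e_j$, then invoke the symmetry of $i$ and $j$. No gaps.
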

\begin{proof}
 By the symmetric role of $i$ and $j$, it suffices to prove that if $\hat\Gamma_{ij}(\kappa)$ is non-empty, then necessarily $\hat\Gamma_{ij}(\kappa)$ is non-empty.
 
 By Lemma~\ref{lem:structural}, if $\hat\Gamma_{ij}(\kappa)$ is non-empty then there exists $\gamma_{d+1}\in\RR$ with $e_j-\gamma_{d+1}e_i\in(\ker YA(\kappa))^\top$, and $\gamma_{d+1}>0$. Hence, 
 $$e_i-\frac{1}{\gamma_{d+1}}e_j\in(\ker YA(\kappa))^\top,$$
 hence by Lemma~\ref{lem:structural} $\hat\Gamma_{ji}(\kappa)$ is non-empty, which is what we wanted to show.
\end{proof}

\section{Proofs of the results stated in the main text}

In this section, we state and prove more general versions of the theorems presented in the main text. 

\subsection{Existence and characterization of a CIF}

 The following result is stated in the main text.
 \thmfeinbergimpliescond*
 
The following more general result holds, from which Theorem~\ref{thm:feinberg_implies_cond} can be immediately deduced.
\begin{theorem}
 Consider a mass-action system, and assume the following holds:
 \begin{enumerate}
  \item\label{part_non_terminal_complexes} $y_i$ and $y_j$ are two distinct non-terminal complex;
  \item\label{part_deficiency_1} the deficiency is 1;
  \item\label{part_existence_of_pos_eq} a steady state $c$ with $c^{y_i}\neq0$ exists.
 \end{enumerate}
 Then, $\hat\Gamma_{ij}(\kappa)$ is non-empty.
\end{theorem}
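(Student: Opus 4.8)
The plan is to reduce, via Lemma~\ref{lem:structural}, to a linear-algebra statement about $\ker YA(\kappa)$. By that lemma, $\hat\Gamma_{ij}(\kappa)$ is non-empty precisely when there is a scalar $\gamma_{d+1}\in\RR$ with $e_j-\gamma_{d+1}e_i\in(\ker YA(\kappa))^\top$, i.e. with $v_j=\gamma_{d+1}v_i$ for every $v\in\ker YA(\kappa)$. So it suffices to show that the coordinates indexed by $i$ and $j$ of vectors in $\ker YA(\kappa)$ are always proportional, and to identify the constant of proportionality.

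First I would exploit the steady state. Since $c$ is a steady state, $0=YA(\kappa)\Lambda(c)$, so $\Lambda(c)\in\ker YA(\kappa)$. By hypothesis~\eqref{part_existence_of_pos_eq}, $\Lambda(c)_i=c^{y_i}\neq 0$; on the other hand, Theorem~\ref{thm:support} tells us that every vector of $\ker A(\kappa)$ is supported on the terminal complexes, and $y_i$ is non-terminal by~\eqref{part_non_terminal_complexes}, so $\Lambda(c)_i=0$ would be forced if $\Lambda(c)$ lay in $\ker A(\kappa)$. Hence $\Lambda(c)\notin\ker A(\kappa)$, and the always-valid inclusion $\ker A(\kappa)\subseteq\ker YA(\kappa)$ is strict.

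Next I would combine this with the deficiency hypothesis. Theorem~\ref{thm:deficiency} and~\eqref{part_deficiency_1} give $1=\delta\ge\dim\ker YA(\kappa)-\dim\ker A(\kappa)$, while the strict inclusion just established gives the reverse inequality; therefore $\dim\ker YA(\kappa)=\dim\ker A(\kappa)+1$ and $\ker YA(\kappa)=\ker A(\kappa)\oplus\spann\{\Lambda(c)\}$. Now put $w=\Lambda(c)$ and $\gamma_{d+1}=w_j/w_i=c^{y_j-y_i}$, which is well defined because $w_i=c^{y_i}\neq 0$. An arbitrary $v\in\ker YA(\kappa)$ decomposes as $v=u+a\,w$ with $u\in\ker A(\kappa)$ and $a\in\RR$. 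Since both $y_i$ and $y_j$ are non-terminal, Theorem~\ref{thm:support} forces $u_i=u_j=0$, so $v_i=a\,w_i$ and $v_j=a\,w_j$, whence $v_j-\gamma_{d+1}v_i=a\,w_j-(w_j/w_i)\,a\,w_i=0$. Thus $e_j-\gamma_{d+1}e_i\in(\ker YA(\kappa))^\top$, and Lemma~\ref{lem:structural} yields that $\hat\Gamma_{ij}(\kappa)$ is non-empty, proving the theorem.

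The only genuinely delicate point is that $\ker YA(\kappa)$ is a priori uncontrolled; the argument circumvents this by observing that non-terminality of $y_i$ together with a steady state not vanishing on $y_i$ makes $\Lambda(c)$ a direction in $\ker YA(\kappa)$ strictly outside $\ker A(\kappa)$, and deficiency one then leaves no room for any further such direction. Once $\ker YA(\kappa)$ is pinned down this way, non-terminality of $y_i$ and $y_j$ makes the $i$- and $j$-coordinates trivial on the $\ker A(\kappa)$ part, so the required $\gamma_{d+1}$ is simply the coordinate ratio of $\Lambda(c)$, namely $c^{y_j-y_i}$ — which also matches the ACR value predicted by~\eqref{eq:same_q} in Proposition~\ref{prop:structure_Gamma}.
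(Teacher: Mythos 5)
Your proof is correct and follows essentially the same route as the paper's: exhibit $\Lambda(c)$ as an element of $\ker YA(\kappa)\setminus\ker A(\kappa)$ via Theorem~\ref{thm:support}, use $\delta=1$ and Theorem~\ref{thm:deficiency} to get $\ker YA(\kappa)=\ker A(\kappa)\oplus\spann_\RR\{\Lambda(c)\}$, and then verify that $e_j-c^{y_j-y_i}e_i$ is orthogonal to both summands before invoking Lemma~\ref{lem:structural}. The only difference is cosmetic: you check orthogonality coordinatewise through the decomposition $v=u+a\Lambda(c)$, whereas the paper checks it on the two summands directly.
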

 \begin{proof}
  By condition~\ref{part_existence_of_pos_eq}, $c$ is a steady state, hence the vector $\Lambda(c)$ is in $\ker Y A(\kappa)$. Moreover, $\Lambda_i(c)=c^{y_i}\neq0$. By condition~\ref{part_non_terminal_complexes}, $y_i$ is a non-terminal complex. Hence, by combining $\Lambda_i(c)\neq0$ with Theorem~\ref{thm:support}, it follows that $\Lambda(c)$ is not in $\ker A(\kappa)$. Since by assumption $\delta=1$, it follows from Theorem~\ref{thm:deficiency} that
 \begin{equation}\label{eq:ker_direct_sum}
  \ker Y A(\kappa)=\spann_\RR\{\Lambda(c)\}\oplus\ker A(\kappa).
 \end{equation}
 Define
 $$v=e_j-c^{y_j-y_i}e_i=e_j-\frac{\Lambda_j(c)}{\Lambda_i(c)}e_i.$$
 Clearly, $v$ is orthogonal to $\Lambda(c)$. Moreover, by condition~\ref{part_non_terminal_complexes}, both $y_i$ and $y_j$ are non-terminal complexes. Hence, the vector $v$ is orthogonal to $\ker A(\kappa)$ by Theorem~\ref{thm:support}. It follows from \eqref{eq:ker_direct_sum} that $v$ is orthogonal to $\ker Y A(\kappa)$, which implies that 
 $\hat\Gamma_{ij}(\kappa)$ is non-empty by Lemma~\ref{lem:structural}.  
 \end{proof}

We now proceed to prove the following result, stated in the main text.
\thmlcif*
In order to prove the result, we will show that a more general version holds, which we state here.
\begin{theorem}\label{thm:lcif_app}
  Consider a mass-action system, and assume that $y_i$ and $y_j$ are two complexes for which $\hat\Gamma_{ij}(\kappa)$ is non-empty. Let $\gamma\in\hat\Gamma_{ij}(\kappa)$.
  Then,
  \begin{equation}\label{eq:fixed_value}
   c^{y_j-y_i}=\gamma_{d+1}
  \end{equation}
  for all steady states $c$ satisfying $c^{y_i}\neq 0$. Moreover, 
 $$\phi(x)=\sum_{i=1}^d \beta_i x_i$$
 is a linear CIF with
 \begin{equation}\label{eq:to_be_proved}
 \frac{d}{dt}\phi(x(t))=\Lambda_i(x(t)) \Big(x(t)^{y_j-y_i}-\gamma_{d+1}\Big) 
 \end{equation}
 for any initial condition $x(0)$ if and only if $\beta\in\hat\Gamma_{ij}(\kappa)$.
\end{theorem}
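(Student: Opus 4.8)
The plan is to reduce both parts of Theorem~\ref{thm:lcif_app} to a single elementary computation: differentiating the linear form $\langle\beta,x(t)\rangle$ along a trajectory of the mass-action system. Since $\tfrac{d}{dt}x(t)=YA(\kappa)\Lambda(x(t))$, for every $\beta\in\RR^d$ one has
$$\frac{d}{dt}\langle\beta,x(t)\rangle=\beta^\top YA(\kappa)\Lambda(x(t))=\bigl(A(\kappa)^\top Y^\top\beta\bigr)^\top\Lambda(x(t)),$$
and if $\gamma=(\beta,\gamma_{d+1})\in\Gamma_{ij}(\kappa)$, i.e.\ $A(\kappa)^\top Y^\top\beta=e_j-\gamma_{d+1}e_i$, the right-hand side collapses to $\Lambda_j(x(t))-\gamma_{d+1}\Lambda_i(x(t))=x(t)^{y_j}-\gamma_{d+1}x(t)^{y_i}$. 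On the positive orthant $x^{y_j}=x^{y_i}x^{y_j-y_i}$, so this equals $\Lambda_i(x(t))\bigl(x(t)^{y_j-y_i}-\gamma_{d+1}\bigr)$, which is exactly \eqref{eq:to_be_proved}; since $\RR^d_{>0}$ is forward invariant under mass-action kinetics, restricting to positive initial conditions costs nothing.

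\textbf{The fixed value \eqref{eq:fixed_value}.} I would evaluate the identity above at a steady state $c$ with $c^{y_i}\neq0$: the left-hand side vanishes, hence $c^{y_j}=\gamma_{d+1}c^{y_i}$ and therefore $c^{y_j-y_i}=\gamma_{d+1}$. This is precisely \eqref{eq:same_q} of Proposition~\ref{prop:structure_Gamma}, and it simultaneously shows that $\gamma_{d+1}$ is independent of the chosen $\gamma\in\Gamma_{ij}(\kappa)$ once such a $c$ exists. When $y_i$ and $y_j$ differ only in the $n$th coordinate this reads $c_n^{(y_j-y_i)_n}=\gamma_{d+1}$, which yields the ACR value $q=\gamma_{d+1}^{1/(y_j-y_i)_n}$ of Theorem~\ref{thm:lcif} and, since there is nothing to prove when no positive steady state exists, the dichotomy stated there.

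\textbf{Characterization of linear CIFs.} The implication $\beta\in\hat\Gamma_{ij}(\kappa)\Rightarrow$ \eqref{eq:to_be_proved} is immediate from the first paragraph, \emph{provided} the scalar produced by $\beta$ is the $\gamma_{d+1}$ fixed in the statement; this is where Proposition~\ref{prop:structure_Gamma} (equation \eqref{eq:equality1}, or, when no positive steady state is available, the structural description of $\hat\Gamma_{ij}(\kappa)$ established earlier) enters, since it identifies $\hat\Gamma_{ij}(\kappa)$ with a single coset of $\ker A(\kappa)^\top Y^\top$, on which $A(\kappa)^\top Y^\top\beta$ and hence the associated last component is constant. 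Conversely, if $\phi(x)=\langle\beta,x\rangle$ satisfies \eqref{eq:to_be_proved} for every initial condition, then evaluating both sides at $t=0$ for an arbitrary $x(0)\in\RR^d_{>0}$ gives $\bigl(A(\kappa)^\top Y^\top\beta-e_j+\gamma_{d+1}e_i\bigr)^\top\Lambda(x)=0$ for all $x>0$; because the complexes are pairwise distinct, the monomials $\{x\mapsto x^{y_l}\}_{l=1}^m$ are linearly independent functions on $\RR^d_{>0}$, so $A(\kappa)^\top Y^\top\beta=e_j-\gamma_{d+1}e_i$, i.e.\ $(\beta,\gamma_{d+1})\in\Gamma_{ij}(\kappa)$ and thus $\beta\in\hat\Gamma_{ij}(\kappa)$.

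\textbf{Main difficulty.} The calculations are routine; the two points needing care are (i) the passage between $\Lambda_j$ and $\Lambda_i\,x^{y_j-y_i}$, which I handle by working on the forward-invariant positive orthant, and (ii) ensuring that the constant $\gamma_{d+1}$ appearing in \eqref{eq:to_be_proved} is genuinely determined by $\hat\Gamma_{ij}(\kappa)$ rather than by the particular lift $\gamma$ — this is the content of Proposition~\ref{prop:structure_Gamma} and is the one place the argument uses structural input beyond linear algebra and the linear independence of monomials. Finally, Theorem~\ref{thm:lcif} follows from Theorem~\ref{thm:lcif_app} by the substitution $y_j-y_i=(y_j-y_i)_n e_n$ already used above.
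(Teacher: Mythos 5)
Your proposal is correct and follows essentially the same route as the paper: the paper also obtains \eqref{eq:fixed_value} by evaluating $\frac{d}{dt}\scal{\beta}{x(t)}=\beta^\top YA(\kappa)\Lambda(x(t))=(e_j-\gamma_{d+1}e_i)^\top\Lambda(x(t))$ at a steady state (this is exactly the proof of \eqref{eq:same_q} in Proposition~\ref{prop:structure_Gamma}, which the theorem's proof cites), proves the forward implication by noting via Proposition~\ref{prop:structure_Gamma} that $\beta-\pi_d(\gamma)\in\ker A(\kappa)^\top Y^\top$, and proves the converse by the linear independence of the monomials $x^{y_l}$. Your extra care about working on the positive orthant and about the well-definedness of $\gamma_{d+1}$ across lifts matches the paper's implicit reliance on the same facts and introduces no deviation worth recording.
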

\begin{proof}
\eqref{eq:fixed_value} follows from \eqref{eq:same_q} in Proposition~\ref{prop:structure_Gamma}.
 
 Let $\beta\in\hat\Gamma_{ij}(\kappa)$. Then, by Proposition~\ref{prop:structure_Gamma} we have that $\beta-\pi_d(\gamma)\in\ker A(\kappa)^\top Y^\top$. Hence, for any initial condition $x(0)\in\RR^d_{\geq0}$ and any $t\in\RR_{\geq0}$,
 $$\frac{d}{dt}\scal{\beta}{x(t)}=\beta^\top Y A(\kappa)\Lambda(x(t))=\pi_d(\gamma)^\top Y A(\kappa)\Lambda(x(t))=(e_j-\gamma_{d+1} e_i)^\top\Lambda(x(t))=x(t)^{y_i}\Big(x(t)^{y_j-y_i}-\gamma_{d+1}\Big),$$
 which is \eqref{eq:to_be_proved}.
 
 Conversely, assume that \eqref{eq:to_be_proved} holds. Then, for all $x\in\RR^d_{\geq0}$ we have
 $$(\beta^\top -\pi_d(\gamma))^\top Y A(\kappa)\Lambda(x)=0.$$
 Since the entries of $\Lambda$ are linearly independent monomials on $\RR^d$, it must be $(\beta^\top -\pi_d(\gamma)) \in\ker A(\kappa)^\top Y^\top$, which implies that  $\beta\in\hat\Gamma_{ij}(\kappa)$ by Proposition~\ref{prop:structure_Gamma}.
\end{proof}

\subsection{Rejection of persistent disturbances}

We recall here the formal definition of ``oscillation'', as intended in this paper.
\begin{definition}
 We say that a function $g\colon\RR_{\geq0}\to\RR$ \emph{oscillates} around a value $\overline q\in\RR$ if for each $t\in\RR_{\geq0}$ there exist $t_+>t$ and $t_->t$ such that
 $$g(t_+)>q\quad\text{and}\quad g(t_-)<q.$$
\end{definition}

The following result holds.
\begin{theorem}\label{thm:control_app}
   Consider a mass-action system, with associated differential equation
   $$\frac{d}{dt} x(t)=f(x(t)).$$
   Assume that $y_i$ and $y_j$ are two complexes for which $\hat\Gamma_{ij}(\kappa)$ is non-empty. Then, there exists $\overline q\in\RR_{\geq0}$ such that $c^{y_j-y_i}=\overline q$ for all steady states $c$ with $c^{y_i}\neq0$.
  Consider an arbitrary function $u\colon \RR_{\geq0}^d\times\RR_{\geq0}\to \RR^d$ such that a solution to
   $$\frac{d}{dt}\tilde x(t)=f(\tilde x(t))+u(\tilde x(t),t)$$
   exists with $\tilde x(t)\geq0$ for all $t\geq0$. Assume that there exists a $\hat\gamma\in\hat\Gamma_{ij}(\kappa)$ which is orthogonal to the vector $u(x,t)$ for any $x\in\RR^d_{\geq0}$, $t\in\RR_{\geq0}$. Then, for any initial condition $\tilde x(0)\in\RR^d_{\geq0}$, at least one of the following statements holds:
  \begin{enumerate}
  \item\label{part:infty} There is a species $X_k\in\supp \hat\gamma$ such that $\limsup_{t\to\infty}\tilde x_k(t)=\infty$;
  \item\label{part:zero} There is a species $X_k\in\supp y_i$ such that $\liminf_{t\to\infty} \tilde x_k(t)=0$;
  \item\label{part:oscillation} $\tilde x(t)^{y_j-y_i}$ oscillates around $\overline q$;
  \item\label{part:convergence} $\displaystyle\lim_{t\to\infty}\int_{t}^\infty \Big|\tilde x(s)^{y_j-y_i}-\overline q\Big|ds=0$.
 \end{enumerate}
\end{theorem}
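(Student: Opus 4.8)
The plan is to reduce the whole statement to the linear constrained integral feedback supplied by Theorem~\ref{thm:lcif_app}. Fix once and for all a vector $\gamma\in\Gamma_{ij}(\kappa)$ (which exists since $\hat\Gamma_{ij}(\kappa)\neq\emptyset$) and set $\overline q=\gamma_{d+1}$. The first claim — that $c^{y_j-y_i}=\overline q$ for every steady state $c$ with $c^{y_i}\neq0$ — is then exactly \eqref{eq:same_q} in Proposition~\ref{prop:structure_Gamma}, after noting that $c\geq0$ together with $c^{y_i}\neq0$ forces $c^{y_i}>0$. Next, let $\hat\gamma\in\hat\Gamma_{ij}(\kappa)$ be the vector that is orthogonal to every $u(x,t)$, and consider the linear function $\phi(x)=\scal{\hat\gamma}{x}$. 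Since $\hat\gamma-\pi_d(\gamma)\in\ker A(\kappa)^\top Y^\top$ by Proposition~\ref{prop:structure_Gamma}, the computation already carried out in the proof of Theorem~\ref{thm:lcif_app} gives $\hat\gamma^\top Y A(\kappa)\Lambda(x)=x^{y_j}-\overline q\,x^{y_i}$ for all $x$; combining this with the orthogonality $\scal{\hat\gamma}{u(\tilde x(t),t)}=0$ yields, along the perturbed trajectory,
\[
\frac{d}{dt}\phi(\tilde x(t))=\scal{\hat\gamma}{f(\tilde x(t))}+\scal{\hat\gamma}{u(\tilde x(t),t)}=\tilde x(t)^{y_j}-\overline q\,\tilde x(t)^{y_i}=:h(t).
\]
Thus $\phi(\tilde x(\cdot))=\phi(\tilde x(0))+\int_0^{\cdot}h(s)\,ds$, and the entire dichotomy is driven by the long-time behaviour of this antiderivative.

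I would then argue by elimination: assume statements \ref{part:infty}, \ref{part:zero} and \ref{part:oscillation} all fail, and deduce \ref{part:convergence}. Failure of \ref{part:infty} means $\limsup_{t\to\infty}\tilde x_k(t)<\infty$ for every $X_k\in\supp\hat\gamma$; since $\tilde x$ is continuous on $[0,\infty)$ this makes each such $\tilde x_k$ bounded on all of $[0,\infty)$, and as $\tilde x\geq0$ and only the coordinates of $\supp\hat\gamma$ enter $\phi$, the function $\phi(\tilde x(\cdot))$ is bounded above and below. Failure of \ref{part:zero} means $\liminf_{t\to\infty}\tilde x_k(t)>0$ for every $X_k\in\supp y_i$, so there exist $\varepsilon>0$ and $T_0\geq0$ with $\tilde x(t)^{y_i}\geq\varepsilon$ for all $t\geq T_0$. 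On $[T_0,\infty)$ we therefore have $h(t)=\tilde x(t)^{y_i}\bigl(\tilde x(t)^{y_j-y_i}-\overline q\bigr)$ with a positive prefactor (note every negative entry of $y_j-y_i$ lies in $\supp y_i$, so $\tilde x(t)^{y_j-y_i}$ is well defined and finite there), whence the sign of $h(t)$ equals the sign of $\tilde x(t)^{y_j-y_i}-\overline q$ and $|h(t)|\geq\varepsilon\,\bigl|\tilde x(t)^{y_j-y_i}-\overline q\bigr|$.

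Now I invoke failure of \ref{part:oscillation}. Negating the definition of oscillation, there is a time $t_0$ — which may be taken $\geq T_0$ — past which $\tilde x(t)^{y_j-y_i}$ stays $\leq\overline q$ or stays $\geq\overline q$; by the sign identity of the previous paragraph, $h$ then has constant sign on $[t_0,\infty)$. Assume without loss of generality $h\geq0$ on $[t_0,\infty)$ (the case $h\leq0$ is identical up to a sign flip). Then $\phi(\tilde x(\cdot))$ is non-decreasing on $[t_0,\infty)$ and bounded above, hence converges to a finite limit; consequently $\int_{t_0}^\infty h(s)\,ds<\infty$ and $\int_t^\infty h(s)\,ds\to0$ as $t\to\infty$. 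Because $h\geq0$ on $[t_0,\infty)$ this says $\int_t^\infty|h(s)|\,ds\to0$, and using $|h(s)|\geq\varepsilon\,\bigl|\tilde x(s)^{y_j-y_i}-\overline q\bigr|$ for $s\geq t_0$ we conclude $\int_t^\infty\bigl|\tilde x(s)^{y_j-y_i}-\overline q\bigr|\,ds\to0$, which is \ref{part:convergence}.

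The step I expect to be the main obstacle is the final comparison: the antiderivative identity directly controls only $\int h=\int\tilde x^{y_i}\bigl(\tilde x^{y_j-y_i}-\overline q\bigr)$, not $\int\bigl|\tilde x^{y_j-y_i}-\overline q\bigr|$, so the uniform lower bound $\tilde x^{y_i}\geq\varepsilon$ is genuinely needed — which is exactly why hypothesis \ref{part:zero} is phrased through $\supp y_i$ rather than $\supp\hat\gamma$. Secondary care is required in correctly negating the oscillation condition (reducing it to eventual one-sidedness) and in checking that the two sign cases in the last paragraph are truly symmetric; both are routine but easy to fumble.
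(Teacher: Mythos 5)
Your proposal is correct and follows essentially the same route as the paper's proof: the CIF identity from Theorem~\ref{thm:lcif_app} plus orthogonality gives $\frac{d}{dt}\scal{\hat\gamma}{\tilde x(t)}=\tilde x(t)^{y_i}\bigl(\tilde x(t)^{y_j-y_i}-\overline q\bigr)$, and one argues by elimination using boundedness of $\scal{\hat\gamma}{\tilde x(t)}$, the eventual lower bound on $\tilde x^{y_i}$, and eventual sign-constancy from the failure of oscillation. If anything, your ordering is slightly more careful than the paper's, which bounds $\int_0^t(\tilde x^{y_j-y_i}-\overline q)$ before invoking sign-constancy; you correctly establish the constant sign first and then use monotone bounded convergence.
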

\begin{proof}
 It follows from Theorem~\ref{thm:lcif_app} that there exists $\overline q\in\RR_{\geq0}$ such that $c^{y_j-y_i}=\overline q$ for all steady states $c$ with $c^{y_i}\neq0$. Note that in this setting $c^{y_i}\neq0$ is equivalent to $c^{y_i}>0$, since the state space is limited to vectors of non-negative concentrations.
 
 Fix an initial condition $\tilde x(0)\in\RR^d_{\geq0}$. Assume that \ref{part:infty} does not occur. Hence, there exists $M\in\RR_{>0}$ such that
 $$|\scal{\hat\gamma}{\tilde x(t)}|<M\quad\text{for all }t\in\RR_{\geq0}.$$
 By the fact that $\hat\gamma$ is orthogonal to $u(x,t)$ for all $x\in\RR^d_{\geq0}$ and all $t\geq0$, and by Theorem~\ref{thm:lcif_app}, we have
 $$\frac{d}{dt}\scal{\hat\gamma}{\tilde x(t)}=\scal{\hat\gamma}{f(\tilde x(t))}=\tilde x(t)^{y_i}\Big(\tilde x(t)^{y_j-y_i}-\overline q\Big).$$
 It follows that
 \begin{equation}\label{eq:integral_bounded}
  \left|\int_0^t \tilde x(s)^{y_i}\Big(\tilde x(s)^{y_j-y_i}-\overline q\Big)ds\right|<M-|\scal{\hat\gamma}{\tilde x(0)}|\quad\text{for all }t\in\RR_{\geq0}.
 \end{equation}
 If \ref{part:zero} does not hold, then there exists $m\in\RR_{>0}$ such that $x(t)^{y_i}>m$ for all $t\in\RR_{>0}$. Together with \eqref{eq:integral_bounded}, this would imply that there exists $M^\star\in\RR_{>0}$ such that
 \begin{equation}\label{eq:difference_bound}
  \left|\int_0^t \Big(\tilde x(s)^{y_j-y_i}-\overline q\Big)ds\right|<M^\star\quad\text{for all }t\in\RR_{\geq0}.
 \end{equation}
 If \ref{part:oscillation} does not hold, then there exists $t^\star\in\RR_{>0}$ such that $\tilde x(t)^{y_j-y_i}-\overline q$ maintains the same sign for all $t>t^\star$, which together with \eqref{eq:difference_bound} implies \ref{part:convergence}. The proof is then concluded.
\end{proof}

Now we prove Theorem~\ref{thm:control}, which is stated in the main text and which we state here again for convenience. Theorem~\ref{thm:control} follows almost entirely from Theorem~\ref{thm:control_app}.
\thmcontrol*
\begin{proof}
 Assume \ref{part:infinity_or_zero} does not hold. Then, there exists $\varepsilon\in\RR_{>0}$ such that
 \begin{equation}\label{eq:boxed}
 \varepsilon\leq\|\tilde x(t)\|\leq \frac{1}{\varepsilon}\quad\text{for all }t\in\RR_{\geq0}.
 \end{equation}
 Moreover, it follows from Theorem~\ref{thm:control_app} that at least one of the following holds:
 \begin{enumerate}
  \item\label{a} $\tilde x_n(t)^{(y_j-y_i)_n}$ oscillates around $q^{(y_j-y_i)_n}$, implying that $\tilde x_n(t)$ oscillates around $q$;
  \item\label{b} $\displaystyle\lim_{t\to\infty}\int_{t}^\infty \Big|\tilde x_n(s)^{(y_j-y_i)_n}- q^{(y_j-y_i)_n}\Big|ds=0$.
 \end{enumerate}
 Since \ref{b} clearly implies \ref{part:ACR_convergence}, to complete the proof it suffices to show that \ref{a} implies $\hat\gamma\neq \hat\gamma_n e_n$. 
 
 Assume \ref{a} holds. If it were $\hat\gamma=\hat\gamma_n e_n$, then by Theorem~\ref{thm:lcif_app} and by the fact that $u(x,t)$ is orthogonal to $\hat\gamma$ for all $x\in\RR^d_{\geq0}$ we would have
 $$\hat\gamma_n\frac{d}{dt} \tilde x_n(t)=\frac{d}{dt}\scal{\hat\gamma}{\tilde x(t)}=\scal{\hat\gamma}{f(\tilde x(t))}=\tilde{x}(t)^{y_i}\Big(\tilde x_n(t)-q\Big)\quad\text{for all }t\in\RR_{\geq0}.$$
 By \eqref{eq:boxed}, there would be $m,M\in\RR_{>0}$ such that
 $$m\Big(\tilde x_n(t)-q\Big)\leq \hat\gamma_n\frac{d}{dt}\Big(\tilde x_n(t)-q\Big)\leq M \Big(\tilde x_n(t)-q\Big)\quad\text{for all }t\in\RR_{\geq0},$$
 which would in turn imply that $\tilde x_n(t)-q$ maintains the same sign for all $t\in\RR_{\geq0}$. Hence, \ref{a} could not hold and the proof is concluded. 
\end{proof}

\subsection{Inclusion in larger systems}

Loosely speaking, a subsystem $\Sy'$ of a reaction system $\Sy$ is simply the reaction system generated by a subset of the reactions of $\Sy$, with the same choice of rate functions. In order to give a formal definition, some more care is needed as the number of species involved in the subsystem may be smaller (hence the dimensions of $\Sy$ and $\Sy'$ may be different). 
\begin{definition}
 Let $\Sy=(\Sp, \Cx, \Rc, \{\lambda_{ij}\}_{1\leq i,j\leq m})$ be a reaction system with $d$ species and $m$ complexes. A \emph{subsystem} of $\Sy$ is a reaction system $\Sy'=(\Sp', \Cx', \Rc', \{\lambda'_{ij}\}_{1\leq i,j\leq m'})$ with $d'$ species and $m'$ complexes such that, up to reordering of $\Sp$ and $\Cx$,
 \begin{itemize}
  \item $\Sp'=\{X_i\in\Sp\,:\,1\leq i\leq d'\}$;
  \item $y_{in}=0$ for all $y_i\in\Cx$ with $1\leq i\leq m'$ and for all $d'<n\leq d$;
  \item $\Cx'=\{y_i'\,:\,y_i'=\pi(y_i), y_i\in\Cx, 1\leq i\leq m'\}$, where $\pi:\RR^d\to \RR^{d'}$ is the projection onto the first $d'$ components;
  \item $\Rc'\subseteq\{y'_i\to y'_j\,:\, y_i\to y_j\in\Rc\}$;
  \item for all $1\leq i,j\leq m'$ and all $x\in\RR^d_{\geq0}$
   $$\lambda'_{ij}(\pi(x))=\begin{cases}
                  \lambda_{ij}(x)&\text{if }y'_i\to y'_j\in\Rc'\\
                  0&\text{otherwise.}
                 \end{cases}$$
 \end{itemize}
\end{definition}
With a slight abuse of notation due to the potential different lenght of the complexes of $\Sy$ and $\Sy'$, we further say that $y_k\to y_l$ is a reaction of $\Sy$ but is not a reaction of $\Sy'$ if $y_k\to y_j\in\Rc$ and either $\max{k,l}>m'$ or $y'_k\to y'_l\notin\Rc'$. In this case, we write $y_k\to y_l\in\Rc\setminus\Rc'$.

The following result is stated in the main text.
\thmextension*

Here, we prove the following, stronger version.
\begin{corollary}\label{thm:extension_app}
  Consider a reaction system $\Sy=(\Sp, \Cx, \Rc, \{\lambda_{ij}\}_{1\leq i,j\leq m})$, and let $\Sy'=(\Sp', \Cx', \Rc', \{\lambda'_{ij}\}_{1\leq i,j\leq m'})$ be a sub-system. Assume that $\Sy'$ is a mass-action system with two complexes $\pi(y_i)$ and $\pi(y_j)$ for which $\hat\Gamma_{ij}(\kappa)$ is non-empty. Moreover, assume there exists $\hat\gamma\in\hat\Gamma_{ij}(\kappa)$ such that $\pi(y_l-y_k)$ is orthogonal to $\hat\gamma$ for all $y_k\to y_l\in\Rc\setminus\Rc'$.
  Then, there exists a value $\overline q\in\RR_{\geq0}$ such that
  $$(c')^{\pi(y_j-y_i)}=\overline q$$
  for all steady states $c'$ of $\Sy'$ such that $(c')^{\pi(y_i)}>0$, and
  $$c^{y_j-y_i}=\overline q$$
  for all steady states $c$ of $\Sy$ such that $c^{y_i}>0$.
 \end{corollary}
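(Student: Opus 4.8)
The plan is to reduce both assertions to Theorem~\ref{thm:lcif_app} applied to the subsystem $\Sy'$. I would fix $\hat\gamma\in\hat\Gamma_{ij}(\kappa)$ as in the hypothesis, choose a lift $\gamma\in\Gamma_{ij}(\kappa)\subseteq\RR^{d'+1}$ (so $\hat\gamma$ is its projection onto the first $d'$ coordinates), and set $\overline q:=\max\{\gamma_{d'+1},0\}$. The assertion about $\Sy'$ is then almost immediate: by \eqref{eq:fixed_value} in Theorem~\ref{thm:lcif_app}, every steady state $c'$ of $\Sy'$ with $(c')^{\pi(y_i)}>0$ satisfies $(c')^{\pi(y_j-y_i)}=\gamma_{d'+1}$; this common value is a quotient of a non-negative number by a positive one, hence $\geq0$, so it equals $\overline q$. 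If no such $c'$ exists, that assertion is vacuous.

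For the assertion about $\Sy$, I would extend $\hat\gamma$ to a vector $\eta\in\RR^d$ by appending zeros on the species of $\Sy$ outside $\Sp'$, so that $\eta$ is supported on $\Sp'$, and then study $\scal{\eta}{f(x)}$, where $f$ is the (not necessarily mass-action, possibly time-dependent) vector field of $\Sy$. Splitting $f$ according to whether a reaction lies in $\Rc'$ or in $\Rc\setminus\Rc'$ and pairing with $\eta$: the terms from $\Rc\setminus\Rc'$ vanish one by one, since $\eta$ being supported on $\Sp'$ gives $\scal{\eta}{y_l-y_k}=\scal{\hat\gamma}{\pi(y_l-y_k)}=0$ by hypothesis. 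For a reaction $y_k\to y_l\in\Rc'$ the complexes have zero entries outside $\Sp'$, so $\scal{\eta}{y_l-y_k}=\scal{\hat\gamma}{\pi(y_l)-\pi(y_k)}$ and $x^{y_k}=\pi(x)^{\pi(y_k)}$, while the definition of subsystem together with the mass-action hypothesis on $\Sy'$ gives $\lambda_{kl}(x,t)=\lambda'_{kl}(\pi(x))=\kappa_{kl}\,\pi(x)^{\pi(y_k)}$. Hence the $\Rc'$-part of $\scal{\eta}{f(x)}$ is exactly $\hat\gamma^\top Y'A(\kappa)\Lambda'(\pi(x))$, the expression handled by Theorem~\ref{thm:lcif_app}, which is a pointwise identity (valid for all states, not merely along trajectories) equal to $\pi(x)^{\pi(y_i)}\big(\pi(x)^{\pi(y_j-y_i)}-\gamma_{d'+1}\big)=x^{y_i}\big(x^{y_j-y_i}-\gamma_{d'+1}\big)$.

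Putting the two pieces together would give the identity $\scal{\eta}{f(x)}=x^{y_i}\big(x^{y_j-y_i}-\gamma_{d'+1}\big)$ for every state $x$ and every $t$. Evaluating at a steady state $c$ of $\Sy$, where $f(c)=0$, then forces $0=c^{y_i}\big(c^{y_j-y_i}-\gamma_{d'+1}\big)$, so $c^{y_j-y_i}=\gamma_{d'+1}$ whenever $c^{y_i}>0$; as before $\gamma_{d'+1}\geq0$ in that case, so $c^{y_j-y_i}=\overline q$, and if no such $c$ exists the assertion is vacuous. In particular the same $\overline q$ serves both $\Sy$ and $\Sy'$.

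I expect the only delicate point to be the index bookkeeping with the projection $\pi$: one must verify that, for reactions in $\Rc'$, both the stoichiometric vectors and the rate functions of $\Sy$ agree after $\pi$ with those of $\Sy'$ (which is exactly what the definition of subsystem is designed to ensure) and that zero-padding $\hat\gamma$ leaves the relevant inner products unchanged. Beyond this administrative step, and the harmless edge case where neither system admits a steady state positive on $y_i$ (where $\overline q\in\RR_{\geq0}$ is arbitrary), I do not anticipate a genuine obstacle, since the substantive content is entirely carried by Theorem~\ref{thm:lcif_app}.
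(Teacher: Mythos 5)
Your proposal is correct and follows essentially the same route as the paper: both split the contribution to $\frac{d}{dt}\scal{\hat\gamma}{\pi(x(t))}$ into the reactions of $\Rc'$ (handled by the pointwise CIF identity of Theorem~\ref{thm:lcif_app}) and those of $\Rc\setminus\Rc'$ (killed by the orthogonality hypothesis), then evaluate at a steady state of $\Sy$; your zero-padded $\eta\in\RR^d$ is just the adjoint formulation of the paper's decomposition of the projected vector field into $f(\pi(x))+u(x,t)$. The extra care with $\overline q=\max\{\gamma_{d'+1},0\}$ in the vacuous case is a harmless refinement the paper leaves implicit.
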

 \begin{proof}
  The existence of $\overline q\in\RR_{\geq0}$ such that
  $$(c')^{\pi(y_j-y_i)}=\overline q$$
  for all steady states $c'$ of $\Sy'$ with $(c')^{\pi(y_i)}>0$ follows from Theorem~\ref{thm:lcif_app}. 
  
  In order to prove the second part of the statement, we can write
  $$\frac{d}{dt} \pi(x(t))=f(\pi(x(t)))+u(x(t),t),$$
  where
  \begin{align*}
   f(\pi(x(t)))&=\sum_{y'_k\to y'_l\in\Rc'}(y'_l-y'_k)\kappa_{ij}\pi(x(t))^{y_i}\\
   u(x(t),t)&=\sum_{y_k\to y_l\in\Rc\setminus\Rc'}\pi(y_l-y_k)\lambda_{ij}(x(t),t).
  \end{align*}
  Note that $\hat\gamma$ is orthogonal to $u(x,t)$, for all $x\in\RR^d_{\geq0}$ and $t\in\RR_{\geq0}$. Hence, if $c$ is a steady state of $\Sy$, it follows from Theorem~\ref{thm:lcif_app} that
  \begin{align*}
   0&=\frac{d}{dt}\scal{\hat\gamma}{\pi(c)}=\scal{\hat\gamma}{f(\pi(c))}\\
   &=\pi(c)^{\pi(y_i)}\Big(\pi(c)^{\pi(y_j-y_i)}-\overline q\Big)\\
   &=c^{y_i}\Big(c^{y_j-y_i}-\overline q\Big),
  \end{align*}
  which concludes the proof.
 \end{proof}

\section{Example of an ACR system with no linear CIF}
Assume that $X_i$ is an ACR species, with ACR value $q$. It is not always possible to find a \emph{linear} combination of species whose derivative at time $t$ is of the form $r(t)(x_i(t)^\alpha-q)$, for some polynomial $r(t)$ and some real number $\alpha$. As an example, consider the following mass-action system:
 \begin{equation*}
 \begin{split}
  \schemestart
   A+B\arrow{->[$\kappa_1$]}B+C\arrow{<=>[$\kappa_2$][$\kappa_3$]}2B
   \arrow(@c1.south east--.north east){0}[-90,.25]
   B\arrow{<=>[$\kappa_4$][$\kappa_5$]}2E\arrow{->[$\kappa_6$]}2D
   \arrow(@c4.south east--.north east){0}[-90,.25]
   C\arrow{->[$\kappa_7$]}A
   \arrow(@c7.south east--.north east){0}[-90,.25]
   D\arrow{->[$\kappa_8$]}E
  \schemestop
  \end{split}
\end{equation*}
Let us order the species alphabetically, and the complexes as $(A+B, B+C, 2B, B, 2E, 2D, C, A, D, E)$. 
It is proven in \cite{AC:ACR} that the species $A$ is ACR, with ACR value
$$q=\frac{\kappa_3\kappa_7}{\kappa_1\kappa_2}.$$
In this case there is no linear combination of species whose derivative is of the form 
\begin{equation}\label{eq:desired_form}
 r(t)\left(x_1(t)^\alpha-q^\alpha\right),
\end{equation}
for some polynomial function $r(t)$ and some real number $\alpha$. Indeed, for any $\beta\in \RR^{d}$, the derivative of $\scal{\beta}{x(t)}$ is given by $\beta^\top Y A(\kappa) \Lambda(x(t))$, which in this case is a polynomial of the form 
$$w_1x_1(t)x_2(t)+w_2x_3(t)+w_3x_2(t)x_3(t)+w_4x_2(t)^2+w_5x_2(t)+w_6x_5(t)^2+w_7x_4(t),$$
for some real coefficients $w_i$. The only monomial containing $x_1(t)$ is the first one, so if we want the derivative of $\scal{\beta}{x(t)}$ to be of the form \eqref{eq:desired_form}, necessarily $w_1\neq0$, $\alpha=1$, and $r(t)=w_1x_2(t)$. We can further deduce that necessarily $w_2=w_3=w_4=w_6=w_7=0$, and $w_5=-w_1q.$
In matrix form, this is equivalent to 
$$\beta^\top Y A(\kappa)=w_1 (1,0,0,-q,0,0,0,0,0,0),$$
but this is not possible because in this case the vector on the right-hand side does not belong to the left image of $Y A(\kappa)$.

\section{Applications to the systems introduced in the main text.}\label{sec:models}

Here we use the theory developed in this work to analyze the reaction systems introduced in the main text.

\subsection{Toy example}

Consider the mass-action system
 $$
  \schemestart
   $A+B$\arrow{->[$\kappa_1$]}$2B$
   \arrow(@c1.south east--.north east){0}[-90,.25]
   $B$\arrow{->[$\kappa_2$]}$A$
  \schemestop
 $$
 Order the species alphabetically, and the complexes in the appearance order from left to right and from top to bottom, as $A+B$, $2B$, $B$, and $A$. In this case, we have
 \begin{equation}\label{eq:YAk_toyexample}
     Y=\begin{pmatrix}
      1 & 0 & 0 & 1\\
      1 & 2 & 1 & 0
     \end{pmatrix}\quad\text{and}\quad
   A(\kappa)=\begin{pmatrix}
              -\kappa_1 & 0 & 0 & 0\\
              \kappa_1 & 0 & 0 & 0\\
              0 & 0 & -\kappa_2 & 0\\
              0 & 0 & \kappa_2 & 0\\
             \end{pmatrix}
 \end{equation}
 The two non-terminal complexes differing for the entry relative to $A$ are the first and the third ones. Since the model has deficiency 1, it follows from Theorem~\ref{thm:feinberg_implies_cond} that $\hat\Gamma_{13}(\kappa)$ is non-empty. We can use Matlab to explicitly calculate it. First, in order to define $Y$ and $A(\kappa)$ in Matlab, we define the symbolic variables $\kappa_1$ and $\kappa_2$ and require they are positive real numbers via
 \begin{lstlisting}[style=Matlab-bw]
 k1=sym('k1','positive');
 k2=sym('k2','positive');
 \end{lstlisting}
 We then define the matrices \ml{Y} and \ml{Ak} as in \eqref{eq:YAk_toyexample}. We can calculate $\Psi_{13}(\kappa)$ via
 \begin{lstlisting}[style=Matlab-bw]
  e=eye(4);
  simplify(null([Ak'*Y' e(:,1) e(:,3)]))
 \end{lstlisting}
 which returns the following matrix, whose columns are a basis for $\Psi_{13}(\kappa)$:
 {\renewcommand{\arraystretch}{1.5}
 $$
 \begin{pmatrix}
  1 & -\frac{1}{\kappa_2}\\
  1 & 0\\
  0 & -\frac{\kappa_1}{\kappa_2}\\
  0 & 1
 \end{pmatrix}.
 $$
 It then follows from Proposition~\ref{prop:calculating_Gamma} that
 $$\Gamma_{13}(\kappa)=\left\{
 \begin{pmatrix}
  \frac{1}{\kappa_2}\\
  0\\
  \frac{\kappa_1}{\kappa_2}  
 \end{pmatrix}+a
 \begin{pmatrix}
  1 \\ 1 \\ 0
 \end{pmatrix}\,:\, a\in\RR
 \right\},$$
 which in turn implies that
 \begin{equation}\label{eq:gamma_toymodel}
\hat\Gamma_{13}(\kappa)=\left\{
 \begin{pmatrix}
  \frac{1}{\kappa_2}\\
  0  
 \end{pmatrix}+a
 \begin{pmatrix}
  1 \\ 1
 \end{pmatrix}\,:\, a\in\RR
 \right\}  
 \end{equation}
 and together with Theorem~\ref{thm:lcif} that the ACR value of $A$ is $\kappa_1/\kappa_2$. Note that in this case
 \begin{equation}\label{eq:st_toymodel}
\St=\spann_{\RR}\begin{pmatrix}
                    1 \\ -1
                   \end{pmatrix}\quad\text{and}\quad
      \Cons=\spann_{\RR}\begin{pmatrix}
                    1 \\ 1
                   \end{pmatrix}.  
 \end{equation}
Moreover, each connected component contains exactly one terminal component. Hence, Proposition~\ref{prop:structure_Gamma} applies, and as a matter of fact
$$\hat\Gamma_{13}(\kappa)=
 \begin{pmatrix}
  \frac{1}{\kappa_2}\\
  0  
 \end{pmatrix}+\Cons.
$$
}
Finally, while it is clear from \eqref{eq:gamma_toymodel} that a vector with zero first component and a vector with zero second component are in $\hat\Gamma_{13}(\kappa)$, this could be derived from Corollary~\ref{cor:structural} and from \eqref{eq:st_toymodel} without explicitly calculating $\hat\Gamma_{13}(\kappa)$. As a consequence, due to Theorem~\ref{thm:control}, disturbances could be applied to the production and degradation rates of $A$ (or $B$) while maintaining the absolute concentration robustness of the species $A$, its ACR value, and the linear CIF described in Theorem~\ref{thm:lcif}. 

\subsection{EnvZ-OmpR osmoregulatory system}
Consider the mass-action system
$$
\schemestart
EnvZ-D  \arrow(1--2){<=>[*0{$\kappa_1$}][*0{$\kappa_2[\ADP]$}]} EnvZ \arrow(@2--3){<=>[*0{$\kappa_3[\ATP]$}][*0{$\kappa_4$}]} EnvZ-T \arrow(@3--4){->[*0{$\kappa_5$}]} EnvZ-P
\arrow(@1.south east--.north east){0}[-90,.50]
EnvZ-P \+ OmpR \arrow(5--6){<=>[*0{$\kappa_6$}][*0{$\kappa_7$}]} EnvZ-OmpR-P \arrow(@6--7){->[*0{$\kappa_8$}]} EnvZ \+ OmpR-P
\arrow(@5.south east--.north east){0}[-90,.50]
EnvZ-D \+ OmpR-P \arrow(8--9){<=>[*0{$\kappa_9$}][*0{$\kappa_{10}$}]} EnvZ-OmpR-D-P \arrow(@9--10){->[*0{$\kappa_{11}$}]} EnvZ-D \+ OmpR
 \schemestop
$$
Order both the species and the complexes according to their appearance order, from left to right and from top to bottom. Hence, the 8 species are ordered as EnvZ-D, EnvZ, EnvZ-T, EnvZ-P, OmpR, EnvZ-OmpR-P, OmpR-P, and EnvZ-OmpR-D-P. The 10 complexes are ordered as EnvZ-D, EnvZ, EnvZ-T, EnvZ-P, EnvZ-P+OmpR, EnvZ-OmpR-P, EnvZ+OmpR-P, EnvZ-D+OmpR-P, EnvZ-OmpR-D-P, and EnvZ-D+OmpR. 

It can be checked that $\dim \St=6$ and
\begin{equation}\label{eq:cons_envz}
\Cons=\spann_\RR\left\{
\begin{pmatrix}
 1 \\ 1 \\ 1\\ 1\\ 0 \\ 1 \\ 0\\ 1
\end{pmatrix},
\begin{pmatrix}
 0 \\ 0 \\ 0\\ 0\\ 1 \\ 1\\ 1\\ 1
\end{pmatrix}
\right\}.
\end{equation}
The above conservation laws correspond to the fact that the total mass of the chemical species containing some form of EnvZ is conserved, as well as the total mass of the chemical species containing some form of OmpR. We can calculate the deficiency as
$$\delta=m-\ell-\dim\St=10-3-6=1.$$

The non-terminal complexes that only differ for the entry relative to OmpR-P are the first and the eighth ones. Therefore, we are interested in the study of $\hat\Gamma_{18}(\kappa)$, which by Theorem~\ref{thm:feinberg_implies_cond} is not empty. We have
$$
Y=\begin{pmatrix}
   1 & 0 & 0 & 0 & 0 & 0 & 0 & 1 & 0 & 1\\
0 & 1 & 0 & 0 & 0 & 0 & 1 & 0 & 0 & 0\\
0 & 0 & 1 & 0 & 0 & 0 & 0 & 0 & 0 & 0\\
0 & 0 & 0 & 1 & 1 & 0 & 0 & 0 & 0 & 0\\
0 & 0 & 0 & 0 & 1 & 0 & 0 & 0 & 0 & 1\\
0 & 0 & 0 & 0 & 0 & 1 & 0 & 0 & 0 & 0\\
0 & 0 & 0 & 0 & 0 & 0 & 1 & 1 & 0 & 0\\
0 & 0 & 0 & 0 & 0 & 0 & 0 & 0 & 1 & 0
  \end{pmatrix}$$
and
$$A(\kappa)=\begin{pmatrix} 
-\kappa_{1} & \kappa_2\mathrm{[ADP]} & 0 & 0 & 0 & 0 & 0 & 0 & 0 & 0\\ \kappa_{1} & -\kappa_2\mathrm{[ADP]}-\kappa_3\mathrm{[ATP]} & \kappa_{4} & 0 & 0 & 0 & 0 & 0 & 0 & 0\\ 0 & \kappa_3\mathrm{[ATP]} & -\kappa_{4}-\kappa_{5} & 0 & 0 & 0 & 0 & 0 & 0 & 0\\ 0 & 0 & \kappa_{5} & 0 & 0 & 0 & 0 & 0 & 0 & 0\\ 0 & 0 & 0 & 0 & -\kappa_{6} & \kappa_{7} & 0 & 0 & 0 & 0\\ 0 & 0 & 0 & 0 & \kappa_{6} & -\kappa_{7}-\kappa_{8} & 0 & 0 & 0 & 0\\ 0 & 0 & 0 & 0 & 0 & \kappa_{8} & 0 & 0 & 0 & 0\\ 0 & 0 & 0 & 0 & 0 & 0 & 0 & -\kappa_{9} & \kappa_{10} & 0\\ 0 & 0 & 0 & 0 & 0 & 0 & 0 & \kappa_{9} & -\kappa_{10}-\kappa_{11} & 0\\ 0 & 0 & 0 & 0 & 0 & 0 & 0 & 0 & \kappa_{11} & 0
\end{pmatrix}.
$$
In order to define the corresponding symbolic matrices \ml{Y} and \ml{Ak} in Matlab, we can first define the symbolic positive real variables 
\begin{lstlisting}[style=Matlab-bw]
k=sym('k', [1,11], 'positive');
k(2)=k(2)*sym('ADP', 'positive');
k(3)=k(3)*sym('ATP', 'positive');
\end{lstlisting}
Then we calculate $\Psi_{18}(\kappa)$, as defined in \eqref{eq:Psi}, via
\begin{lstlisting}[style=Matlab-bw]
e=eye(10);
simplify(null([Ak'*Y' e(:,8) e(:,1)]))
\end{lstlisting}
The output of the last command is the following matrix, whose columns are a basis of $\Psi_{18}(\kappa)$.
{\renewcommand{\arraystretch}{2}
$$
   \begin{pmatrix}
   -1 & 1 & -\frac{\mathrm{[ADP]}\,\kappa_{2}\,\kappa_{11}\,\left(\kappa_{4}+\kappa_{5}\right)}{\mathrm{[ATP]}\,\kappa_{1}\,\kappa_{3}\,\kappa_{5}\,\left(\kappa_{10}+\kappa_{11}\right)}\\ -1 & 1 & -\frac{\mathrm{[ADP]}\,\kappa_{2}\,\kappa_{4}\,\kappa_{11}+\mathrm{[ADP]}\,\kappa_{2}\,\kappa_{5}\,\kappa_{11}+\mathrm{[ATP]}\,\kappa_{3}\,\kappa_{5}\,\kappa_{10}+\mathrm{[ATP]}\,\kappa_{3}\,\kappa_{5}\,\kappa_{11}}{\mathrm{[ATP]}\,\kappa_{1}\,\kappa_{3}\,\kappa_{5}\,\left(\kappa_{10}+\kappa_{11}\right)}\\ -1 & 1 & -\frac{\mathrm{[ADP]}\,\kappa_{2}\,\kappa_{4}\,\kappa_{11}+\mathrm{[ADP]}\,\kappa_{2}\,\kappa_{5}\,\kappa_{10}+2\,\mathrm{[ADP]}\,\kappa_{2}\,\kappa_{5}\,\kappa_{11}+\mathrm{[ATP]}\,\kappa_{3}\,\kappa_{5}\,\kappa_{10}+\mathrm{[ATP]}\,\kappa_{3}\,\kappa_{5}\,\kappa_{11}}{\mathrm{[ATP]}\,\kappa_{1}\,\kappa_{3}\,\kappa_{5}\,\left(\kappa_{10}+\kappa_{11}\right)}\\ -1 & 1 & -\frac{\mathrm{[ADP]}\,\kappa_{2}\,\kappa_{4}\,\kappa_{10}+2\,\mathrm{[ADP]}\,\kappa_{2}\,\kappa_{4}\,\kappa_{11}+\mathrm{[ADP]}\,\kappa_{2}\,\kappa_{5}\,\kappa_{10}+2\,\mathrm{[ADP]}\,\kappa_{2}\,\kappa_{5}\,\kappa_{11}+\mathrm{[ATP]}\,\kappa_{3}\,\kappa_{5}\,\kappa_{10}+\mathrm{[ATP]}\,\kappa_{3}\,\kappa_{5}\,\kappa_{11}}{\mathrm{[ATP]}\,\kappa_{1}\,\kappa_{3}\,\kappa_{5}\,\left(\kappa_{10}+\kappa_{11}\right)}\\ 1 & 0 & \frac{\mathrm{[ADP]}\,\kappa_{2}\,\left(\kappa_{4}+\kappa_{5}\right)}{\mathrm{[ATP]}\,\kappa_{1}\,\kappa_{3}\,\kappa_{5}}\\ 0 & 1 & -\frac{\mathrm{[ADP]}\,\kappa_{2}\,\kappa_{4}\,\kappa_{11}+\mathrm{[ADP]}\,\kappa_{2}\,\kappa_{5}\,\kappa_{11}+\mathrm{[ATP]}\,\kappa_{3}\,\kappa_{5}\,\kappa_{10}+\mathrm{[ATP]}\,\kappa_{3}\,\kappa_{5}\,\kappa_{11}}{\mathrm{[ATP]}\,\kappa_{1}\,\kappa_{3}\,\kappa_{5}\,\left(\kappa_{10}+\kappa_{11}\right)}\\ 1 & 0 & 0\\ 0 & 1 & 0\\ 0 & 0 & -\frac{\mathrm{[ADP]}\,\kappa_{2}\,\kappa_{9}\,\kappa_{11}\,\left(\kappa_{4}+\kappa_{5}\right)}{\mathrm{[ATP]}\,\kappa_{1}\,\kappa_{3}\,\kappa_{5}\,\left(\kappa_{10}+\kappa_{11}\right)}\\ 0 & 0 & 1 
   \end{pmatrix}
$$
}
For convenience, denote the last vector by $\zeta(\kappa)$. It follows from Proposition~\ref{prop:calculating_Gamma} that
\begin{equation*}
 \Gamma_{18}(\kappa)=
 \left\{
  -\pi_9(\zeta(\kappa))
  +a_1
  \begin{pmatrix}
   -1 \\ -1 \\ -1\\ -1\\ 1 \\ 0\\ 1\\ 0\\ 0
  \end{pmatrix}
  +a_2
  \begin{pmatrix}
   1 \\ 1 \\ 1\\ 1\\ 0 \\ 1 \\ 0\\ 1\\ 0
  \end{pmatrix}
  \,:\, a_1,a_2\in\RR
 \right\}.
\end{equation*}
It follows that
\begin{equation}\label{eq:hat_gamma_envz}
 \hat\Gamma_{18}(\kappa)=
 \left\{
  -\pi_8(\zeta(\kappa))
  +a_1
  \begin{pmatrix}
   -1 \\ -1 \\ -1\\ -1\\ 1 \\ 0\\ 1\\ 0
  \end{pmatrix}
  +a_2
  \begin{pmatrix}
   1 \\ 1 \\ 1\\ 1\\ 0 \\ 1 \\ 0\\ 1
  \end{pmatrix}
  \,:\, a_1,a_2\in\RR
 \right\},
\end{equation}
which corresponds to \eqref{eq:calculation_ex_hat} in the main text. Moreover, it follows from Theorem~\ref{thm:lcif} that the ACR value of the species OmpR-P is
$$-\zeta_{9}(\kappa)=\frac{\mathrm{[ADP]}\,\kappa_{2}\,\kappa_{9}\,\kappa_{11}\,\left(\kappa_{4}+\kappa_{5}\right)}{\mathrm{[ATP]}\,\kappa_{1}\,\kappa_{3}\,\kappa_{5}\,\left(\kappa_{10}+\kappa_{11}\right)}$$
as reported in \eqref{eq:calculation_ex} in the main text.

Further things can be noted about this model. First, it follows from \eqref{eq:cons_envz} and \eqref{eq:hat_gamma_envz} that
$$\hat\Gamma_{18}(\kappa)=-\pi_8(\zeta(\kappa))+\Cons,$$
which is in accordance with Proposition~\ref{prop:structure_Gamma} because all connected components contain exactly one terminal component. 
Secondly, due to Corollary~\ref{cor:structural}, we can deduce without explicitly calculating $\Gamma_{18}(\kappa)$ that there is a vector $\hat\gamma$ in $\hat\Gamma_{18}(\kappa)$ whose entries relative to species EnvZ and OmpR-P are zero (which are the second and the seventh complexes, respectively). Specifically, if we consider the basis of $\Cons$ given in \eqref{eq:cons_envz} and we let $X_{l_1}=X_2=$EnvZ and $X_{l_2}=X_7=$OmpR-P, we have
$$
V=\begin{pmatrix}
   1 & 0\\
   0 & 1
  \end{pmatrix},
$$
which has rank 2. The existence of such vector $\hat\gamma$ is used in Section~\ref{sec:large} and it implies by Theorem~\ref{thm:control} that the system is robust to persistent disturbances affecting the production and degradation rates of both EnvZ and OmpR-P.

\section{An ACR signaling system covered by our theory and not by \cite{SF:ACR}}

Consider the double-phosphorylation mass-action system in Figure~\ref{fig:double}. We will show that the theory developed in \cite{SF:ACR} stays silent on whether it is ACR. However, our theory covers this case and implies the double-phosphorylated form of the transcriptional regulatory protein is ACR, for any choice of kinetic parameters $\kappa$ such that a positive steady state exists. Note that this form of response robustness may seem a bit surprising for a multisite phosphorylation mechanism, since these are often known for their multi-stability properties, notably shown in the case of the MAPK pathway \cite{AFS:detection,TTT:MAPK, FB:MAPK, FSWFS:MAPK}.

\begin{figure}[h!]
\begin{center}
 \begin{tcolorbox}[width=\textwidth, colback=olive!5!white,colframe=olive!50!white]
  Activation and phosphorylation of the sensor-transmitter protein:
  $$
  \schemestart
  $A$ \arrow(1--2){<=>[*0{$\kappa_1$[\I]}][*0{$\kappa_2$}]} $\Active$ \arrow(@2--3){->[*0{$\kappa_3$}]} $A$-\Ph
  \schemestop
  $$
 \DrawLine{olive!50!white}
  Phosphorylation of the sensory response protein:
$$
\schemestart
$A$-\Ph \+ $B$ \arrow(4--5){<=>[*0{$\kappa_4$}][*0{$\kappa_5$}]} $A$-$B$-\Ph \arrow(@5--6){->[*0{$\kappa_6$}]} $A$ \+ $B$-\Ph
\arrow(@4.south east--.north east){0}[-90,.50]
$A$-\Ph \+ $B$-\Ph \arrow(7--8){<=>[*0{$\kappa_7$}][*0{$\kappa_8$}]} $A$-$B$-\Ph\Ph \arrow(@8--9){->[*0{$\kappa_9$}]} $A$ \+ $B$-\Ph\Ph
\schemestop
$$
\DrawLine{olive!50!white}
  Activation and phosphorylation of the protein aggregate:
$$
\schemestart
$A$-$B$-\Ph \arrow(10--11){<=>[*0{$\kappa_{10}$[\I]}][$\kappa_{11}$]} $\Active$-$B$-\Ph 
\arrow(@10.south east--.north east){0}[-90,.50]
$A$-$B$-\Ph\Ph \arrow(12--13){<=>[*0{$\kappa_{12}$[\I]}][$\kappa_{13}$]} $\Active$-$B$-\Ph\Ph
\arrow(@12.south east--.north east){0}[-90,.50]
$\Active$-$B$-\Ph \arrow(14--15){->[*0{$\kappa_{14}$}]} $\Active$-$B$-\Ph\Ph \arrow(@15--16){->[*0{$\kappa_{15}$}]} $A$-\Ph \+ $B$-\Ph\Ph
\schemestop
$$
\DrawLine{olive!50!white}
  Dephosphorylation of the sensory response protein:
$$
\schemestart
$A$ \+ $B$-\Ph\Ph \arrow(17--18){<=>[*0{$\kappa_{16}$}][*0{$\kappa_{17}$}]} $B$-\Ph\Ph-$A$ \arrow(@18--19){->[*0{$\kappa_{18}$}]} $A$-\Ph \+ $B$
 \schemestop
$$
 \end{tcolorbox}
 \caption{A model of signal transduction. Here, the sensory response protein $B$ is active if two phosphoryl groups are attached. In the first line, the protein $A$ is phosphorylated in response to a stimulus [\I]. In the second and third line, the phosphoryl groups are transferred to the sensory response protein $B$. In the forth and fifth line, the protein $A$ responds by the stimulus [I] while it is bound to $B$. The resulting protein aggregate is able to gain phosphoryl groups until the three phosphoryl sites are occupied (as described in the sixth line). Finally, in the last two lines, the inactive form of the sensor-transmitter protein $A$ acts as a phosphatase on $B$.}
 \label{fig:double}
 \end{center}
\end{figure}
 
 Order the species and the complexes according to their appearance order, from left to right and from top to bottom. The 11 species are then ordered as $A$, $A^\star$, $A$-P, $B$, $A$-$B$-P, $B$-P, $A$-$B$-PP, $B$-PP, $A^\star$-$B$-P, $A^\star$-$B$-PP, and $B$-PP-$A$. The 13 complexes are ordered as $A$, $A^\star$, $A$-P, $A$-P + $B$, $A$-$B$-P, $A$ + $B$-P, $A$-P + $B$-P, $A$-$B$-PP, $A$ + $B$-PP, $A^\star$-$B$-P, $A^\star$-$B$-PP, $A$-P + $B$-PP, and $B$-PP-$A$.
It can be checked that $\dim \St=9$ and
\begin{equation}\label{eq:cons_double}
\Cons=\spann_\RR\left\{
\begin{pmatrix}
 1\\ 1\\ 1\\ 0\\ 1\\ 0\\ 1\\ 0\\ 1\\ 1\\ 1
\end{pmatrix},
\begin{pmatrix}
 0\\ 0\\ 0\\ 1\\ 1\\ 1\\ 1\\ 1\\ 1\\ 1\\ 1
\end{pmatrix}
\right\}.
\end{equation}
Similarly as for the EnvZ-OmpR osmoregulatory system, the above conservation laws correspond express that the total mass of the chemical species containing some form of the protein $A$ is conserved, as well as the total mass of the chemical species containing some form of the protein $B$. The reaction graph is given in Figure~\ref{fig:double_graph}. In particular, $\ell=2$ and the deficiency is
$$\delta=m-\ell-\dim\St=13-2-9=2.$$
So, this model is not included in the class of models studied in \cite{SF:ACR}, and Theorems~\ref{thm:feinberg_weak} and \ref{thm:feinberg_strong} do not apply.
 
 \begin{figure}[h!]
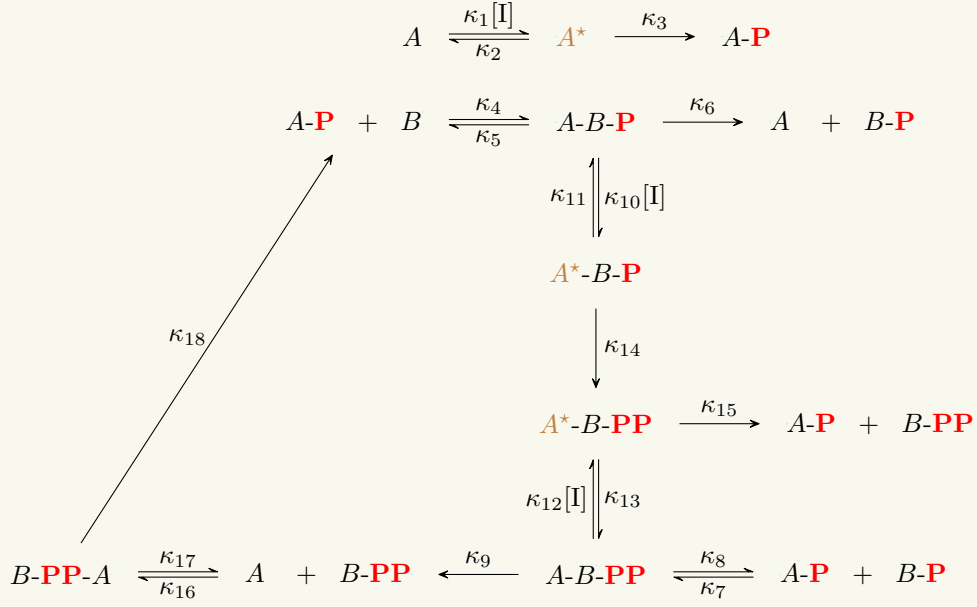

\begin{center}
 \begin{tcolorbox}[width=\textwidth, colback=olive!5!white,colframe=olive!50!white]
 $$
  \schemestart
   $A$ \arrow(1--2){<=>[*0{$\kappa_1$[\I]}][*0{$\kappa_2$}]} $\Active$ \arrow(@2--3){->[*0{$\kappa_3$}]} $A$-\Ph
   \arrow(@1.south east--.north east){0}[-90,.50]
   $A$-\Ph \+ $B$ \arrow(4--5){<=>[*0{$\kappa_4$}][*0{$\kappa_5$}]} $A$-$B$-\Ph \arrow(@5--6){->[*0{$\kappa_6$}]} $A$ \+ $B$-\Ph
   \arrow(@5--10){<=>[*0{$\kappa_{10}$[\I]}][*0{$\kappa_{11}$}]}[-90] $\Active$-$B$-\Ph 
   \arrow(@10--11){->[*0{$\kappa_{14}$}]}[-90] $\Active$-$B$-\Ph\Ph
   \arrow(@11--8){<=>[*0{$\kappa_{13}$}][*0{$\kappa_{12}$[\I]}]}[-90]$A$-$B$-\Ph\Ph 
   \arrow(@11--12){->[*0{$\kappa_{15}$}]} $A$-\Ph \+ $B$-\Ph\Ph
   \arrow(@8--9){0}[180] $A$ \+ $B$-\Ph\Ph
   \arrow(@9--13){0}[180] $B$-\Ph\Ph-$A$
   \arrow(@8--7){<=>[*0{$\kappa_8$}][*0{$\kappa_7$}]} $A$-\Ph \+ $B$-\Ph 
   \arrow(@9--@8){<-[*0{$\kappa_9$}]} 
   \arrow(@13--@9){<=>[*0{$\kappa_{17}$}][*0{$\kappa_{16}$}]} 
   \arrow(@13--@4){->[*0{$\kappa_{18}$}]}
   \schemestop
  $$
 \end{tcolorbox}
 \caption{Reaction graph for the signaling transduction system proposed in Figure~\ref{fig:double}.}
 \label{fig:double_graph}
 \end{center}
\end{figure}

The first and the ninth complexes are non-terminal, and they only differ for the entry relative B-PP. Hence, in order to apply Theorems~\ref{thm:lcif} and \ref{thm:control}, we are interested in studying $\hat\Gamma_{19}(\kappa)$. Since the deficiency of the model is 2, Theorem~\ref{thm:feinberg_implies_cond} does not apply, so to understand whether $\hat\Gamma_{19}(\kappa)$ is non-empty we need to explicitly calculate it. To this aim, we define in Matlab the following positive symbolic variables, which correspond to the rate constants of the model:
\begin{lstlisting}[style=Matlab-bw]
k=sym('k', [1,18], 'positive');
k(1)=k(1)*sym('I', 'positive');
k(10)=k(10)*sym('I', 'positive');
k(12)=k(12)*sym('I', 'positive');
\end{lstlisting}
We then define the symbolic matrices \ml{Y} and \ml{Ak} corresponding to
$$Y=\left(\begin{array}{ccccccccccccc}
1 & 0 & 0 & 0 & 0 & 1 & 0 & 0 & 1 & 0 & 0 & 0 & 0\\ 0 & 1 & 0 & 0 & 0 & 0 & 0 & 0 & 0 & 0 & 0 & 0 & 0\\ 0 & 0 & 1 & 1 & 0 & 0 & 1 & 0 & 0 & 0 & 0 & 1 & 0\\ 0 & 0 & 0 & 1 & 0 & 0 & 0 & 0 & 0 & 0 & 0 & 0 & 0\\ 0 & 0 & 0 & 0 & 1 & 0 & 0 & 0 & 0 & 0 & 0 & 0 & 0\\ 0 & 0 & 0 & 0 & 0 & 1 & 1 & 0 & 0 & 0 & 0 & 0 & 0\\ 0 & 0 & 0 & 0 & 0 & 0 & 0 & 1 & 0 & 0 & 0 & 0 & 0\\ 0 & 0 & 0 & 0 & 0 & 0 & 0 & 0 & 1 & 0 & 0 & 1 & 0\\ 0 & 0 & 0 & 0 & 0 & 0 & 0 & 0 & 0 & 1 & 0 & 0 & 0\\ 0 & 0 & 0 & 0 & 0 & 0 & 0 & 0 & 0 & 0 & 1 & 0 & 0\\ 0 & 0 & 0 & 0 & 0 & 0 & 0 & 0 & 0 & 0 & 0 & 0 & 1\end{array}\right)$$
and to $A(\kappa)$ as described in Figure~\ref{fig:Ak}, respectively.
\begin{sidewaysfigure}[h!]
$$A(\kappa)=\left(\begin{array}{ccccccccccccc}
-\kappa_{1}\mathrm{[I]} & \kappa_{2} & 0 & 0 & 0 & 0 & 0 & 0 & 0 & 0 & 0 & 0 & 0\\
\kappa_{1}\mathrm{[I]} & -\kappa_{2}-\kappa_{3} & 0 & 0 & 0 & 0 & 0 & 0 & 0 & 0 & 0 & 0 & 0\\
0 & \kappa_{3} & 0 & 0 & 0 & 0 & 0 & 0 & 0 & 0 & 0 & 0 & 0\\ 0 & 0 & 0 & -\kappa_{4} & \kappa_{5} & 0 & 0 & 0 & 0 & 0 & 0 & 0 & \kappa_{18}\\
0 & 0 & 0 & \kappa_{4} & -\kappa_{5}-\kappa_{6}-\kappa_{10}\mathrm{[I]} & 0 & 0 & 0 & 0 & \kappa_{11} & 0 & 0 & 0\\
0 & 0 & 0 & 0 & \kappa_{6} & 0 & 0 & 0 & 0 & 0 & 0 & 0 & 0\\
0 & 0 & 0 & 0 & 0 & 0 & -\kappa_{7} & \kappa_{8} & 0 & 0 & 0 & 0 & 0\\
0 & 0 & 0 & 0 & 0 & 0 & \kappa_{7} & -\kappa_{8}-\kappa_{9}-\kappa_{12}\mathrm{[I]} & 0 & 0 & \kappa_{13} & 0 & 0\\
0 & 0 & 0 & 0 & 0 & 0 & 0 & \kappa_{9} & -\kappa_{16} & 0 & 0 & 0 & \kappa_{17}\\
0 & 0 & 0 & 0 & \kappa_{10}\mathrm{[I]} & 0 & 0 & 0 & 0 & -\kappa_{11}-\kappa_{14} & 0 & 0 & 0\\
0 & 0 & 0 & 0 & 0 & 0 & 0 & \kappa_{12}\mathrm{[I]} & 0 & \kappa_{14} & -\kappa_{13}-\kappa_{15} & 0 & 0\\
0 & 0 & 0 & 0 & 0 & 0 & 0 & 0 & 0 & 0 & \kappa_{15} & 0 & 0\\
0 & 0 & 0 & 0 & 0 & 0 & 0 & 0 & \kappa_{16} & 0 & 0 & 0 & -\kappa_{17}-\kappa_{18}\end{array}\right)$$
\caption{$A(k)$ for the mass-action system in Figure~\ref{fig:double}.}
\label{fig:Ak}
\end{sidewaysfigure}

In order to calculate a basis for $\Psi_{19}(\kappa)$ as defined in \eqref{eq:Psi}, we type
\begin{lstlisting}[style=Matlab-bw]
e=eye(13);
simplify(null([Ak'*Y' e(:,1) e(:,9)]))
\end{lstlisting}
The output of the last command is shown in Figure~\ref{fig:psi}. While the output is rather complicated, we did not need to put much effort in its calculation, which was completed in Matlab in a matter of seconds. For convenience we denote by $\zeta(\kappa)$ the last column of the output matrix. 
\begin{sidewaysfigure}[h!]
{\renewcommand{\arraystretch}{1.5}
$$\begin{pmatrix} -1 & 1 & \frac{1}{\kappa_{16}}\\ -1 & 1 & \frac{p_1(\kappa,\mathrm{[I]})}{\kappa_{16}\kappa_{18}\left(\kappa_{2}+\kappa_{3}\right)\left(-\kappa_{10}\kappa_{12}\kappa_{14}\kappa_{15}\mathrm{[I]}^2-\kappa_{9}\kappa_{10}\kappa_{14}\kappa_{15}\mathrm{[I]}+\kappa_{6}\kappa_{9}\kappa_{11}\kappa_{13}+\kappa_{6}\kappa_{9}\kappa_{11}\kappa_{15}+\kappa_{6}\kappa_{9}\kappa_{13}\kappa_{14}+\kappa_{6}\kappa_{9}\kappa_{14}\kappa_{15}\right)}\\ -1 & 1 & \frac{p_2(\kappa, \mathrm{[I]})}{\kappa_{16}\kappa_{18}\left(-\kappa_{10}\kappa_{12}\kappa_{14}\kappa_{15}\mathrm{[I]}^2-\kappa_{9}\kappa_{10}\kappa_{14}\kappa_{15}\mathrm{[I]}+\kappa_{6}\kappa_{9}\kappa_{11}\kappa_{13}+\kappa_{6}\kappa_{9}\kappa_{11}\kappa_{15}+\kappa_{6}\kappa_{9}\kappa_{13}\kappa_{14}+\kappa_{6}\kappa_{9}\kappa_{14}\kappa_{15}\right)}\\ 1 & 0 & -\frac{\left(\kappa_{17}+\kappa_{18}\right)\left(2\kappa_{6}\kappa_{9}\kappa_{11}\kappa_{13}+2\kappa_{6}\kappa_{9}\kappa_{11}\kappa_{15}+2\kappa_{6}\kappa_{9}\kappa_{13}\kappa_{14}+2\kappa_{6}\kappa_{9}\kappa_{14}\kappa_{15}+\mathrm{[I]}\kappa_{6}\kappa_{11}\kappa_{12}\kappa_{15}+\mathrm{[I]}\kappa_{9}\kappa_{10}\kappa_{13}\kappa_{14}+\mathrm{[I]}\kappa_{6}\kappa_{12}\kappa_{14}\kappa_{15}\right)}{\kappa_{16}\kappa_{18}\left(-\kappa_{10}\kappa_{12}\kappa_{14}\kappa_{15}\mathrm{[I]}^2-\kappa_{9}\kappa_{10}\kappa_{14}\kappa_{15}\mathrm{[I]}+\kappa_{6}\kappa_{9}\kappa_{11}\kappa_{13}+\kappa_{6}\kappa_{9}\kappa_{11}\kappa_{15}+\kappa_{6}\kappa_{9}\kappa_{13}\kappa_{14}+\kappa_{6}\kappa_{9}\kappa_{14}\kappa_{15}\right)}\\ 0 & 1 & -\frac{\kappa_{17}}{\kappa_{16}\kappa_{18}}\\ 1 & 0 & -\frac{\kappa_{9}\left(\kappa_{13}+\kappa_{15}\right)\left(\kappa_{17}+\kappa_{18}\right)\left(\kappa_{6}\kappa_{11}+\kappa_{6}\kappa_{14}+\mathrm{[I]}\kappa_{10}\kappa_{14}\right)}{\kappa_{16}\kappa_{18}\left(-\kappa_{10}\kappa_{12}\kappa_{14}\kappa_{15}\mathrm{[I]}^2-\kappa_{9}\kappa_{10}\kappa_{14}\kappa_{15}\mathrm{[I]}+\kappa_{6}\kappa_{9}\kappa_{11}\kappa_{13}+\kappa_{6}\kappa_{9}\kappa_{11}\kappa_{15}+\kappa_{6}\kappa_{9}\kappa_{13}\kappa_{14}+\kappa_{6}\kappa_{9}\kappa_{14}\kappa_{15}\right)}\\ 0 & 1 & \frac{p_3(\kappa,\mathrm{[I]})}{\kappa_{16}\kappa_{18}\left(-\kappa_{10}\kappa_{12}\kappa_{14}\kappa_{15}\mathrm{[I]}^2-\kappa_{9}\kappa_{10}\kappa_{14}\kappa_{15}\mathrm{[I]}+\kappa_{6}\kappa_{9}\kappa_{11}\kappa_{13}+\kappa_{6}\kappa_{9}\kappa_{11}\kappa_{15}+\kappa_{6}\kappa_{9}\kappa_{13}\kappa_{14}+\kappa_{6}\kappa_{9}\kappa_{14}\kappa_{15}\right)}\\ 1 & 0 & 0\\ 0 & 1 & \frac{\kappa_{6}\kappa_{9}\kappa_{13}\kappa_{14}\kappa_{18}-\kappa_{6}\kappa_{9}\kappa_{11}\kappa_{15}\kappa_{17}-\kappa_{6}\kappa_{9}\kappa_{11}\kappa_{13}\kappa_{17}+\kappa_{6}\kappa_{9}\kappa_{14}\kappa_{15}\kappa_{17}+2\kappa_{6}\kappa_{9}\kappa_{14}\kappa_{15}\kappa_{18}+\mathrm{[I]}^2\kappa_{10}\kappa_{12}\kappa_{14}\kappa_{15}\kappa_{17}+\mathrm{[I]}\kappa_{6}\kappa_{12}\kappa_{14}\kappa_{15}\kappa_{17}+\mathrm{[I]}\kappa_{6}\kappa_{12}\kappa_{14}\kappa_{15}\kappa_{18}+\mathrm{[I]}\kappa_{9}\kappa_{10}\kappa_{14}\kappa_{15}\kappa_{17}}{\kappa_{16}\kappa_{18}\left(-\kappa_{10}\kappa_{12}\kappa_{14}\kappa_{15}\mathrm{[I]}^2-\kappa_{9}\kappa_{10}\kappa_{14}\kappa_{15}\mathrm{[I]}+\kappa_{6}\kappa_{9}\kappa_{11}\kappa_{13}+\kappa_{6}\kappa_{9}\kappa_{11}\kappa_{15}+\kappa_{6}\kappa_{9}\kappa_{13}\kappa_{14}+\kappa_{6}\kappa_{9}\kappa_{14}\kappa_{15}\right)}\\ 0 & 1 & \frac{p_4(\kappa, \mathrm{[I]})}{\kappa_{16}\kappa_{18}\left(-\kappa_{10}\kappa_{12}\kappa_{14}\kappa_{15}\mathrm{[I]}^2-\kappa_{9}\kappa_{10}\kappa_{14}\kappa_{15}\mathrm{[I]}+\kappa_{6}\kappa_{9}\kappa_{11}\kappa_{13}+\kappa_{6}\kappa_{9}\kappa_{11}\kappa_{15}+\kappa_{6}\kappa_{9}\kappa_{13}\kappa_{14}+\kappa_{6}\kappa_{9}\kappa_{14}\kappa_{15}\right)}\\ 0 & 1 & 0\\ 0 & 0 & -\frac{\mathrm{[I]}\kappa_{1}\kappa_{3}\left(\kappa_{17}+\kappa_{18}\right)\left(\kappa_{6}\kappa_{11}+\kappa_{6}\kappa_{14}+\mathrm{[I]}\kappa_{10}\kappa_{14}\right)\left(\kappa_{9}\kappa_{13}+\kappa_{9}\kappa_{15}+\mathrm{[I]}\kappa_{12}\kappa_{15}\right)}{\kappa_{16}\kappa_{18}\left(\kappa_{2}+\kappa_{3}\right)\left(-\kappa_{10}\kappa_{12}\kappa_{14}\kappa_{15}\mathrm{[I]}^2-\kappa_{9}\kappa_{10}\kappa_{14}\kappa_{15}\mathrm{[I]}+\kappa_{6}\kappa_{9}\kappa_{11}\kappa_{13}+\kappa_{6}\kappa_{9}\kappa_{11}\kappa_{15}+\kappa_{6}\kappa_{9}\kappa_{13}\kappa_{14}+\kappa_{6}\kappa_{9}\kappa_{14}\kappa_{15}\right)}\\ 0 & 0 & 1
  \end{pmatrix}
$$
}
where
\begin{align*}
 p_1(\kappa, \mathrm{[I]})&=\kappa_{2}\kappa_{6}\kappa_{9}\kappa_{11}\kappa_{13}\kappa_{18}+\kappa_{3}\kappa_{6}\kappa_{9}\kappa_{11}\kappa_{13}\kappa_{17}+2\kappa_{3}\kappa_{6}\kappa_{9}\kappa_{11}\kappa_{13}\kappa_{18}+\kappa_{2}\kappa_{6}\kappa_{9}\kappa_{11}\kappa_{15}\kappa_{18}+\kappa_{3}\kappa_{6}\kappa_{9}\kappa_{11}\kappa_{15}\kappa_{17}+\kappa_{2}\kappa_{6}\kappa_{9}\kappa_{13}\kappa_{14}\kappa_{18}+2\kappa_{3}\kappa_{6}\kappa_{9}\kappa_{11}\kappa_{15}\kappa_{18}\\
 &\quad+\kappa_{3}\kappa_{6}\kappa_{9}\kappa_{13}\kappa_{14}\kappa_{17}+2\kappa_{3}\kappa_{6}\kappa_{9}\kappa_{13}\kappa_{14}\kappa_{18}+\kappa_{2}\kappa_{6}\kappa_{9}\kappa_{14}\kappa_{15}\kappa_{18}+\kappa_{3}\kappa_{6}\kappa_{9}\kappa_{14}\kappa_{15}\kappa_{17}+2\kappa_{3}\kappa_{6}\kappa_{9}\kappa_{14}\kappa_{15}\kappa_{18}+\mathrm{[I]}\kappa_{3}\kappa_{6}\kappa_{11}\kappa_{12}\kappa_{15}\kappa_{17}\\
 &\quad+\mathrm{[I]}\kappa_{3}\kappa_{6}\kappa_{11}\kappa_{12}\kappa_{15}\kappa_{18}+\mathrm{[I]}\kappa_{3}\kappa_{9}\kappa_{10}\kappa_{13}\kappa_{14}\kappa_{17}+\mathrm{[I]}\kappa_{3}\kappa_{6}\kappa_{12}\kappa_{14}\kappa_{15}\kappa_{17}+\mathrm{[I]}\kappa_{3}\kappa_{9}\kappa_{10}\kappa_{13}\kappa_{14}\kappa_{18}-\mathrm{[I]}\kappa_{2}\kappa_{9}\kappa_{10}\kappa_{14}\kappa_{15}\kappa_{18}+\mathrm{[I]}\kappa_{3}\kappa_{6}\kappa_{12}\kappa_{14}\kappa_{15}\kappa_{18}\\
 &\quad+\mathrm{[I]}\kappa_{3}\kappa_{9}\kappa_{10}\kappa_{14}\kappa_{15}\kappa_{17}-\mathrm{[I]}^2\kappa_{2}\kappa_{10}\kappa_{12}\kappa_{14}\kappa_{15}\kappa_{18}+\mathrm{[I]}^2\kappa_{3}\kappa_{10}\kappa_{12}\kappa_{14}\kappa_{15}\kappa_{17}\\
 p_2(\kappa, \mathrm{[I]})&=\kappa_{6}\kappa_{9}\kappa_{11}\kappa_{13}\kappa_{17}+2\kappa_{6}\kappa_{9}\kappa_{11}\kappa_{13}\kappa_{18}+\kappa_{6}\kappa_{9}\kappa_{11}\kappa_{15}\kappa_{17}+2\kappa_{6}\kappa_{9}\kappa_{11}\kappa_{15}\kappa_{18}+\kappa_{6}\kappa_{9}\kappa_{13}\kappa_{14}\kappa_{17}+2\kappa_{6}\kappa_{9}\kappa_{13}\kappa_{14}\kappa_{18}+\kappa_{6}\kappa_{9}\kappa_{14}\kappa_{15}\kappa_{17}\\
 &\quad+2\kappa_{6}\kappa_{9}\kappa_{14}\kappa_{15}\kappa_{18}+\mathrm{[I]}^2\kappa_{10}\kappa_{12}\kappa_{14}\kappa_{15}\kappa_{17}+\mathrm{[I]}\kappa_{6}\kappa_{11}\kappa_{12}\kappa_{15}\kappa_{17}+\mathrm{[I]}\kappa_{6}\kappa_{11}\kappa_{12}\kappa_{15}\kappa_{18}+\mathrm{[I]}\kappa_{9}\kappa_{10}\kappa_{13}\kappa_{14}\kappa_{17}+\mathrm{[I]}\kappa_{6}\kappa_{12}\kappa_{14}\kappa_{15}\kappa_{17}\\
 &\quad+\mathrm{[I]}\kappa_{9}\kappa_{10}\kappa_{13}\kappa_{14}\kappa_{18}+\mathrm{[I]}\kappa_{6}\kappa_{12}\kappa_{14}\kappa_{15}\kappa_{18}+\mathrm{[I]}\kappa_{9}\kappa_{10}\kappa_{14}\kappa_{15}\kappa_{17}\\
 p_3(\kappa,\mathrm{[I]})&=\kappa_{6}\kappa_{9}\kappa_{11}\kappa_{13}\kappa_{18}+\kappa_{6}\kappa_{9}\kappa_{11}\kappa_{15}\kappa_{18}+\kappa_{6}\kappa_{9}\kappa_{13}\kappa_{14}\kappa_{18}+\kappa_{6}\kappa_{9}\kappa_{14}\kappa_{15}\kappa_{18}+\mathrm{[I]}^2\kappa_{10}\kappa_{12}\kappa_{14}\kappa_{15}\kappa_{17}+\mathrm{[I]}\kappa_{6}\kappa_{11}\kappa_{12}\kappa_{15}\kappa_{17}+\mathrm{[I]}\kappa_{6}\kappa_{11}\kappa_{12}\kappa_{15}\kappa_{18}\\
 &\quad+\mathrm{[I]}\kappa_{6}\kappa_{12}\kappa_{14}\kappa_{15}\kappa_{17}+\mathrm{[I]}\kappa_{6}\kappa_{12}\kappa_{14}\kappa_{15}\kappa_{18}-\mathrm{[I]}\kappa_{9}\kappa_{10}\kappa_{14}\kappa_{15}\kappa_{18}\\
 p_4(\kappa,\mathrm{[I]})&=\kappa_{6}\kappa_{9}\kappa_{11}\kappa_{13}\kappa_{18}+\kappa_{6}\kappa_{9}\kappa_{11}\kappa_{15}\kappa_{17}+2\kappa_{6}\kappa_{9}\kappa_{11}\kappa_{15}\kappa_{18}+\kappa_{6}\kappa_{9}\kappa_{13}\kappa_{14}\kappa_{18}+\kappa_{6}\kappa_{9}\kappa_{14}\kappa_{15}\kappa_{17}+2\kappa_{6}\kappa_{9}\kappa_{14}\kappa_{15}\kappa_{18}+\mathrm{[I]}^2\kappa_{10}\kappa_{12}\kappa_{14}\kappa_{15}\kappa_{17}\\
 &\quad+\mathrm{[I]}\kappa_{6}\kappa_{11}\kappa_{12}\kappa_{15}\kappa_{17}+\mathrm{[I]}\kappa_{6}\kappa_{11}\kappa_{12}\kappa_{15}\kappa_{18}+\mathrm{[I]}\kappa_{6}\kappa_{12}\kappa_{14}\kappa_{15}\kappa_{17}+\mathrm{[I]}\kappa_{6}\kappa_{12}\kappa_{14}\kappa_{15}\kappa_{18}+\mathrm{[I]}\kappa_{9}\kappa_{10}\kappa_{14}\kappa_{15}\kappa_{17}
\end{align*}
\caption{$\ker \Psi_{19}(\kappa)$ for the mass-action system in Figure~\ref{fig:double}.}
\label{fig:psi}
\end{sidewaysfigure}

It follows from Proposition~\ref{prop:calculating_Gamma} that $\Gamma_{19}(\kappa)$ is non-empty, and is given by
\begin{equation}\label{eq:gamma_double}
\Gamma_{19}(\kappa)=\left\{-\pi_{12}(\zeta(\kappa))
  +a_1
  \begin{pmatrix}
   -1 \\ -1 \\ -1\\ 1\\ 0 \\ 1\\ 0\\ 1\\ 0\\ 0\\ 0\\ 0
  \end{pmatrix}
  +a_2
  \begin{pmatrix}
   1 \\ 1 \\ 1\\ 0\\ 1\\ 0\\ 1\\ 0\\ 1\\ 1\\ 1\\ 0
  \end{pmatrix}
  \,:\, a_1,a_2\in\RR
\right\}.
\end{equation}
It follows from \eqref{eq:gamma_double} and \eqref{eq:cons_double} that
$$
\hat\Gamma_{19}(\kappa)=\left\{-\pi_{11}(\zeta(\kappa))
  +a_1
  \begin{pmatrix}
   -1 \\ -1 \\ -1\\ 1\\ 0 \\ 1\\ 0\\ 1\\ 0\\ 0\\ 0
  \end{pmatrix}
  +a_2
  \begin{pmatrix}
   1 \\ 1 \\ 1\\ 0\\ 1\\ 0\\ 1\\ 0\\ 1\\ 1\\ 1
  \end{pmatrix}
  \,:\, a_1,a_2\in\RR
\right\}=-\pi_{11}(\zeta(\kappa))+\Cons,
$$
which is in accordance with Proposition~\ref{prop:structure_Gamma} because every connected component of the reaction graph in Figure~\ref{fig:double_graph} contains exactly one terminal component. Moreover, from \eqref{eq:gamma_double} and Theorem~\ref{thm:lcif} it follows that the ACR value of the ACR species $B$-PP is
$$-\zeta_{12}(\kappa)=\frac{\kappa_1\kappa_3[\I](\kappa_{17}+\kappa_{18})(\kappa_6\kappa_{11}+\kappa_6\kappa_{14}+\kappa_{10}\kappa_{14}[\I])(\kappa_9\kappa_{13}+\kappa_9\kappa_{15}+\kappa_{12}\kappa_{15}[\I])}{\kappa_{16}\kappa_{18}(\kappa_2+\kappa_3)g(\kappa, [\I])},$$
 where
 $$g(\kappa, [\I])=-\kappa_{10}\kappa_{12}\kappa_{14}\kappa_{15}[\I]^2-\kappa_{9}\kappa_{10}\kappa_{14}\kappa_{15}[\I]+\kappa_{6}\kappa_{9}\kappa_{11}\kappa_{13}+\kappa_{6}\kappa_{9}\kappa_{11}\kappa_{15}+\kappa_{6}\kappa_{9}\kappa_{13}\kappa_{14}+\kappa_{6}\kappa_{9}\kappa_{14}\kappa_{15}.$$
This means that if a positive steady state $c$ exists, necessarily it entry relative to $B$-PP (which is the eighth species) satisfies $c_8=-\zeta_{12}(\kappa)$. Of course, this cannot occur if $-\zeta_{12}(\kappa)$ is non-positive, i.e.\ if $g(\kappa, [\I])$ is non-positive, in which case no positive steady state exists. Note that we did not need to work directly with the differential equation to derive this information. We will further show that in fact a positive steady state exists if and only if $g(\kappa, [\I])$ is positive. To this aim, note that $c$ is a steady state if and only if $\Lambda(c)\in\ker YA(\kappa)$. Then, we calculate a basis for $\ker YA(\kappa)$ by typing
\begin{lstlisting}[style=Matlab-bw]
simplify(null(Y*Ak))
\end{lstlisting}
which returns a matrix of the form
$$
\begin{pmatrix}
 0 & 0 & 0 & b_1(\kappa, [\I])g(\kappa, [\I])\\
 0 & 0 & 0 & b_2(\kappa, [\I])g(\kappa, [\I])\\
 1 & 0 & 0 & 0\\
 0 & 0 & 0 & b_4(\kappa, [\I])\\
 0 & 0 & 0 & b_5(\kappa, [\I])\\
 0 & 1 & 0 & 0\\
 0 & 0 & 0 & b_7(\kappa, [\I])\\
 0 & 0 & 0 & b_8(\kappa, [\I])\\
 0 & 0 & 0 & b_9(\kappa, [\I])\\
 0 & 0 & 0 & b_{10}(\kappa, [\I])\\
 0 & 0 & 0 & b_{11}(\kappa, [\I])\\
 0 & 0 & 1 & 0\\
 0 & 0 & 0 & 1
\end{pmatrix}
$$
for some functions $b_i(\kappa, [\I])$ which map positive arguments to positive real numbers. Hence, a positive steady state exists if and only if there exists a positive vector $c$ such that
\begin{align*}
 c_1&=a_4 b_1(\kappa, [\I])g(\kappa, [\I]) & c_1c_6&=a_2 & c_9&=a_4 b_{10}(\kappa, [\I])\\
 c_2&=a_4 b_2(\kappa, [\I])g(\kappa, [\I]) & c_3c_6&=a_4 b_7(\kappa, [\I]) & c_{10}&=a_4 b_{11}(\kappa, [\I])\\
 c_3&=a_1 & c_7&=a_4 b_8(\kappa, [\I]) & c_3c_8&=a_3\\
 c_3c_4&=a_4 b_4(\kappa, [\I]) & c_1c_8&=a_4 b_9(\kappa, [\I]) &  c_{11}&=a_4\\
 c_5&=a_4 b_5(\kappa, [\I])
\end{align*}
for some $a_1, a_2, a_3, a_4\in\RR_{>0}$. If $g(\kappa, [\I])$ is positive, it is easy to see that such a positive vector $c$ exists. Specifically, the positive steady states are parameterized by
\begin{align*}
 c_1&=a_4 b_1(\kappa, [\I])g(\kappa, [\I]) & c_5&=a_4 b_5(\kappa, [\I]) & c_9&=a_4 b_{10}(\kappa, [\I])\\
 c_2&=a_4 b_2(\kappa, [\I])g(\kappa, [\I]) & c_6&=\frac{a_4}{a_1}b_7(\kappa, [\I]) & c_{10}&=a_4 b_{11}(\kappa, [\I])\\
 c_3&=a_1 & c_7&=a_4 b_8(\kappa, [\I]) & c_{11}&=a_4\\
 c_4&=\frac{a_4}{a_1} b_4(\kappa, [\I]) & c_8&=\frac{b_9(\kappa, [\I])}{b_1(\kappa, [\I])g(\kappa, [\I])}
\end{align*}
where $a_1$ and $a_4$ vary in $\RR_{>0}$. Note that the entry $c_8$ is relative to the ACR species $B$-PP and is fixed. 

\end{document}